\newcommand{\eqdef}{\mbox{$\buildrel\triangle\over =$}}
\newcommand{\RR}{{\ensuremath{\mathbb R}}}
\newcommand{\NN}{{\ensuremath{\mathbb N}}}
\newcommand{\E}{{\mathsf E}}
\newtheorem{prop}{Proposition}
\newtheorem{rem}{Remark}
\newtheorem{assum}{Assumption}
\begin{document}
\title{A SURE Approach for Digital Signal/Image Deconvolution Problems}

\author{Jean-Christophe Pesquet, Amel Benazza-Benyahia, and, Caroline Chaux
\thanks{Part of this work was presented 
at the
3rd IEEE International Symposium on Communications, Control and Signal 
Processing \cite{ISCCSP_08}.}
\thanks{This work was supported by the Agence Nationale de la Recherche under grant
ANR-05-MMSA-0014-01.}
\thanks{J.-C. Pesquet and C. Chaux are with the Universit\'e Paris-Est,
Laboratoire d'Informatique Gaspard Monge - CNRS,
77454 Marne la Vall\'ee 
Cedex 2, France. E-mail: 
\texttt{jean-christophe.pesquet@univ-paris-est.fr},
\texttt{caroline.chaux@univ-paris-est.fr}.}
\thanks{A. Benazza-Benyahia is with 
URISA, SUP'COM, 
Cit\'e Technologique des Communications, 2083 Ariana, Tunisia,
E-mail: \texttt{benazza.amel@supcom.rnu.tn}.}}

\maketitle
\begin{abstract}
In this paper, we are interested in the classical problem of restoring data degraded by a convolution
and the addition of a white Gaussian noise.
The originality of the proposed approach is two-fold.  
Firstly, we formulate the restoration problem as a nonlinear estimation problem leading to the minimization of a criterion derived from Stein's unbiased quadratic risk estimate. 
Secondly, the deconvolution procedure is performed using any analysis and synthesis frames that can be overcomplete or not. 
New theoretical results concerning the calculation of the variance
of the Stein's risk estimate are also provided in this work.
Simulations carried out on natural images show the good performance of our method w.r.t. conventional wavelet-based restoration methods.
\end{abstract}


\section{Introduction}
\label{sec:intro}
It is well-known that, in many practical situations, one may consider that there are two main sources of signal/image degradation: a convolution often related to the bandlimited nature of the acquisition system  and a contamination by an additive Gaussian noise which may be due to the electronics of the recording and transmission processes. 
For instance, the limited aperture of satellite cameras, the aberrations inherent to optical systems and mechanical vibrations create a blur effect in  remote sensing images \cite{JAIN_89}.
A data restoration task is usually required to reduce these artifacts before any further processing. Many works have been dedicated to the deconvolution of noisy signals \cite{DEMOMENT_89,KATSAGGELOS_91,COMBETTES_93,BERTERO_98}. Designing suitable deconvolution methods is a challenging  task, as inverse problems of practical interest
are often ill-posed. Indeed, 
the convolution operator is usually non-invertible or it is ill-conditioned
and its inverse is thus very sensitive to noise. To cope with the
ill-posed nature of these problems, deconvolution  methods often operate in a transform domain, the transform being expected to make the problem easier to solve. 
 In pioneering works, deconvolution is dealt with in the frequency domain, 
  as the Fourier transform provides a simple representation of filtering operations \cite{HILLERY_91}. However, the Fourier domain has a main shortcoming: sharp transitions in the signal (edges for images) and other localized features do not have a sparse frequency representation. This has motivated the use of 
the Wavelet Transform (WT) \cite{DONOHO_VAGUELETTE_95,ABRAMOVICH_SILVERMAN_98} and its various extensions. Thanks to the good energy compaction and decorrelation properties of the WT, simple shrinkage operations in the wavelet domain can be successfully applied  to discard noisy coefficients \cite{DONOHO_JOHNSTONE_95}.  
To take advantage of both transform domains, it has been suggested to combine frequency based deconvolution approaches with wavelet-based denoising methods, giving birth to a new class of restoration methods. 
The wavelet-vaguelette
method proposed in \cite{DONOHO_VAGUELETTE_95} is based on an inverse filtering technique.
To avoid the amplification of the resulting colored noise component, a 
shrinkage of the filtered wavelet coefficients is performed.
The wavelet-vaguelette method has been refined in \cite{KALIFA_MALLAT_03}
by adapting the wavelet basis to the frequency response of  the degradation filter.
However, the method is not appropriate for recovering signals degraded by arbitrary convolutive operators. An alternative to the wavelet-vaguelette decomposition is the transform presented by Abramovich and Silverman \cite{ABRAMOVICH_SILVERMAN_98}. 
Similar in the spirit to the wavelet-vaguelette deconvolution, a more competitive hybrid approach called Fourier-Wavelet Regularized Deconvolution (ForWaRD) was developed by Neelamani \textit{et al.}: a two-stage shrinkage procedure successively operates in the Fourier and the WT domains, which is applicable to any invertible or non-invertible degradation kernel \cite{NEE_CHOI_BARANIUK_04}. The optimal balance between the amount of
Fourier and wavelet regularization is derived by optimizing an approximate version of the
mean-squared error metric.  A two-step procedure was also presented by Banham and Katsaggelos which employs
a multiscale Kalman filter \cite{BANHAM_KATSA_96}. 
By following a frequency domain approach,
band-limited Meyer's wavelets have been used to estimate degraded signals
through an elegant wavelet restoration
method called WaveD \cite{PICARD_04,QU_ROUTH_KO_06} which is based on minimax arguments.  
In \cite{EUSIPCO_06}, we have proposed an extension of the WaveD method to the multichannel case.

 Iterative wavelet-based thresholding methods relying
on variational approaches for image restoration have also been investigated by several authors. For instance, a deconvolution method was derived under the expectation-maximization framework  in \cite{FIGUEIREDO_NOWAK_03}. In \cite{STARCK_NGUYEN_03}, the complementarity of the wavelet and the curvelet  transforms has been exploited in a regularized scheme involving the total variation. In
\cite{BECT_BLANCFERAUD_04}, an objective function including the total variation, a wavelet coefficient regularization or a mixed
 regularization has been considered and a related 
 projection-based algorithm was derived to compute the solution.  
More recently, the work in \cite{DEMOL_05}
has been extended by
proposing a flexible convex variational framework for solving
inverse problems in which a priori information (e.g., sparsity or probability distribution) is
available about the representation of the target solution in a frame \cite{CC_PES_COMB_07}. In the same way, a new class of iterative shrinkage/thresholding algorithms was proposed in \cite{BIOUCAS_FIGUEIREDO_07}. Its novelty relies on the fact that  the update equation depends on the two previous iterated values. 
In \cite{VONESCH_UNSER_08}, a fast variational deconvolution algorithm was introduced. It consists of minimizing a quadratic data term subject to a regularization on the $\ell^1$-norm of the coefficients of the solution in a Shannon wavelet basis. 
Recently, in \cite{GUERRERO_PORTILLA_08}, a two-step decoupling scheme 
was presented for image deblurring. It starts from a global linear blur compensation by a generalized Wiener filter. Then, a nonlinear denoising is carried out by computing the Bayes least squares Gaussian scale mixtures estimate. 
Note also that an advanced restoration method was developed 
in \cite{DABOV_FOI_08}, which does not operate in the wavelet domain. 

In the same time, much attention was paid to Stein's principle \cite{STEIN_81}
in order to derive estimates of the Mean-Square Error (MSE) in statistical problems
involving an additive Gaussian noise.
The key advantage of Stein's Unbiased Risk Estimate (SURE) is that it does not require a priori knowledge about the statistics of the unknown data, while yielding  an expression of the MSE only depending on the statistics of the observed data.  Hence, it avoids the difficult problem of the estimation of the hyperparameters of some prior distribution, which classically needs to be addressed in Bayesian approaches.\footnote{This does not mean that SURE approaches are superior to Bayesian approaches, which are quite versatile.} Consequently, a SURE approach can be applied by directly parameterizing the estimator and finding the optimal parameters that minimize the MSE estimate. The first efforts in this direction were performed in the context of denoising applications with the SUREShrink technique \cite{DONOHO_JOHNSTONE_95,KRIM_99} and, the SUREVect estimate \cite{TIP_05} in the case of multichannel images. More recently, in addition to the estimation of the MSE, Luisier \textit{et al.} have proposed a very appealing structure of the denoising function consisting of a linear combination of nonlinear elementary functions (the SURE-Linear Expansion of Threshold or SURE-LET) \cite{BLU_LUISIER_07}. Notice that this idea
was also present in some earlier works \cite{PES_LEPO_97}.
In this way, the optimization of the MSE estimate reduces to solving a set of linear equations.   Several variations of the basic SURE-LET method were investigated: an improvement of the denoising performance has been achieved by accounting for the interscale information \cite{BLU_LUISIER_07_interscale} and  the case of color images has also been addressed \cite{BLU_LUISIER_08}. 
Another advantage of this method is that it remains valid when redundant multiscale representations of the observations are considered, as the minimization of the SURE-LET estimator can be easily carried out in the time/space domain.
A similar approach has also been adopted by Raphan and Simoncelli \cite{RAPHAN_SIMONC_08} for denoising in redundant multiresolution representations. 
Overcomplete representations have also been successfully used for multivariate shrinkage estimators optimized with a SURE approach operating in the transform domain \cite{CHAUX_DUVAL_BENAZZA_PESQUET_08}. An alternative use of Stein's principle was made in \cite{COMBETTES_PESQUET_04} for building convex constraints in image denoising problems.

In \cite{ELDAR_08},
Eldar generalized Stein's principle to derive  an MSE estimate  when the noise has an exponential distribution  (see also \cite{AVERKAMP_HOUDRE_06}). In addition, she investigated the problem of the nonlinear estimation of deterministic parameters
from a linear observation model in the presence of additive noise. 
In the context of  deconvolution, the derived SURE was employed to evaluate the MSE performance of solutions to regularized objective functions.
Another work in this research direction is \cite{RAMANI_BLU_UNSER_08},
where the risk estimate is minimized by a Monte Carlo technique for
denoising applications. A very recent work \cite{VONESH_RAMANI_UNSER_08} also proposes a recursive estimation
of the risk when a thresholded Landweber algorithm is employed to restore data.

In this paper, we adopt a viewpoint similar to that in \cite{ELDAR_08,VONESH_RAMANI_UNSER_08}
in the sense that, by using Stein's principle, we obtain an estimate of the 
MSE for a given class of estimators operating 
in deconvolution problems. The main contribution of our work
is the derivation of the variance of the proposed quadratic risk estimate.
These results allow us to propose a novel SURE-LET approach for data restoration which can exploit any discrete frame representation.  

 The paper is organized as follows. In Section \ref{sec:statement_of_problem}, the required background is presented and
some notations are introduced. The generic form of the estimator we consider
for restoration purposes is presented in Section~\ref{se:nonlinest}.
In Section \ref{s:steinid}, we provide extensions of Stein's identity which
will be useful throughout the paper.
In Section~\ref{sec:inv} we show how Stein's principle can be employed in a restoration framework when the degradation system is invertible.  The case of a non-invertible system is addressed in Section \ref{se:noninv}.
The expression of the variance of the empirical estimate of the quadratic risk is then derived in Section \ref{sec:varemp}.
In Section \ref{se:flexiblity}, two scenarii are discussed where 
the determination of the parameters minimizing the risk estimate
takes a simplified form. The structure of the proposed SURE-LET deconvolution method
is subsequently described in Section \ref{se:param} and
examples of its application to wavelet-based image restoration
are shown in Section \ref{sec:simuls}. Some concluding remarks are given
in Section~\ref{se:conclu}.

The notations used in the paper are summarized in Table \ref{tab:notations}.

\begin{table*}[h!tb]
\caption{Notations. \label{tab:notations}}
\begin{small}
\begin{center}
\begin{tabular}{|l|l|}
\hline
Variable&Definition \\
\hline
$\mathbb{D}$&set of spatial (or frequency) indices\\
\hline
$s$&original signal\\
\hline
$h$&impulse response of the degradation filter\\
\hline
$n$&additive white Gaussian noise\\
\hline
$r$&observed signal\\
\hline
$S$&Fourier transform of $s$\\
\hline
$H$&Fourier transform of $h$\\
\hline
$R$&Fourier transform of $r$\\
\hline
$\mathcal{E}(\cdot)$ & mean square value of the signal in argument\\
\hline
$\E[\cdot]$ & mathematical expectation\\
\hline
$(\varphi_\ell)_{1 \le \ell \le L}$ & family of  analysis vectors\\
\hline
$(\widetilde{\varphi}_\ell)_{1\le \ell\le L}$& family of  synthesis vectors\\
\hline
$(s_\ell)_{1\le \ell\le L}$&coefficients of the decomposition of $s$ onto $(\varphi_\ell)_{1 \le \ell \le L}$\\
\hline
${\Phi}_{\ell}$&Fourier transform of $\varphi_\ell$\\
\hline
$\widetilde{\Phi}_{\ell}$&Fourier transform of $\widetilde{\varphi}_\ell$\\
\hline
$\Theta_{\ell}$& estimating function applied to $s_\ell$\\
\hline
$\widehat{s}$&estimate of $s$\\
\hline
$\mathbb{P}$&set of frequency indices for which $H$ is considered equal to 0\\
\hline
$\mathbb{Q}$&set of frequency indices for which $H$ takes significant values\\
\hline
$\chi$ & threshold value in the frequency domain\\
\hline
$\underline{s}$&projection of $s$ onto the subspace whose\\
& Fourier coefficients vanish on $\mathbb{P}$\\
\hline 
$\underline{\widetilde{r}}$&inverse Fourier transform of the projection of $\frac{R}{H}$ \\
&onto the subspace whose Fourier coefficients vanish on $\mathbb{P}$\\
\hline
$\mathbb{K}_m$ & index subset for the $m$-th subband\\
\hline
$\lambda$ & constant used in the Wiener-like filter\\
\hline
\end{tabular}\end{center}
\end{small}
\end{table*}

\section{Problem statement}

\subsection{Background}\label{sec:statement_of_problem}
We consider an unknown real-valued field whose value at location ${\mathbf{x}} \in\mathbb{D}$ is $s(\mathbf{x})$ where 
$\mathbb{D} = \{0,\ldots,D_1-1\}\times \cdots \times \{0,\ldots,D_d-1\}$
with $(D_1,\ldots,D_d) \in (\mathbb{N}^*)^d$ where $\NN^*$ denotes the set of positive integers.
Here, $s$ is a $d$-dimensional digital random field of finite size $D = D_1\ldots D_d$ with finite variance.
Of pratical interest are the cases when $d = 1$ (temporal signals), $d=2$ (images),
$d= 3$ (volumetric data or video sequences) and $d=4$ (3D$+t$ data).

The field is degraded by the acquisition
 system with (deterministic) impulse response $h$, and
it is also corrupted by an additive noise
$n$, which is assumed to be independent of the
random process $s$. The noise $n$
corresponds to a random field which is assumed to be Gaussian
with zero-mean and covariance field:
$
\forall (\mathbf{x},\mathbf{y}) \in \mathbb{D}^2$,
$\E[n(\mathbf{x}) n(\mathbf{y})] = \gamma
\delta_{\mathbf{x}-\mathbf{y}}
$,
where $(\delta_{\mathbf{x}})_{\mathbf{x}\in \mathbb{Z}^d}$ is the Kronecker sequence and $\gamma > 0$.
In other words, the noise is white.

Thus, the observation model can be expressed as follows:
 \begin{equation}
\forall \mathbf{x} \in \mathbb{D},\quad
r({\mathbf{x}}) =
(\widetilde{h}*s)({\mathbf{x}})+n({\mathbf{x}})
= \sum_{\mathbf{y} \in \mathbb{D}} \widetilde{h}(\mathbf{x}-\mathbf{y})
s(\mathbf{y})
+ n({\mathbf{x}})
\label{eq:blur_noise}
 \end{equation}
where $(\widetilde{h}(\mathbf{x}))_{\mathbf{x} \in \mathbb{Z}^d}$ is the periodic
extension of  $(h(\mathbf{x}))_{\mathbf{x} \in \mathbb{D}}$.
It must be pointed out that \eqref{eq:blur_noise}
corresponds to a periodic  approximation of the discrete convolution 
(this problem can be alleviated by making use of zero-padding techniques \cite{BRIGHAM_74,GONZALEZ_WOODS_92}).
 
A restoration method aims at estimating $s$
 based on the observed data $r$.
In this paper, a supervised approach is adopted by assuming that both the degradation kernel $h$
and the noise variance $\gamma$ are known.  

\subsection{Considered nonlinear estimator}
\label{se:nonlinest}
The proposed estimation procedure consists of first transforming the observed data to some other domain (through some \emph{analysis} vectors), performing a non-linear operation on the so-obtained coefficients (based on an \emph{estimating function}) with parameters that must be estimated, and finally reconstructing the estimated signal (through some \emph{synthesis} vectors).

More precisely, the discrete Fourier coefficients $\big(R({\mathbf{p}})\big)_{\mathbf{p}\in \mathbb{D}}$ of $r$ are given by:
\begin{equation}
\forall {\mathbf{p}}\in \mathbb{D},\qquad
R({\mathbf{p}})\,\eqdef\,\sum_{\mathbf{x}\in \mathbb{D}}
r({\mathbf{x}})\exp(-2\pi \imath
{\mathbf{x}}^\top{\boldsymbol D}^{-1}{\mathbf{p}}) \label{eq:FT}
\end{equation}
where ${\boldsymbol D} = \mathrm{Diag}(D_1,\ldots,D_d)$.
In the frequency domain, (\ref{eq:blur_noise}) becomes:
\begin{equation}
R({\mathbf{p}})=
U({\mathbf{p}})+N({\mathbf{p}}),
\quad\text{where}\quad
U({\mathbf{p}})\,\eqdef\,H({\mathbf{p}})S({\mathbf{p}})
\label{eq:convperfreq}
\end{equation}
and,  the coefficients $S({\mathbf{p}})$ and
$N({\mathbf{p}})$ are obtained by expressions similar to (\ref{eq:FT}).

Let $(\varphi_\ell)_{1 \le \ell \le L}$ be a family of $L \in \NN^*$ analysis vectors
of $\mathbb{R}^{D_1\times \cdots \times D_d}$. Thus, every signal $r$ of $\mathbb{R}^{D_1\times \cdots \times D_d}$ can be decomposed as:
\begin{equation}
\forall \ell \in \{1,\ldots,L\},\quad
r_{\ell}
= \langle r, \varphi_{\ell } \rangle =
\sum_{\mathbf{x}\in \mathbb{D}}
r({\mathbf{x}})\varphi_{\ell }({\mathbf{x}}),
\end{equation}
the operator $\langle \cdot, \cdot \rangle$ designating the Euclidean inner product of
 $\mathbb{R}^{D_1\times \cdots \times D_d}$. 
According to 
Plancherel's formula, the coefficients of the decomposition
of $r$ onto this family are given by
\begin{equation}
\forall \ell \in \{1,\ldots,L\},\quad
r_{\ell}
=\frac{1}{D}\sum_{{\mathbf{p}\in \mathbb{D}}}R({\mathbf{p}}
)\big({\Phi}_{\ell}({\mathbf{p}})\big)^*,
\label{eq:Plancherel}
\end{equation}
where ${\Phi}_{\ell}({\mathbf{p}})$ is a discrete
Fourier coefficient of $\varphi_{\ell }$ and $(\cdot)^*$ denotes the complex
conjugation.
Let us now define, for every $\ell \in \{1,\ldots,L\}$, an estimating function $\Theta_{\ell}\colon \RR \to \RR$ (the choice of this function will be discussed in Section \ref{se:choiceest}),
so that
\begin{equation}
\widehat{s}_{\ell} = \Theta_{\ell}\big(r_{\ell}\big).
\label{eq:NLestpart}
\end{equation}
We will use as an estimate of $s(\mathbf{x})$,
\begin{equation}
\widehat{s}(\mathbf{x}) = \sum_{\ell=1}^L \widehat{s}_{\ell}\,\widetilde{\varphi}_{\ell}(\mathbf{x})
\label{eq:estsynth}
\end{equation}
where $(\widetilde{\varphi}_\ell)_{1\le \ell\le L}$ is a family of synthesis vectors
of $\mathbb{R}^{D_1\times \cdots \times D_d}$.
Equivalently, the estimate of $S$ is given by:
\begin{equation}
\widehat{S}(\mathbf{p}) = \sum_{\ell=1}^L \widehat{s}_{\ell}\,\widetilde{\Phi}_{\ell}(\mathbf{p})
\label{eq:estsynthf}
\end{equation}
where ${\widetilde{\Phi}}_{\ell}({\mathbf{p}})$ is a discrete
Fourier coefficient of $\widetilde{\varphi}_{\ell }$.
It must be pointed out that our formulation is quite general. Different analysis/synthesis families can be used. 
 These families may be overcomplete (which implies that $L > D$)
or not.

\section{Stein-like identities}\label{s:steinid}
Stein's principle will play a central role in the evaluation of the 
mean square estimation error of the proposed estimator.
We first recall the standard form of Stein's principle:
\begin{prop}{\rm \cite{STEIN_81}}\label{p:stein0}
Let $\Theta\colon \RR \to \RR$ be a continuous, almost everywhere differentiable function. Let $\eta$ be a real-valued zero-mean Gaussian random variable
with variance $\sigma^2$
and $\upsilon$ be a real-valued random variable which is independent of
$\eta$. Let $\rho = \upsilon+\eta$ and
assume that
\begin{itemize}
\item $\forall \tau \in \RR$, 
$\lim_{|\zeta| \to \infty} \Theta(\tau+\zeta) 
\exp\big(-\frac{\zeta^2}{2\sigma^2}\big) = 0$,
\item $\E[(\Theta(\rho))^2] < \infty$ and $\E[|\Theta'(\rho)|] < \infty$
where $\Theta'$ is the derivative of $\Theta$. 
\end{itemize}
Then,
\begin{equation}
\E[ \Theta(\rho)\eta]
= \sigma^2\E[ \Theta'(\rho)].
\label{eq:steinf0}
\end{equation}
\end{prop}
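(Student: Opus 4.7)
My plan is to reduce the statement to the one-dimensional Gaussian integration-by-parts identity, and then lift the resulting relation back via conditioning.

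First, since $\eta$ is independent of $\upsilon$, I would condition on $\upsilon$ and write
\begin{equation}
\E[\Theta(\rho)\eta]=\E\big[\E[\Theta(\upsilon+\eta)\,\eta\mid \upsilon]\big],
\end{equation}
so that it suffices to prove, for almost every realization $\tau$ of $\upsilon$, the pointwise identity
$\E[\Theta(\tau+\eta)\,\eta]=\sigma^2\,\E[\Theta'(\tau+\eta)]$. Writing the left-hand side explicitly against the Gaussian density,
\begin{equation}
\E[\Theta(\tau+\eta)\,\eta]=\int_{-\infty}^{+\infty}\Theta(\tau+\zeta)\,\zeta\,\frac{1}{\sqrt{2\pi}\,\sigma}\exp\!\Big(-\frac{\zeta^2}{2\sigma^2}\Big)\,d\zeta,
\end{equation}
the crucial algebraic trick is the identity $\zeta\exp(-\zeta^2/(2\sigma^2))=-\sigma^2\,\frac{d}{d\zeta}\exp(-\zeta^2/(2\sigma^2))$. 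Substituting this and performing an integration by parts in $\zeta$ transfers the derivative onto $\Theta$, yielding $\sigma^2\,\E[\Theta'(\tau+\eta)]$ plus a boundary contribution $-\sigma^2\big[\Theta(\tau+\zeta)\,(2\pi)^{-1/2}\sigma^{-1}\exp(-\zeta^2/(2\sigma^2))\big]_{-\infty}^{+\infty}$ which vanishes by the first hypothesis of the proposition.

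Next I would integrate in $\upsilon$: the second hypothesis, $\E[|\Theta'(\rho)|]<\infty$, together with $\E[(\Theta(\rho))^2]<\infty$ and the Cauchy-Schwarz bound $\E[|\Theta(\rho)\eta|]\le(\E[\Theta(\rho)^2])^{1/2}\sigma$, guarantees that both sides are absolutely integrable, so Fubini's theorem applies and the conditional identity lifts to \eqref{eq:steinf0}.

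The one subtle point, which I expect to be the main obstacle, is justifying the integration by parts under the mere hypothesis that $\Theta$ is continuous and \emph{almost everywhere} differentiable, rather than absolutely continuous. To handle this cleanly I would either (a) invoke a mollification: regularize $\Theta$ by convolving with a smooth bump, apply the classical integration by parts, and pass to the limit using the integrability assumptions together with dominated convergence; or (b) observe that the Gaussian density is smooth and use the weak-derivative formulation of the fundamental theorem of calculus valid for continuous functions whose derivative exists a.e.\ and is locally integrable, combined with the decay condition to kill the boundary terms. Either route gives the identity and completes the proof.
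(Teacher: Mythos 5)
The paper states Proposition~\ref{p:stein0} without proof, citing Stein's original article, so there is no in-paper argument to compare against; your proof is the standard one and is correct in substance. Conditioning on $\upsilon$, writing the inner expectation as a Gaussian integral, using $\zeta\exp(-\zeta^2/(2\sigma^2))=-\sigma^2\,\frac{d}{d\zeta}\exp(-\zeta^2/(2\sigma^2))$, integrating by parts, and killing the boundary term with the decay hypothesis is exactly the mechanism the authors themselves deploy in Appendix~\ref{a:steinf} to establish the generalizations (e.g.\ the derivation of \eqref{eq:steineta2} via the conditional expectation given $\upsilon_1$), so your route is fully consistent with the paper's methodology, and your Cauchy--Schwarz/Fubini justification for lifting the conditional identity is sound. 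One caveat on the ``subtle point'' you raise: you are right that continuity plus almost-everywhere differentiability is, read literally, too weak --- the Cantor function gives a counterexample to the identity itself, not merely to the proof --- but your proposed repair (b) does not work, since the fundamental theorem of calculus can fail for a continuous function whose derivative exists a.e.\ and is locally integrable (again the Cantor function). The hypothesis must be understood, as in Stein's paper, in the sense of ``almost differentiable,'' i.e.\ $\Theta$ is the indefinite integral of $\Theta'$ (absolute continuity); under that reading your integration by parts, or the mollification variant (a), goes through and the proof is complete.
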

We now derive extended forms of the above formula (see Appendix \ref{a:steinf}) which will be useful
in the remainder of this paper:
\begin{prop} \label{p:steinf}
Let $\Theta_i\colon \RR \to \RR$ with $i\in \{1,2\}$ be continuous, almost everywhere differentiable functions.
Let $(\eta_1,\eta_2,\widetilde{\eta}_1,\widetilde{\eta}_2)$ be a real-valued 
zero-mean Gaussian vector
and $(\upsilon_1,\upsilon_2)$ be a real-valued random vector which is independent of
$(\eta_1,\eta_2,\widetilde{\eta}_1,\widetilde{\eta}_2)$. Let $\rho_i = \upsilon_i+\eta_i$ where $i\in \{1,2\}$ and
assume that
\begin{enumerate}
\item\label{as:psteinf1} $\forall \alpha \in \RR^*$, $\forall \tau \in \RR$, 
$\lim_{|\zeta| \to \infty} \Theta_i(\tau+\zeta)\zeta^2
\exp\big(-\frac{\zeta^2}{2\alpha^2}\big) = 0$,
\item\label{as:psteinf2} $\E[|\Theta_i(\rho_i)|^3] < \infty$,
\item\label{as:psteinf3} $\E[|\Theta_i'(\rho_i)|^3] < \infty$
where $\Theta'_i$ is the derivative of $\Theta_i$. 
\end{enumerate}
Then,
\begin{align}
\E[ \Theta_1(\rho_1)\widetilde{\eta}_1]
= &\,\E[ \Theta_1'(\rho_1)] \E[\eta_1\widetilde{\eta}_1]\label{eq:steinf1}\\
\E[ \Theta_1(\rho_1)\widetilde{\eta}_1\widetilde{\eta}_2]
=&\,\E[\Theta_1'(\rho_1)\widetilde{\eta}_2]\E[\eta_1\widetilde{\eta}_1]\nonumber\\
&+\E[\Theta_1(\rho_1)]\E[\widetilde{\eta}_1\widetilde{\eta}_2]\label{eq:steinf3}\\
\E[ \Theta_1(\rho_1)\widetilde{\eta}_1\widetilde{\eta}_2^2]=
&\,\E[ \Theta_1'(\rho_1)\widetilde{\eta}_2^2]\E[\eta_1\widetilde{\eta}_1]
+2\E[\Theta_1'(\rho_1)]\nonumber\\
&\times \E[\widetilde{\eta}_1\widetilde{\eta}_2]
\E[\widetilde{\eta}_2\eta_1]. \label{eq:steinf4}\\
\E[\Theta_1(\rho_1)\Theta_2(\rho_2)\widetilde{\eta}_1\widetilde{\eta}_2] =
&\,\E[\Theta_1(\rho_1)\Theta_2(\rho_2)]\E[\widetilde{\eta}_1\widetilde{\eta}_2]\nonumber\\
&+\E[\Theta_1'(\rho_1)\Theta_2(\rho_2)\widetilde{\eta}_2]\E[\eta_1\widetilde{\eta}_1]\nonumber\\
&+\E[\Theta_1(\rho_1)\Theta'_2(\rho_2)\widetilde{\eta}_1]
\E[\eta_2\widetilde{\eta}_2]\nonumber\\
&+\E[\Theta_1'(\rho_1)\Theta'_2(\rho_2)] (\E[\eta_1\widetilde{\eta}_2]
\E[\eta_2\widetilde{\eta}_1]\nonumber\\
&-\E[\eta_1\widetilde{\eta}_1]
\E[\eta_2\widetilde{\eta}_2]).
\label{eq:steinf5}
\end{align}
\end{prop}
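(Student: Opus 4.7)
The plan is to reduce all four identities to a single multivariate extension of Stein's principle: for a zero-mean jointly Gaussian vector $(X,Y_1,\dots,Y_k)$ independent of an auxiliary random vector $\upsilon$, and for $f(\cdot,\upsilon)$ of suitable regularity,
\begin{equation*}
\E[X\,f(Y,\upsilon)] = \sum_{i=1}^{k}\E[XY_i]\,\E[\partial_{Y_i}f(Y,\upsilon)].
\end{equation*}
This is obtained from Proposition~\ref{p:stein0} by decomposing $X$ as its $L^2$-projection onto $\mathrm{span}(Y_1,\dots,Y_k)$ plus an independent Gaussian residual (whose contribution vanishes since the residual is independent of $f$ and has mean zero), then applying Proposition~\ref{p:stein0} one coordinate at a time, conditionally on the remaining $Y_j$'s and on $\upsilon$. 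Assumptions~(i)--(iii), together with Hölder's inequality and the Gaussian tail decay, are exactly what is needed to justify each such application and to guarantee that every expectation in the proof is finite.

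Equipped with this tool, the identities \eqref{eq:steinf1}--\eqref{eq:steinf4} each follow from one well-chosen pair $(X,f)$. For \eqref{eq:steinf1}, take $X=\widetilde\eta_1$, $f=\Theta_1(\rho_1)$, so that $\partial_{\eta_1}f=\Theta_1'(\rho_1)$. For \eqref{eq:steinf3}, take $X=\widetilde\eta_1$, $f=\widetilde\eta_2\,\Theta_1(\rho_1)$, and read off the two partials $\partial_{\eta_1}f=\widetilde\eta_2\Theta_1'(\rho_1)$ and $\partial_{\widetilde\eta_2}f=\Theta_1(\rho_1)$. For \eqref{eq:steinf4}, take $X=\widetilde\eta_1$, $f=\widetilde\eta_2^{\,2}\,\Theta_1(\rho_1)$; the two partials produce $\E[\eta_1\widetilde\eta_1]\E[\widetilde\eta_2^{\,2}\Theta_1'(\rho_1)]+2\E[\widetilde\eta_1\widetilde\eta_2]\E[\widetilde\eta_2\Theta_1(\rho_1)]$, after which \eqref{eq:steinf1} is invoked on the last factor to close the formula.

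For \eqref{eq:steinf5}, the same recipe with $X=\widetilde\eta_1$ and $f=\widetilde\eta_2\,\Theta_1(\rho_1)\Theta_2(\rho_2)$ gives
\begin{equation*}
\E[\widetilde\eta_1\widetilde\eta_2\Theta_1\Theta_2]=\E[\widetilde\eta_1\widetilde\eta_2]\E[\Theta_1\Theta_2]+\E[\eta_1\widetilde\eta_1]\E[\widetilde\eta_2\Theta_1'\Theta_2]+\E[\eta_2\widetilde\eta_1]\E[\widetilde\eta_2\Theta_1\Theta_2'],
\end{equation*}
which already matches the first two terms of \eqref{eq:steinf5}. It remains to convert the last term, in which $\widetilde\eta_2\,\Theta_1\Theta_2'$ must be replaced by $\widetilde\eta_1\,\Theta_1\Theta_2'$ modulo a $\Theta_1'\Theta_2'$ correction. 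For this, I apply the multivariate Stein tool to $\E[\widetilde\eta_j\,\Theta_1(\rho_1)\Theta_2'(\rho_2)]$ for each $j\in\{1,2\}$, obtaining two linear relations in the two unknowns $\E[\Theta_1'\Theta_2']$ and $\E[\Theta_1\Theta_2'']$. Forming the combination $\E[\eta_2\widetilde\eta_2]\cdot(j{=}1)-\E[\eta_2\widetilde\eta_1]\cdot(j{=}2)$ eliminates $\E[\Theta_1\Theta_2'']$ and yields exactly the algebraic identity needed to complete \eqref{eq:steinf5}.

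The main obstacle is that this last manipulation passes through the quantity $\E[\Theta_1\Theta_2'']$, which is not \emph{a priori} well-defined, since $\Theta_2$ is only assumed once differentiable. I would resolve this by a standard regularization argument: first prove \eqref{eq:steinf5} under the additional assumption that $\Theta_1,\Theta_2$ are smooth with bounded derivatives, then extend to the general hypotheses by approximating each $\Theta_i$ by such functions. The integrability controls (i)--(iii) are precisely what allows dominated convergence to be applied to every expectation appearing on both sides, so that, in the limit, the final identity only involves the first-derivative terms listed in \eqref{eq:steinf5}. Verifying these regularity and convergence conditions step by step is the principal technical burden of the proof.
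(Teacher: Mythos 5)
Your strategy is sound and genuinely different in organization from the paper's. The paper never states a multivariate Stein lemma; instead it performs, for each identity separately, an explicit linear-prediction decomposition (e.g.\ $\widetilde{\eta}_i = a_i\eta_1+\check{\eta}_i$ with $\check{\eta}_i$ independent of $(\eta_1,\upsilon_1)$, later $\eta_i=c_i\widetilde{\eta}_2+\eta_i^\perp$, and so on) followed by an integration by parts on the relevant conditional Gaussian density. Your single identity $\E[Xf(Y,\upsilon)]=\sum_i\E[XY_i]\,\E[\partial_{Y_i}f]$ subsumes all of these and turns \eqref{eq:steinf1}, \eqref{eq:steinf3} and \eqref{eq:steinf4} into one-line computations, which is a real gain in transparency; the price is that the lemma itself must be proved with care (your phrase ``one coordinate at a time, conditionally on the remaining $Y_j$'s'' needs adjusting, since the $Y_j$ are correlated: one must first regress the other coordinates on the one being differentiated and condition on the residuals, which is exactly what the paper does by hand in each case), and the integrability checks under (i)--(iii) are the same in either route.

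The one place where your route is strictly worse is \eqref{eq:steinf5}. Your elimination of $\E[\Theta_1\Theta_2'']$ from the two relations for $\E[\widetilde{\eta}_j\Theta_1\Theta_2']$ is algebraically correct (the coefficient of $\E[\Theta_1\Theta_2'']$ in your combination is $\E[\eta_2\widetilde{\eta}_2]\E[\eta_2\widetilde{\eta}_1]-\E[\eta_2\widetilde{\eta}_1]\E[\eta_2\widetilde{\eta}_2]=0$), but, as you acknowledge, it passes through a quantity that is undefined under the stated hypotheses, forcing a mollification argument that you only sketch and that is itself delicate ($\Theta_2$ is merely continuous and a.e.\ differentiable, so controlling the approximating second derivatives is not free). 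This detour is avoidable \emph{inside your own framework}: form the linear combination before applying Stein, not after. Set $\widetilde{\eta}_{12}=\E[\eta_2\widetilde{\eta}_1]\widetilde{\eta}_2-\E[\eta_2\widetilde{\eta}_2]\widetilde{\eta}_1$; then $\E[\eta_2\widetilde{\eta}_{12}]=0$, so $\widetilde{\eta}_{12}$ is independent of $(\eta_2,\upsilon_1,\upsilon_2)$, and applying your lemma with $X=\widetilde{\eta}_{12}$ and $f=\Theta_1(\rho_1)\Theta_2'(\rho_2)$ (conditioning on $\eta_2$, so that $\Theta_2'(\rho_2)$ is frozen and only the $\eta_1$-derivative contributes) yields directly $\E[\Theta_1\Theta_2'\widetilde{\eta}_{12}]=\E[\Theta_1'\Theta_2']\E[\eta_1\widetilde{\eta}_{12}]$, which is precisely the missing relation, with no second derivative and no limiting argument. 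This is what the paper does in \eqref{eq:presteinf5}; I recommend you adopt that step and drop the regularization.
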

Note that
Proposition \ref{p:steinf} obviously is applicable when $(\upsilon_1,\upsilon_2)$ is deterministic. 

\section{Use of Stein's principle}\label{sec:appli_stein}
\subsection{Case of invertible degradation systems}\label{sec:inv}
In this section, we come back to the deconvolution problem
and develop an unbiased estimate of
the quadratic risk:
\begin{equation}
\mathcal{E}(\widehat{s}-s)  = \frac{1}{D}
\sum_{\mathbf{x}\in \mathbb{D}} \big(s(\mathbf{x})-\widehat{s}(\mathbf{x})\big)^2 \label{eq:defE}
\end{equation}
which will be useful to optimize a parametric form of the estimator 
from the observed data.
For this purpose, the following assumption is made:
\begin{assum}
\begin{enumerate}\ \label{as:stein}
\item \label{as:invfilt} The degradation filter is such that,
for every $\mathbf{p}\in \mathbb{D}$, $H(\mathbf{p}) \neq 0$.
\item \label{as:steinhyp} For every $\ell$ in $\{1,\ldots,L\}$, $\Theta_{\ell}$ is a continuous, almost everywhere differentiable function such that
\begin{enumerate}
\item $\forall \alpha \in \RR^*$,$\forall \tau \in \RR$,\\ 
$\displaystyle \lim_{|\zeta| \to \infty} \Theta_\ell(\tau+\zeta)\zeta^2 
\exp(-\frac{\zeta^2}{2\alpha^2}) = 0$,
\item $\E[|\Theta_{\ell}(r_\ell)|^3] < \infty$ 
and $\E[|\Theta'_{\ell}(r_\ell)|^3] < \infty$
where $\Theta_\ell'$ is the derivative of $\Theta_\ell$.
\end{enumerate}
\end{enumerate}
\end{assum}
Under this assumption, the degradation model can be re-expressed as
$
s(\mathbf{x}) = \widetilde{r}(\mathbf{x}) -
 \widetilde{n}(\mathbf{x})
$
where $\widetilde{r}$ and $\widetilde{n}$ are the fields whose discrete Fourier coefficients  are
\begin{equation}
\widetilde{R}(\mathbf{p}) = 
\frac{R(\mathbf{p})}{H(\mathbf{p})},\quad
\widetilde{N}(\mathbf{p}) = 
\frac{N(\mathbf{p})}{H(\mathbf{p})}.\label{eq:CFrt}
\end{equation}
Thus, since the noise has been assumed spatially white,
it is easy to show that
\begin{align}
\forall (\mathbf{p},\mathbf{p}') \in \mathbb{D}^2,\qquad
\E\big[\widetilde{N}(\mathbf{p})\big(N(\mathbf{p}')\big)^*\big]
=&\frac{\gamma D}{H(\mathbf{p})} \delta_{\mathbf{p}-\mathbf{p}'}
\label{eq:noisestatFI}\\
\E\big[\widetilde{N}(\mathbf{p})\big(\widetilde{N}(\mathbf{p'})\big)^*\big]
=& \frac{\gamma D}{|H(\mathbf{p})|^2} \delta_{\mathbf{p}-\mathbf{p}'}
\label{eq:noisestatF}
\end{align}
and $\E\big[\widetilde{N}(\mathbf{p})\big(S(\mathbf{p'})\big)^*\big] = 0$.
The latter relation shows that $\widetilde{n}$ and $s$ are uncorrelated fields.

We are now able to state the following result (see Appendix \ref{a:steininv}):
\begin{prop}\label{p:steininv}
The mean square error on each frequency component is such that,
for every $\mathbf{p} \in \mathbb{D}$,
\begin{multline}
\E[|\widehat{S}(\mathbf{p})-S(\mathbf{p})|^2]
= \E[|\widehat{S}(\mathbf{p})-\widetilde{R}(\mathbf{p})|^2]
-\frac{\gamma D}{| H(\mathbf{p}) |^2}\\
+2\gamma\sum_{\ell=1}^L \E[\Theta_\ell'(r_\ell)] 
\,\mathrm{Re}\Big\{\frac{\Phi_\ell(\mathbf{p}) \big(\widetilde{\Phi}_\ell(\mathbf{p})\big)^*}{H(\mathbf{p})}\Big\}
\label{eq:steindecriskDp}
\end{multline}
and, the global mean square estimation error can be expressed as
\begin{align}
&\E[\mathcal{E}(\widehat{s}-s)]  =
\E[\mathcal{E}(\widehat{s}-\widetilde{r})]
+ \Delta \label{eq:steindecrisk}\\
&\Delta = \frac{\gamma}{D}\Big(2\sum_{\ell=1}^L \E[\Theta_\ell'(r_\ell)]
\overline{\gamma}_\ell
-\sum_{\mathbf{p}\in \mathbb{D}} | H(\mathbf{p}) |^{-2}\Big)
\label{eq:steindecriskD}
\end{align}
where $(\overline{\gamma}_\ell)_{1 \le \ell \le L}$ is the real-valued cross-correlation sequence defined by: for all $\ell \in \{1,\ldots,L\}$,
\begin{equation}
\overline{\gamma}_\ell
 = \frac{1}{D} \sum_{\mathbf{p}\in\mathbb{D}}
\frac{\Phi_\ell(\mathbf{p}) \big(\widetilde{\Phi}_\ell(\mathbf{p})\big)^*}{H(\mathbf{p})}.
\label{eq:devintertermf}
\end{equation}
\end{prop}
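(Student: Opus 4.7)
My plan is to derive the frequency-wise identity \eqref{eq:steindecriskDp} first, and then obtain \eqref{eq:steindecrisk}--\eqref{eq:steindecriskD} by Parseval. Using Assumption~\ref{as:stein}\eqref{as:invfilt}, write $S(\mathbf{p})=\widetilde R(\mathbf{p})-\widetilde N(\mathbf{p})$, so that
\[
\widehat S(\mathbf{p})-S(\mathbf{p})=\bigl(\widehat S(\mathbf{p})-\widetilde R(\mathbf{p})\bigr)+\widetilde N(\mathbf{p}).
\]
Expanding $|\cdot|^2$, taking expectation and using \eqref{eq:noisestatF} to evaluate $\E[|\widetilde N(\mathbf{p})|^2]=\gamma D/|H(\mathbf{p})|^2$, the whole proof reduces to computing
\[
\E\bigl[(\widehat S(\mathbf{p})-\widetilde R(\mathbf{p}))\,\widetilde N(\mathbf{p})^*\bigr]
\]
and taking $2\,\mathrm{Re}$.

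Since $\widetilde R=S+\widetilde N$ with $\widetilde N$ independent of $S$ and zero-mean, the term $\E[\widetilde R(\mathbf{p})\widetilde N(\mathbf{p})^*]$ collapses to $\E[|\widetilde N(\mathbf{p})|^2]=\gamma D/|H(\mathbf{p})|^2$. For the $\widehat S$ part, I expand via \eqref{eq:estsynthf} and reduce the computation to evaluating, for each $\ell$, the quantity $\E[\Theta_\ell(r_\ell)\,N(\mathbf{p})^*]/H(\mathbf{p})^*$. Here is where Stein's identity enters: decompose $N(\mathbf{p})$ into its real and imaginary parts, each of which is jointly Gaussian with the noise component $\eta_\ell=\langle n,\varphi_\ell\rangle$ of $r_\ell$, while the signal component $\langle\widetilde h*s,\varphi_\ell\rangle$ plays the role of the independent variable $\upsilon$ in Proposition~\ref{p:steinf}. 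Applying \eqref{eq:steinf1} to each real/imaginary part (whose technical hypotheses are guaranteed by Assumption~\ref{as:stein}\eqref{as:steinhyp}) and recombining, I obtain
\[
\E\bigl[\Theta_\ell(r_\ell)\,N(\mathbf{p})^*\bigr]=\E[\Theta_\ell'(r_\ell)]\,\E[\eta_\ell\,N(\mathbf{p})^*]=\gamma\,\E[\Theta_\ell'(r_\ell)]\,\Phi_\ell(\mathbf{p})^*,
\]
the last equality following from a direct computation using the whiteness of $n$.

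Assembling these pieces and taking $2\,\mathrm{Re}$, the $\gamma D/|H(\mathbf{p})|^2$ contribution from $\widetilde R$ combines with the $+\E[|\widetilde N|^2]$ term to give the stated $-\gamma D/|H(\mathbf{p})|^2$, and the $\widehat S$ cross-term yields the sum over $\ell$ in \eqref{eq:steindecriskDp} (using that $\mathrm{Re}(z)=\mathrm{Re}(z^*)$ to turn $\widetilde\Phi_\ell(\mathbf{p})\Phi_\ell(\mathbf{p})^*/H(\mathbf{p})^*$ into $\Phi_\ell(\mathbf{p})\widetilde\Phi_\ell(\mathbf{p})^*/H(\mathbf{p})$).

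To pass to the global risk, I invoke Parseval to write $\mathcal{E}(\widehat s-s)=D^{-2}\sum_{\mathbf{p}\in\mathbb{D}}|\widehat S(\mathbf{p})-S(\mathbf{p})|^2$, sum \eqref{eq:steindecriskDp} over $\mathbf{p}$, and divide by $D^2$; the sum $\sum_{\mathbf{p}}\Phi_\ell(\mathbf{p})\widetilde\Phi_\ell(\mathbf{p})^*/H(\mathbf{p})$ is $D\,\overline{\gamma}_\ell$ by \eqref{eq:devintertermf}. The last small point to justify is that $\overline{\gamma}_\ell$ is real, which follows from the Hermitian symmetries $\Phi_\ell(-\mathbf{p})=\Phi_\ell(\mathbf{p})^*$, $\widetilde\Phi_\ell(-\mathbf{p})=\widetilde\Phi_\ell(\mathbf{p})^*$, $H(-\mathbf{p})=H(\mathbf{p})^*$ (indices taken modulo $\boldsymbol{D}$) induced by $\varphi_\ell,\widetilde\varphi_\ell,h$ being real-valued, so that pairing $\mathbf{p}$ with $-\mathbf{p}$ in the sum yields complex conjugate contributions. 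The main obstacle is step~(4): carefully justifying the passage from the real Stein identity \eqref{eq:steinf1} to its complex-valued analogue on $N(\mathbf{p})^*$, and checking that Assumption~\ref{as:stein}\eqref{as:steinhyp} does supply the growth and integrability conditions required by Proposition~\ref{p:steinf}; everything else is bookkeeping.
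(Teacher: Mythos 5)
Your proposal is correct and follows essentially the same route as the paper: expand $\widehat{S}-S=(\widehat{S}-\widetilde{R})+\widetilde{N}$, use the uncorrelatedness of $s$ and $\widetilde{n}$, evaluate the cross term $\E[\widehat{S}(\mathbf{p})\,\widetilde{N}(\mathbf{p})^*]$ via the linear Stein identity \eqref{eq:steinf1} together with an explicit computation of the noise cross-correlation, and pass to the global risk by Parseval. The only (cosmetic) differences are that you apply \eqref{eq:steinf1} to the real and imaginary parts of $N(\mathbf{p})$ in the frequency domain, whereas the paper applies it to $\widetilde{n}(\mathbf{x})$ pointwise in space before Fourier transforming, and that you justify the realness of $\overline{\gamma}_\ell$ by Hermitian symmetry of the DFT of real sequences rather than by identifying it with $\E[n_\ell\widetilde{n}_\ell]/\gamma$; both variants are valid.
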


\subsection{Case of non-invertible degradation systems}
\label{se:noninv}
Assumption \ref{as:stein}\ref{as:invfilt} expresses the fact that the degradation filter
is invertible. Let us now examine how this assumption can be relaxed.

We denote by $\mathbb{P}$ the set of indices for which the frequency response $H$ vanishes:
\begin{equation}
\mathbb{P} = \{\mathbf{p}\in \mathbb{D}\;\mid\;
H(\mathbf{p}) = 0\}.
\label{eq:P0}
\end{equation}
It is then clear that the components of $S(\mathbf{p})$ with $\mathbf{p}\in\mathbb{P}$, are unobservable.
The observable part of the signal $s$ thus corresponds to the projection
$\underline{s}=\Pi\big(s\big)$ of $s$ onto the subspace of $\RR^{D_1\times \cdots \times D_d}$
of the fields whose discrete Fourier coefficients vanish
on $\mathbb{P}$. In the Fourier domain, the projector $\Pi$ is therefore defined by
\begin{equation}
\forall \mathbf{p} \in \mathbb{D},\qquad
\underline{S}(\mathbf{p})
= \begin{cases}
S(\mathbf{p}) & \mbox{if $\mathbf{p} \not\in \mathbb{P}$}\\
0 & \mbox{if $\mathbf{p} \in \mathbb{P}$.}
\end{cases}
\label{eq:defprojb}
\end{equation}
In this context, it is judicious to restrict the summation in \eqref{eq:Plancherel} to $\mathbb{Q} = \mathbb{D}\setminus \mathbb{P}$
so as to limit the influence of the noise present in the unobservable
part of $s$. This leads to the following modified expression of
the coefficients $r_\ell$:
\begin{equation}
r_\ell = 
\frac{1}{D}\sum_{\mathbf{p} \in \mathbb{Q}}
R(\mathbf{p}) \big(\Phi_\ell(\mathbf{p})\big)^*=\langle \underline{r},\varphi_\ell\rangle
\label{eq:modrl}
\end{equation}
where $\underline{r} = \Pi\big(r\big)$. The second step in the estimation procedure (Eq. \eqref{eq:NLestpart}) is kept unchanged. For the last step, we 
impose the following structure to the estimator:
\begin{align}
\widehat{s}(\mathbf{x}) &= 
\Pi\Big(\sum_{\ell=1}^L \widehat{s}_\ell\;\widetilde{\varphi}_\ell(\mathbf{x})\Big)
=\sum_{\ell=1}^L \widehat{s}_\ell\;\underline{\widetilde{\varphi}}_\ell(\mathbf{x})
\label{eq:estsu}
\end{align}
where $\underline{\widetilde{\varphi}}_\ell = \Pi(\widetilde{\varphi}_\ell)$.
We will also replace Assumption \ref{as:stein}\ref{as:invfilt} by the following less restrictive one:
\begin{assum}\label{as:Q}
The set $\mathbb{Q}$ is nonempty.
\end{assum}

Under this condition and Assumption \ref{as:stein}\ref{as:steinhyp}, an extended form of Proposition \ref{p:steininv}
is the following:
\begin{prop} \label{prop:ninv}
The mean square error on each frequency component is given,
for every $\mathbf{p} \in \mathbb{Q}$, by \eqref{eq:steindecriskDp}.
The global mean square estimation error can be expressed as
\begin{align}
&\E[\mathcal{E}(\widehat{s}-s)]  = \E[\mathcal{E}(s-\underline{s})]+ \E[\mathcal{E}(\widehat{s}-\underline{\widetilde{r}})]+\Delta
\label{eq:steindecriskiv}\\
&\Delta = \frac{\gamma}{D} \Big(2\sum_{\ell=1}^L
\E[\Theta_\ell'(r_\ell)] \overline{\gamma}_\ell-
\sum_{\mathbf{p}\in \mathbb{Q}} | H(\mathbf{p}) |^{-2}\Big).
\label{eq:steindecriskDiv}
\end{align}
Hereabove, $\underline{\widetilde{r}}$ denotes the 2D field with 
Fourier coefficients
\begin{equation}
\underline{\widetilde{R}}(\mathbf{p})
= \begin{cases}
\displaystyle{\frac{R(\mathbf{p})}{H(\mathbf{p})}} & \mbox{if $\mathbf{p} \in \mathbb{Q}$}\\
0 & \mbox{otherwise},
\end{cases}
\label{eq:defRt}
\end{equation}
and, the real-valued cross-correlation sequence
$(\overline{\gamma}_\ell)_{1 \le \ell\le L}$ becomes:
\begin{equation}
\overline{\gamma}_\ell
 = \frac{1}{D} \sum_{\mathbf{p} \in \mathbb{Q}}
\frac{\Phi_\ell(\mathbf{p}) \big(\widetilde{\Phi}_\ell(\mathbf{p}))^*}{H(\mathbf{p})}.
\label{eq:devintertermfni}
\end{equation}

\end{prop}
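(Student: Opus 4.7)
The plan is to extend the proof of Proposition \ref{p:steininv} to the observable subspace $\mathrm{Range}(\Pi)$. I would first invoke an orthogonality argument: by construction \eqref{eq:estsu}, $\widehat{s}\in\mathrm{Range}(\Pi)$, while $s-\underline{s}\in\mathrm{Ker}(\Pi)$. Since $\Pi$ is the orthogonal projector onto the subspace of fields whose Fourier coefficients vanish on $\mathbb{P}$, the Pythagorean identity gives
\begin{equation}
\mathcal{E}(\widehat{s}-s)=\mathcal{E}(\widehat{s}-\underline{s})+\mathcal{E}(s-\underline{s}),
\end{equation}
which simultaneously accounts for the bias term in \eqref{eq:steindecriskiv} and reduces the task to computing $\E[\mathcal{E}(\widehat{s}-\underline{s})]$.

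Next, I would introduce the truncated noise field $\underline{\widetilde{n}}$ whose Fourier coefficients equal $N(\mathbf{p})/H(\mathbf{p})$ on $\mathbb{Q}$ and vanish on $\mathbb{P}$. Since $R(\mathbf{p})=H(\mathbf{p})S(\mathbf{p})+N(\mathbf{p})$ for $\mathbf{p}\in\mathbb{Q}$, one has $\underline{s}=\underline{\widetilde{r}}-\underline{\widetilde{n}}$, and expanding the squared norm before taking expectations gives
\begin{equation}
\E[\mathcal{E}(\widehat{s}-\underline{s})]=\E[\mathcal{E}(\widehat{s}-\underline{\widetilde{r}})]+\E[\mathcal{E}(\underline{\widetilde{n}})]+\frac{2}{D}\E[\langle\widehat{s}-\underline{\widetilde{r}},\underline{\widetilde{n}}\rangle].
\end{equation}
Parseval together with the whiteness of $n$ yields $\E[\mathcal{E}(\underline{\widetilde{n}})]=(\gamma/D)\sum_{\mathbf{p}\in\mathbb{Q}}|H(\mathbf{p})|^{-2}$, so the proof reduces to the evaluation of the cross term.

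The cross term is where Stein's principle enters and constitutes the main difficulty. Via Plancherel it rewrites as a sum over $\mathbb{Q}$ of expectations $\E[(\widehat{S}(\mathbf{p})-\underline{\widetilde{R}}(\mathbf{p}))(\underline{\widetilde{N}}(\mathbf{p}))^{*}]$. The contribution involving $\underline{\widetilde{R}}$ is evaluated directly from \eqref{eq:noisestatF}. For the contribution involving $\widehat{S}(\mathbf{p})=\sum_{\ell}\Theta_{\ell}(r_{\ell})\widetilde{\Phi}_{\ell}(\mathbf{p})$, I would split the modified coefficient $r_{\ell}$ from \eqref{eq:modrl} into a component independent of $n$ and a zero-mean Gaussian component $\nu_{\ell}=\frac{1}{D}\sum_{\mathbf{p}'\in\mathbb{Q}}N(\mathbf{p}')(\Phi_{\ell}(\mathbf{p}'))^{*}$, and then apply the extended Stein identity \eqref{eq:steinf1} to the real and imaginary parts of $N(\mathbf{p})$ separately; the regularity hypotheses of Proposition \ref{p:steinf} are supplied by Assumption \ref{as:stein}\ref{as:steinhyp}. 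The decisive covariance computation is $\E[\nu_{\ell}(N(\mathbf{p}))^{*}]=\gamma(\Phi_{\ell}(\mathbf{p}))^{*}$ for $\mathbf{p}\in\mathbb{Q}$, so that the restriction to $\mathbb{Q}$ in $\overline{\gamma}_{\ell}$ emerges naturally from the truncated definition \eqref{eq:modrl}. Reassembling everything produces $\Delta$ as in \eqref{eq:steindecriskDiv}, and the per-frequency identity \eqref{eq:steindecriskDp} follows by executing the same Stein step at a fixed $\mathbf{p}\in\mathbb{Q}$ prior to summation. The only real obstacle is the bookkeeping required to pass from the real-valued formulation of Proposition \ref{p:steinf} to the complex Fourier setting, together with the use of the Hermitian symmetries of $\Phi_{\ell}$, $\widetilde{\Phi}_{\ell}$ and $H$ that are needed to collapse the result into the real expression \eqref{eq:devintertermfni} for $\overline{\gamma}_{\ell}$.
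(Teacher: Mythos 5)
Your proposal is correct and follows essentially the same route as the paper: the Pythagorean decomposition $\mathcal{E}(\widehat{s}-s)=\mathcal{E}(s-\underline{s})+\mathcal{E}(\widehat{s}-\underline{s})$ via orthogonality of $\Pi$, followed by a repetition of the Proposition~\ref{p:steininv} argument on the observable part, with the Stein identity \eqref{eq:steinf1} supplying the cross term restricted to $\mathbb{Q}$. The only (immaterial) difference is that you apply Stein's identity to the real and imaginary parts of $N(\mathbf{p})$ in the frequency domain, whereas the paper's Appendix applies it to the real spatial samples $\widetilde{n}(\mathbf{x})$ and then takes the Fourier transform of the result.
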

\begin{proof}
The proof that  \eqref{eq:steindecriskDp} holds for every $\mathbf{p}\in \mathbb{Q}$ is identical to that in Proposition \ref{p:steininv}.
The global MSE can be decomposed as the sum of the errors on its unobservable and observable parts, respectively. Using the orthogonality property for the projection operator $\Pi$, the corresponding quadratic risk is given by:
$
\mathcal{E}(\widehat{s}-s) = \mathcal{E}(s-\underline{s})
+ \mathcal{E}(\widehat{s}-\underline{s}).
$
It remains now to express the mean square estimation error 
 $\E[\mathcal{E}(\widehat{s}-\underline{s})]$
on the observable part. This is done quite similarly to the end of the proof of Proposition \ref{p:steininv}.
\end{proof}
\begin{rem}\ \label{re:iv}
\begin{enumerate}
\item Assume that the functions $s$ and $h$ share the same frequency band
in the sense that, for all 
$\mathbf{p} \not \in \mathbb{Q}$, $S(\mathbf{p}) = 0$. Let us also assume that, 
for every $\mathbf{p} \in \mathbb{Q}$, $H(\mathbf{p}) = 1$. This typically corresponds to a denoising problems for a signal with frequency band
$\mathbb{Q}$. Then, since $\underline{s} = s$ and $\underline{\widetilde{r}} =
\underline{r}$, \eqref{eq:steindecriskiv} becomes
\begin{multline}
\E[\mathcal{E}(\widehat{s}-s)]=\E[\mathcal{E}(\widehat{s}-\underline{r})]\\
+\frac{\gamma}{D} \Big(2 \sum_{\ell=1}^L \E[\Theta_\ell'(r_\ell)] 
\langle \varphi_\ell,\underline{\widetilde{\varphi}}_\ell\rangle
-\mathrm{card}(\mathbb{Q})\Big),
\end{multline}
where $\mathrm{card}(\mathbb{Q})$ denotes the cardinality of
$\mathbb{Q}$.
In the case when $d=2$ (images) and 
$\mathbb{Q}=\mathbb{D}$, the resulting expression is identical to the one which has been derived in \cite{BLU_LUISIER_07} for denoising problems.
\item Proposition \ref{prop:ninv} remains valid for more general choices of
the set $\mathbb{P}$ 
than \eqref{eq:P0}. In particular, \eqref{eq:steindecriskiv} 
and \eqref{eq:steindecriskDiv} are unchanged if
\begin{equation}
\mathbb{P} =  \{\mathbf{p}\in \mathbb{D}\;\mid\; |H(\mathbf{p})| \le \chi\}
\label{eq:defPchi}
\end{equation}
where $\chi \ge 0$, provided that the complementary set $\mathbb{Q}$
satisfies Assumption \ref{as:Q}.
\item It is possible to give an alternative proof of 
\eqref{eq:steindecriskiv}-\eqref{eq:steindecriskDiv} 
by applying Proposition~1 in \cite{ELDAR_08}. 
\end{enumerate}
\end{rem}

\section{Empirical estimation of the risk}
\label{sec:varemp}
Under the assumptions of the previous section, we are now interested
in the estimation of the ``observable'' part of the risk in  \eqref{eq:steindecriskiv}, that is $\mathcal{E}_o = \mathcal{E}(\widehat{s}-\underline{s})$, from the observed field $r$. As shown by Proposition \ref{prop:ninv}, an unbiased estimator of $\mathcal{E}_o$ is 
\begin{equation}
\widehat{\mathcal{E}}_o =  \mathcal{E}(\widehat{s}-\underline{\widetilde{r}})+\widehat{\Delta}
\label{eq:Eo}
\end{equation}
where
\begin{equation}
\widehat{\Delta}=\frac{\gamma}{D}\Big(2\sum_{\ell=1}^L \Theta_\ell'(r_\ell) \overline{\gamma}_\ell-\sum_{\mathbf{p}\in \mathbb{Q}} | H(\mathbf{p}) |^{-2}\Big).
\label{eq:estsDelta}
\end{equation}
We will study in more detail the statistical behaviour of this estimator
by considering the difference:
\begin{equation}
\mathcal{E}_o-\widehat{\mathcal{E}}_o
= \frac{2}{D} \sum_{\mathbf{x} \in \mathbb{D}} \big(\widehat{s}(\mathbf{x})-\underline{s}(\mathbf{x})\big)\,\underline{\widetilde{n}}(\mathbf{x})-
\mathcal{E}(\underline{\widetilde{n}})-\widehat{\Delta}.
\label{eq:difE0}
\end{equation}
More precisely, by making use of Proposition \ref{p:steinf},
the variance of this term can be derived (see Appendix~\ref{a:varstein}).
\begin{prop}\label{p:varstein}
The variance of the estimate of the observable part of the quadratic risk is given by
\begin{multline}
\mathsf{Var}[\mathcal{E}_o-\widehat{\mathcal{E}}_o]
= \frac{4\gamma}{D}\E[\mathcal{E}(\widehat{s}_H-\widetilde{r}_H)]\\
\quad+\frac{4\gamma^2}{D^2} \sum_{\ell=1}^L
\sum_{i=1}^L\E[\Theta_\ell'(r_\ell)\Theta_i'(r_i)]
\overline{\gamma}_{\ell,i} \overline{\gamma}_{i,\ell}-
\frac{2\gamma^2}{D^2} \sum_{\mathbf{p}\in \mathbb{Q}}\frac{1}{|H(\mathbf{p})|^4}
\label{eq:varsteinf} 
\end{multline}
where $\widetilde{r}_H$ is the field with discrete Fourier coefficients
given by
\begin{equation}
\widetilde{R}_H(\mathbf{p}) = 
\begin{cases}
\displaystyle\frac{\widetilde{R}(\mathbf{p})}{H(\mathbf{p})} & \mbox{if $\mathbf{p} \in
\mathbb{Q}$}\\
0 & \mbox{otherwise,}
\end{cases}
\label{eq:indexH}
\end{equation}
$\widehat{s}_H$ is similarly defined from $\widehat{s}$ and,
\begin{equation}
\forall (\ell,i) \in \{1,\ldots,L\}^2,\qquad
\overline{\gamma}_{\ell,i}
= \frac{1}{D}
\sum_{\mathbf{p}\in \mathbb{Q}}
\frac{\Phi_\ell(\mathbf{p})\big(\widetilde{\Phi}_i(\mathbf{p})\big)^*}{H(\mathbf{p})} .
\label{eq:nf5iii}
\end{equation}
\end{prop}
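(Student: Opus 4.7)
The variance equals $\E[(\mathcal{E}_o-\widehat{\mathcal{E}}_o)^{2}]$ since $\widehat{\mathcal{E}}_o$ is unbiased for $\mathcal{E}_o$ by Proposition~\ref{prop:ninv}. Starting from \eqref{eq:difE0}, I would decompose the difference as $A-B-C$ with
\[
A=\tfrac{2}{D}\sum_{\mathbf{x}\in\mathbb{D}}\bigl(\widehat{s}(\mathbf{x})-\underline{s}(\mathbf{x})\bigr)\underline{\widetilde{n}}(\mathbf{x}),\quad B=\mathcal{E}(\underline{\widetilde{n}}),\quad C=\widehat{\Delta},
\]
and then expand $\E[(A-B-C)^{2}]$ into its six component expectations. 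In each of them I would substitute $\widehat{s}-\underline{s}=\sum_{\ell}\Theta_{\ell}(r_{\ell})\underline{\widetilde{\varphi}}_{\ell}-\underline{s}$ and view each $r_{\ell}$ as a deterministic mean plus a zero-mean Gaussian noise component, so that every mixed moment fits one of the templates of Proposition~\ref{p:steinf}.

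The critical term is $\E[A^{2}]$: it produces integrands of the form $\Theta_{\ell}(r_{\ell})\Theta_{i}(r_{i})\underline{\widetilde{n}}(\mathbf{x})\underline{\widetilde{n}}(\mathbf{y})$, which is exactly the setting of~\eqref{eq:steinf5}. The last covariance factor $\E[\eta_{1}\widetilde{\eta}_{2}]\E[\eta_{2}\widetilde{\eta}_{1}]-\E[\eta_{1}\widetilde{\eta}_{1}]\E[\eta_{2}\widetilde{\eta}_{2}]$ of that identity is what generates, after spatial summation and Plancherel, both the double-frame term $\overline{\gamma}_{\ell,i}\overline{\gamma}_{i,\ell}$ and the ``doubly-inverted'' residual $\E[\mathcal{E}(\widehat{s}_{H}-\widetilde{r}_{H})]$ of~\eqref{eq:varsteinf}; indeed, the extra factor $1/|H(\mathbf{p})|^{2}$ per frequency brought in by the mixed covariance $\E[\underline{\widetilde{n}}(\mathbf{x})\underline{\widetilde{n}}(\mathbf{y})]$, which expands in Fourier as a sum over $\mathbb{Q}$ weighted by $|H(\mathbf{p})|^{-2}$, is precisely what defines the fields $\widehat{s}_{H}$ and $\widetilde{r}_{H}$ of~\eqref{eq:indexH}. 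The cross terms $\E[AB]$, $\E[AC]$, $\E[BC]$ are similarly reduced by the remaining identities of Proposition~\ref{p:steinf} (for instance \eqref{eq:steinf4} handles the integrand $\Theta_{\ell}(r_{\ell})\underline{\widetilde{n}}(\mathbf{x})\underline{\widetilde{n}}(\mathbf{y})^{2}$ arising in $\E[AB]$, and \eqref{eq:steinf3} handles the integrand $\Theta_{\ell}(r_{\ell})\Theta'_{i}(r_{i})\underline{\widetilde{n}}(\mathbf{x})$ arising in $\E[AC]$), while $\E[B^{2}]$ is a purely Gaussian fourth-order moment of $\underline{\widetilde{n}}$ computed by Isserlis' theorem, and $\E[C^{2}]$ is direct.

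The main obstacle is not any single identity but the careful bookkeeping required to see the many intermediate terms collapse into only the three surviving pieces of~\eqref{eq:varsteinf}. Three nested cancellations have to be verified. First, the deterministic piece $-\frac{\gamma}{D}\sum_{\mathbf{p}\in\mathbb{Q}}|H(\mathbf{p})|^{-2}$ of $C$ annihilates the mean of $B$, which is just the unbiasedness already established in Proposition~\ref{prop:ninv}. Second, the single-$\Theta'_{\ell}(r_{\ell})$ contributions produced by the diagonal parts of~\eqref{eq:steinf5} in $\E[A^{2}]$ and of~\eqref{eq:steinf4} in $\E[AB]$ are neutralised by the cross terms $-2\E[AC]$ and $+2\E[BC]$, a cancellation that is wired into the very definition of the Stein-correction term in $\widehat{\Delta}$. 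Third, the Isserlis contribution to $\E[B^{2}]$ combines with the diagonal piece of $\E[AB]$ (and a residual piece of $\E[A^{2}]$) to leave exactly the final negative term $-\frac{2\gamma^{2}}{D^{2}}\sum_{\mathbf{p}\in\mathbb{Q}}|H(\mathbf{p})|^{-4}$. To keep track of all this cleanly, I would rewrite every spatial double sum $\sum_{\mathbf{x},\mathbf{y}}$ in the Fourier domain from the outset and invoke Plancherel only at the very end, so that the surviving contributions can be read off directly as the three terms in~\eqref{eq:varsteinf}.
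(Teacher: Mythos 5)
Your overall strategy coincides with the paper's: its Appendix likewise starts from \eqref{eq:difE0}, splits the error into three pieces (there written $-2A-B+2C$, with the $\widehat{s}\,\underline{\widetilde{n}}$ part grouped together with the Stein correction of $\widehat{\Delta}$ rather than with the $\underline{s}\,\underline{\widetilde{n}}$ part, and with the periodogram term centered), expands the square into six expectations, evaluates them with the identities of Proposition~\ref{p:steinf} plus a Gaussian fourth-moment computation, and tracks essentially the cancellations you list. So the skeleton is right; the differences from the paper are a regrouping of terms, not a different method.

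Two details would, however, trip you up in the execution. First, the attribution of identities: \eqref{eq:steinf3} contains a single estimating function and cannot evaluate an integrand of the form $\Theta_\ell(r_\ell)\Theta_i'(r_i)\widetilde{n}_\ell$; such mixed $\Theta\,\Theta'$ terms are never computed individually --- they are produced by the middle terms of \eqref{eq:steinf5} and cancel against the cross term between the quadratic part and $\widehat{\Delta}$. What \eqref{eq:steinf3} actually handles (after conditioning on $s$) is the cross term between $\underline{s}\,\underline{\widetilde{n}}$ and $\widehat{s}\,\underline{\widetilde{n}}$, which in your grouping sits inside $\E[A^2]$. Similarly, the term $\E[\mathcal{E}(\widehat{s}_H-\widetilde{r}_H)]$ does not come from the last covariance factor of \eqref{eq:steinf5}: that factor yields only the products $\overline{\gamma}_{\ell,i}\overline{\gamma}_{i,\ell}$ and $\overline{\gamma}_\ell\overline{\gamma}_i$, whereas the $|H(\mathbf{p})|^{-2}$-weighted quadratic originates in the leading term $\E[\Theta_\ell(r_\ell)\Theta_i(r_i)]\E[\widetilde{n}_\ell\widetilde{n}_i]$ combined with the $s$-dependent pieces of $\E[A^2]$. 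Second, and more substantively, the raw computation terminates with $\frac{4\gamma}{D^3}\sum_{\mathbf{p}\in\mathbb{Q}}\E[|\widehat{S}(\mathbf{p})-S(\mathbf{p})|^2]\,|H(\mathbf{p})|^{-2}$ --- a quantity involving the unknown signal --- together with $+\frac{2\gamma^2}{D^2}\sum_{\mathbf{p}\in\mathbb{Q}}|H(\mathbf{p})|^{-4}$ and residual terms in the constants $\kappa_\ell$ of \eqref{eq:defkappa} coming from the $\E[\Theta_\ell(r_\ell)\widetilde{n}_\ell\,\underline{\widetilde{n}}(\mathbf{x})^2]$ computation. To reach the stated form with $\widetilde{r}_H$ and the coefficient $-2$ on the $|H|^{-4}$ sum, one must re-apply the per-frequency risk identity \eqref{eq:steindecriskDp}, weighted by $|H(\mathbf{p})|^{-2}$ and summed over $\mathbb{Q}$; it is exactly this substitution that cancels the $\kappa_\ell$ terms and turns $+2$ into $-2$. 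Your sketch asserts that the final cancellation happens but omits this mechanism, which is the one non-mechanical step of the proof.
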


\begin{rem}
\begin{enumerate}\ 
\item Eq. \eqref{eq:varsteinf} suggests that caution should be taken
in relying on the unbiased risk estimate when $|H(\mathbf{p})|$
takes small values. Indeed, the terms in the expression of the variance
involve divisions by $H(\mathbf{p})$ and may therefore become of high
magnitude, in this case.
\item An alternative statement of Proposition \ref{p:varstein} is to say that
\begin{align*}
\frac{4\gamma}{D}\mathcal{E}(\widehat{s}_H-\widetilde{r}_H)
&+\frac{4\gamma^2}{D^2} \sum_{\ell=1}^L
\sum_{i=1}^L \Theta_\ell'(r_\ell)\Theta_i'(r_i)
\overline{\gamma}_{\ell,i} \overline{\gamma}_{i,\ell}\\
&-\frac{2\gamma^2}{D^2} \sum_{\mathbf{p}\in \mathbb{Q}}|H(\mathbf{p})|^{-4}
\end{align*} 
is an unbiased estimate of $\mathsf{Var}[\mathcal{E}_o-\widehat{\mathcal{E}}_o]$.
\end{enumerate}
\end{rem}
\section{Case study}
\label{se:flexiblity}
It is important to emphasize that the proposed restoration framework presents various degrees of freedom. Firstly, it is possible to choose redundant or non redundant analysis/synthesis families. In Section \ref{subse:ortho_synthesis}, we will show that in the case of orthonormal synthesis families, the estimator design can be split into several simpler optimization procedures. Secondly, any structure of the estimator  can be virtually considered. Of particular interest are restoration methods involving
 Linear Expansion of Threshold (LET) functions, which are investigated in  Section \ref{subse:LET}. As already mentioned, the latter estimators have been successfully used in denoising problems \cite{BLU_LUISIER_07}.

\subsection{Use of orthonormal synthesis families}
\label{subse:ortho_synthesis}
We now examine the case when
$(\underline{\widetilde{\varphi}}_\ell)_{1\le \ell \le L}$
is an orthonormal basis of $\Pi(\RR^{D_1\times\cdots\times  D_d})$ (thus,
$L = \mathrm{card}(\mathbb{Q})$). This arises, in particular,
when 
$(\widetilde{\varphi}_\ell)_{1\le \ell \le L}$ is an orthonormal basis of $\RR^{D_1\times\cdots\times  D_d}$
and the degradation system is invertible ($\mathbb{Q}= \mathbb{D}$).
Then, due to the orthogonality of the functions
$(\underline{\widetilde{\varphi}}_\ell)_{1 \le \ell \le L}$, the unbiased estimate of the risk in \eqref{eq:steindecriskiv} can be rewritten as
$
\mathcal{E}(s-\underline{s})+\widehat{\mathcal{E}}_o = \mathcal{E}(s-\underline{s})+
D^{-1}\sum_{\ell=1}^L (\widehat{s}_\ell-\underline{\widetilde{r}}_\ell)^2
+\widehat{\Delta}
$,
where
$
\underline{\widetilde{r}}_\ell
=  \langle \underline{\widetilde{r}},\widetilde{\varphi}_\ell \rangle.
$
Thanks to \eqref{eq:estsDelta}, the observable part of the risk estimate can be expressed as
\begin{equation}
\widehat{\mathcal{E}}_o  = \frac{1}{D}\sum_{\ell=1}^L (\widehat{s}_\ell-\underline{\widetilde{r}}_\ell)^2
+ \frac{2\gamma}{D} \sum_{\ell=1}^L \Theta_\ell'(r_\ell) \overline{\gamma}_\ell
-\frac{\gamma}{D} \sum_{\mathbf{p}\in \mathbb{Q}} | H(\mathbf{p}) |^{-2}
.
\label{eq:critorth}
\end{equation}
where $\big(\overline{\gamma}_\ell\big)_{1\le \ell \le L}$ is
given by \eqref{eq:devintertermfni}.

Let us now assume that the coefficients $(r_\ell)_{1\le\ell\le L}$ are classified according to $M\in \mathbb{N}^*$ distinct nonempty index subsets $\mathbb{K}_m$, $m\in \{1,\ldots,M\}$. We have then $L = \sum_{m=1}^M K_m$ where, for every $m\in \{1,\ldots,M\}$, $K_m = \mathrm{card}(\mathbb{K}_m)$.
For instance, for a wavelet decomposition, these subsets may correspond to the subbands associated with the different resolution levels, orientations,... In addition, consider that, for every $m \in \{1,\ldots,M\}$, the estimating 
functions $\big(\Theta_{\ell}\big)_{\ell \in \mathbb{K}_m}$
belong to a given class of parametric functions
and they are characterized by a vector parameter $\boldsymbol{a}_{m}$.
The same estimating function is thus employed for a given subset  $\mathbb{K}_m$ of indices. Then, it can be noticed that the criterion to be minimized in \eqref{eq:critorth} is the sum of $M$ partial MSEs corresponding to each subset $\mathbb{K}_m$. Consequently, we can separately adjust the vector $\boldsymbol{a}_{m}$, for every $m\in \{1,\ldots,M\}$,
so as to minimize
\begin{equation}
\sum_{\ell\in \mathbb{K}_m} \big(\Theta_\ell(r_\ell)-\underline{\widetilde{r}}_\ell\big)^2+ 2\gamma \sum_{\ell\in \mathbb{K}_m}
\Theta_\ell'(r_\ell) \overline{\gamma}_\ell.
\label{eq:critorth2}
\end{equation}

\subsection{Example of LET functions}
 \label{subse:LET}
 As in the previous section, we assume that the coefficients $(r_{\ell})_{1\le \ell\le L}$ as defined in \eqref{eq:modrl} are classified according to $M\in \mathbb{N}^*$ distinct index subsets $\mathbb{K}_m$, $m\in \{1,\ldots,M\}$. Within each class $\mathbb{K}_m$,
a LET estimating function is built from a linear combination of $I_m\in \NN^*$ given functions 
$f_{m,i}\colon\RR \to \RR$ applied to $r_{\ell}$.  So, for every $m \in \{1,\ldots,M\}$ and $\ell \in \mathbb{K}_m$, the estimator takes the form:
\begin{equation}
\Theta_{\ell}(r_{\ell})=\sum_{i=1}^{I_m} a_{m,i}\, f_{m,i}(r_{\ell})
\label{eq:surelet}
\end{equation}
where $(a_{m,i})_{1 \le i \le I_m}$ are scalar real-valued weighting factors. 
We deduce from \eqref{eq:estsu} that the estimate can be expressed as
\begin{equation}
\widehat{s}(\mathbf{x})
=  \sum_{m=1}^M\sum_{i=1}^{I_m} a_{m,i}\,\underline{\beta}_{m,i}(\mathbf{x})
\label{eq:estsbeta}
\end{equation}
where 
\begin{equation}
\underline{\beta}_{m,i}(\mathbf{x}) = \sum_{\ell \in \mathbb{K}_m} f_{m,i}(r_{\ell}) \underline{\widetilde{\varphi}}_{\ell}(\mathbf{x}).
\label{eq:defbetab}
\end{equation}
Then, the problem of optimizing the estimator boils down to the determination of the weights $a_{m,i}$ which minimize the unbiased risk estimate.
According to \eqref{eq:Eo} and \eqref{eq:estsDelta}, this is equivalent to minimize
$
\mathcal{E}( \widehat{s}-  \underline{\widetilde{r}})
+ \frac{2\gamma}{D} \sum_{m=1}^M\sum_{\ell\in \mathbb{K}_m}
\Theta_\ell'(r_\ell) \overline{\gamma}_\ell
$,
where $\big(\overline{\gamma}_\ell\big)_{1\le \ell \le L}$ is
given by \eqref{eq:devintertermfni}.
From \eqref{eq:estsbeta}, it can be deduced that this amounts to minimizing:
\begin{align*}
&\sum\limits_{m_0=1}^M\sum\limits_{i_0=1}^{I_{m_0}}a_{m_0,i_0}
\sum\limits_{m=1}^M\sum\limits_{i=1}^{I_m} a_{m,i}
\langle\underline{\beta}_{m_0,i_0},
  \underline{\beta}_{m,i}\rangle\\
&-2\sum\limits_{m_0=1}^M\sum\limits_{i_0=1}^{I_{m_0}} a_{m_0,i_0}
\langle \underline{\beta}_{m_0,i_0},\underline{\widetilde{r}}\rangle\\
&+2\gamma\sum\limits_{m_0=1}^M \sum\limits_{\ell\in \mathbb{K}_{m_0}}
\sum\limits_{i_0=1}^{I_{m_0}}a_{m_0,i_0}
f'_{m_0,i_0}(r_\ell)\,\overline{\gamma}_\ell.
\end{align*}
This minimization can be easily shown to yield the following set of linear equations:
\begin{multline}
\forall m_0\in\{1,\ldots,M\},\forall i_0\in\{1,\ldots,I_{m_0}\}, \\
 \sum_{m=1}^M\sum_{i=1}^{I_m}  \langle\underline{\beta}_{m_0,i_0},\underline{\beta}_{m,i}\rangle\, a_{m,i}
\\=\langle \underline{\beta}_{m_0,i_0},\underline{\widetilde{r}}\rangle-\gamma\sum_{\ell \in \mathbb{K}_{m_0}} f'_{m_0,i_0}(r_{\ell})\,\overline{\gamma}_\ell.
\label{eq:estssureletlin}
\end{multline}

\section{Parameter choice}
\label{se:param}
\subsection{Choice of analysis/synthesis functions}
Using the same notations as in Section \ref{se:flexiblity}, let
$\{\mathbb{K}_m, 1 \le m \le M\}$ be a partition of $\{1,\ldots,L\}$.
Consider now a frame of  $\RR^{D_1\times \cdots \times D_d}$,
$\big((\psi_{m,\mathbf{k}_\ell})_{\ell \in \mathbb{K}_m}\big)_{1 \le m \le M}$, where, for every $m\in \{1,\ldots,M\}$,
$\psi_{m,\mathbf{0}}$ is some field in $\RR^{D_1\times \cdots \times D_d}$
and, for every $\ell \in \mathbb{K}_m$, $\psi_{m,\mathbf{k}_\ell}$ denotes its $\mathbf{k}_\ell$-periodically
shifted version where $\mathbf{k}_\ell$ is some shift value
in $\mathbb{D}$. Notice that, by appropriately choosing the sets $(\mathbb{K}_m)_{1 \le m \le M}$,  any frame of $\RR^{D_1\times \cdots \times D_d}$
can be written under this form
but that it is mostly useful to describe periodic wavelet bases, wavelet packets \cite{Coifman92}, mirror wavelet bases \cite{KALIFA_MALLAT_03}, redundant/undecimated wavelet representations as well as related frames 
\cite{SEL_BARA_KING_05,CC_DUVAL_JCP_06,Mndo05,Mallat08}.
For example, for a classical 1D periodic wavelet basis, 
$M-1$ represents the number of resolution levels
and, for every $\ell$ in subband $\mathbb{K}_m$ at resolution level $m\in \{1,\ldots, M-1\}$, the shift parameter $\mathbf{k}_\ell$ is a multiple of $2^m$ ($\mathbb{K}_M$ being here the index subset related to the approximation subband).

A possible choice for the analysis family $(\varphi_\ell)_{1 \le \ell \le L}$ is then obtained by setting 
\begin{align}
\forall m \in \{1,\ldots,M\}&, \forall \ell \in \mathbb{K}_m,
\forall \mathbf{p}\in \mathbb{D},\quad\nonumber\\
\Phi_\ell(\mathbf{p})&= 
G(\mathbf{p})\Psi_{m,\mathbf{k}_\ell}(\mathbf{p})\nonumber\\
&= \exp(-2\pi \imath
{\mathbf{k}_\ell}^\top{\boldsymbol D}^{-1}{\mathbf{p}}) 
G(\mathbf{p})\Psi_{m,\mathbf{0}}(\mathbf{p})
\label{eq:Phiellden}
\end{align}
where $G(\mathbf{p})$ typically corresponds to the frequency response of an ``inverse'' of the degradation filter.
It can be noticed that a similar choice is made in
the WaveD estimator \cite{PICARD_04} by setting, for every $\mathbf{p} \in \mathbb{Q}$,
$G(\mathbf{p}) = 1/\big(H(\mathbf{p})\big)^*$ (starting from a dyadic Meyer wavelet basis).
By analogy with Wiener filtering techniques, a more general form for the frequency response of this filter can be chosen:
\begin{equation}
G(\mathbf{p}) = \frac{H(\mathbf{p})}{|H(\mathbf{p})|^2+
\lambda}
\label{eq:genG}
\end{equation}
where $\lambda \ge 0$. Note that, due to \eqref{eq:Phiellden}, the 
computation of the coefficients $(r_\ell)_{1 \le \ell \le L}$ 
amounts to the computation of the frame coefficients:
\begin{equation}
\forall m \in \{1,\ldots,M\}, \forall \ell \in \mathbb{K}_m,\qquad
r_\ell = \langle \check{r},\psi_{m,\mathbf{k}_\ell}\rangle
\end{equation}
where $\check{r}$ is the field with discrete Fourier coefficients
\begin{equation}
\check{R}(\mathbf{p})
= \begin{cases}
\displaystyle \big(G(\mathbf{p})\big)^* R(\mathbf{p})  & \mbox{if $\mathbf{p} \in \mathbb{Q}$}\\
0 & \mbox{otherwise.}
\end{cases}
\end{equation}

Concerning the associated synthesis family
$(\widetilde{\varphi}_\ell)_{1\le \ell \le L}$, we simply choose the dual synthesis
frame of $\big((\psi_{m,\mathbf{k}_\ell})_{\ell \in \mathbb{K}_m}\big)_{1 \le m \le M}$which, with a slight abuse of notation,  will be assumed of the form:\linebreak
$\big((\widetilde{\varphi}_{m,\mathbf{k}_\ell})_{\ell \in \mathbb{K}_m}\big)_{1 \le m \le M}$
where, for every $\ell \in \mathbb{K}_m$, $\widetilde{\varphi}_{m,\mathbf{k}_\ell}$
denotes the $\mathbf{k}_\ell$-periodically shifted version of 
$\widetilde{\varphi}_{m,\mathbf{0}}$.
So, basically the restoration method can be summarized by Fig. \ref{fig:method}.

\begin{figure*}[h!tb]
\begin{center}
 \input{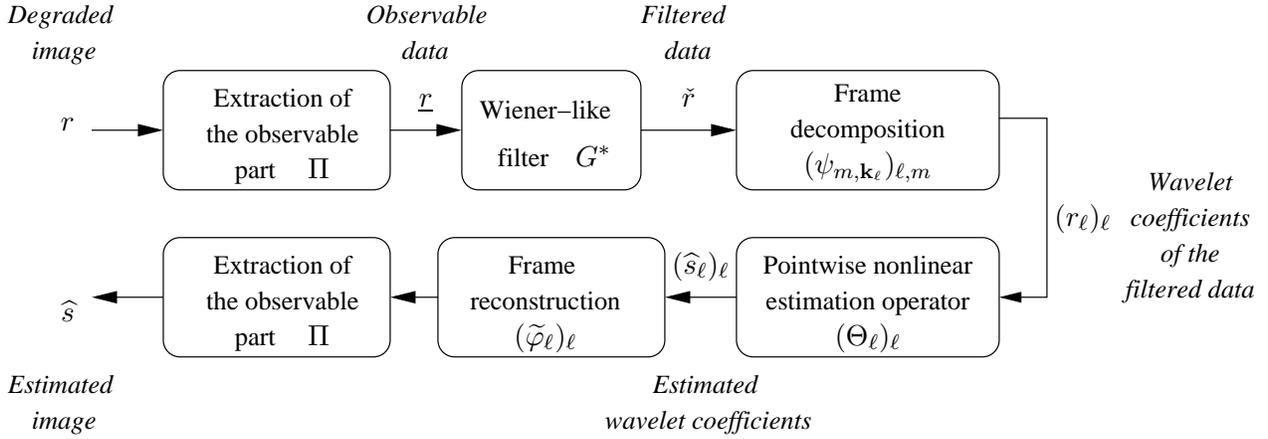}
\end{center}
\caption{Restoration method.
\label{fig:method}}
\end{figure*}

With these choices, it can be deduced from \eqref{eq:devintertermfni} that 
\begin{multline}
\forall m \in \{1,\ldots,M\}, \forall \ell \in \mathbb{K}_m,\qquad\\
\overline{\gamma}_\ell = \frac{1}{D} 
\sum_{\mathbf{p} \in \mathbb{Q}} \frac{\Psi_{m,\mathbf{0}}(\mathbf{p})
\big(\widetilde{\Phi}_{m,\mathbf{0}}(\mathbf{p})\big)^*}
{|H(\mathbf{p})|^2+ \lambda}.
\end{multline}
This shows that only $M$ values of $\overline{\gamma}_\ell$ need to
be computed (instead of $L$). Similarly, simplified forms of the constants 
$(\overline{\gamma}_{\ell,i})_{ 1\le \ell,i \le L}$ and $(\kappa_\ell)_{1 \le \ell
\le L}$ as defined by \eqref{eq:nf5iii} and \eqref{eq:defkappa} can be easily obtained.

\subsection{Choice of estimating functions} \label{se:choiceest}
We will employ LET estimating functions due to the simplicity of their optimization, as explained in Section~\ref{subse:LET}. More precisely, the following 
two possible forms will be investigated in this work:
\begin{itemize}
\item nonlinear estimating function in \cite{BLU_LUISIER_07}: we set $I_m = 2$,
take for $f_{m,1}$ the identity function and choose
\begin{equation}
\forall \rho \in \mathbb{R},
f_{m,2}(\rho) = \left(1-\exp\Big(-\frac{\rho^8}{(\omega \sigma_m)^8}\Big)\right)\rho
\label{eq:Blu}
\end{equation}
where $\omega \in ]0,\infty[$ 
 and $\sigma_m$ is the standard deviation
of $(n_\ell)_{\ell \in \mathbb{K}_m}$. According to \eqref{eq:redefnl} and \eqref{eq:Phiellden},
we have, for any $\ell \in \mathbb{K}_m$,
$
\sigma_m^2 = \gamma D^{-1} \sum_{\mathbf{p}\in\mathbb{Q}}|\Phi_\ell(\mathbf{p})|^2 = \gamma D^{-1} \sum_{\mathbf{p}\in\mathbb{Q}}|G(\mathbf{p})|^2|\Psi_{m,\mathbf{0}}(\mathbf{p})|^2$.

\item nonlinear estimating function in \cite{PES_LEPO_97}: again, we set $I_m =2$,
and take for $f_{m,1}$ the identity function but, we choose:
\begin{align}
&\forall \rho \in \mathbb{R},\qquad\nonumber\\
&f_{m,2}(\rho) = \left(\tanh\Big(\frac{\rho+\xi\sigma_m}{\omega'\sigma_m}\Big)
- \tanh\Big(\frac{\rho-\xi\sigma_m}{\omega'\sigma_m}\Big)\right)\rho
\label{eq:sigmo}
\end{align}
where $(\xi,\omega')\in ]0,\infty[^2$
and $\sigma_m$ is defined as for the previous
estimating function.
\end{itemize}

\section{Experimental results}
\label{sec:simuls}
\subsection{Simulation context}
In our experiments, the test data set contains six 8-bit images of size 512 $\times$ 512 which are displayed in Fig. \ref{fig:orig}. 
Different convolutions have been applied:
(i) $5 \times 5$ and $7 \times 7$ uniform blurs,
(ii) Gaussian blur with standard deviation $\sigma_h$ equal to $2$,
(iii) cosine blur defined by: $\forall (p_1,p_2) \in \{0,\ldots,D_1-1\}
\times \{0,\ldots,D_2-1\}$, $H(p_1,p_2) = H_1(p_1) H_2(p_2)$
where
\begin{multline}
\forall i \in \{1,2\},\qquad\\
H_i(p_i) = \begin{cases}
1 & \mbox{if $0\le p_i \le F_c D_i$}\\
\displaystyle
\cos\Big(\frac{\pi(p_i-F_c D_i)}{(1-2F_c)D_i}\Big)& \mbox{if $F_c D_i \le p_i \le D_i/2$}\\
\big(H_i(D_i-p_i)\big)^* & \mbox{otherwise}
\end{cases} 
\end{multline}
with $F_c \in [0,1/2)$,
(iv) Dirac (the restoration problem then reduces to a 
denoising problem)
and, realizations of a zero-mean white Gaussian noise have been added to the blurred images. The noise variance $\gamma$ is chosen so that the averaged blurred signal to noise ratio $\mathrm{BSNR}$  reaches a given target value, where
$
 \mathrm{BSNR}\,\eqdef\, 10\log_{10}\left(\parallel \widetilde{h}*s\parallel^2/(D\gamma)\right)$.
The performance of a restoration method is measured by the averaged Signal to
Noise Ratio:
$
\mathrm{SNR}\,\eqdef\, 
10\log_{10}\left(\widehat{\E}[s^2]/\widehat{\E}[(s-\widehat{s})^2]\right)$
where $\widehat{\E}$ denotes the spatial average operator.
In our simulations, we have chosen the set $\mathbb{P}$ as given by
\eqref{eq:defPchi} where the threshold value $\chi$ 
is automatically adjusted so as to secure a reliable estimation
of the risk while maximizing the size of the set $\mathbb{Q}$.
In practice, $\chi$ has been set, through a dichotomic search, 
to the smallest positive value such that
$\widehat{\mathcal{E}}_o > 10 \sqrt{V_{\mathrm{max}}}$,
where $V_{\mathrm{max}}$ is an upper bound of 
$\mathsf{Var}[\mathcal{E}_o-\widehat{\mathcal{E}}_o]$.
This bound has been derived from \eqref{eq:varsteinf} under
some simplifying assumptions aiming at facilitating
its computation. In an empirical manner, the parameter $\lambda$ in \eqref{eq:genG} has been chosen
proportional to the ratio of the noise variance to the variance of the blurred
image, by taking $\lambda = 3 \gamma/(\widehat{\E}[r^2]-(\widehat{\E}[r])^2-\gamma)$. The other parameters of the method have been set to 
$\omega = 3$ in \eqref{eq:Blu} and $(\xi,\omega') = (3.5,2.25)$ 
in \eqref{eq:sigmo}.

To validate our approach, we have made comparisons with
state-of-the-art wavelet-based restoration methods and some other restoration approaches. For all these methods, 
symlet-8 wavelet decompositions performed over 4 resolution levels have been used  \cite{Daub92}. 
The first approach is the ForWaRD method\footnote{A Matlab toolbox can be downloaded from \textsf{http://www.dsp.rice.edu/software/ward.shtml}.} which employs a translation invariant wavelet representation \cite{NASON_SILV_95,Pesq96}. The ForWaRD estimator has been applied with an optimized value of the regularization parameter. 
The same translation invariant wavelet decomposition is used for the proposed SURE-based method.
The second method we have tested is the TwIST\footnote{A Matlab toolbox can be downloaded from \textsf{http://www.lx.it.pt/$\sim$bioucas/code.htm}.} algorithm \cite{BIOUCAS_FIGUEIREDO_07} considering a total variation penalization term.
The third approach is the variational method in \cite[Section 6]{CC_PES_COMB_07} (which extends the method in \cite{DEMOL_05}) where we use
a tight wavelet frame consisting of the union of four shifted
orthonormal wavelet decompositions. The shift parameters are $(0,0)$, $(1,0)$, $(0,1)$ and $(1,1)$. We have also included in our comparisons the results
obtained with the classical Wiener filter and with a least squares optimization
approach using a Laplacian regularization operator.\footnote{We use
the implementations of these methods provided in the Matlab Image Processing
Toolbox, assuming that the noise level is known.}

\begin{figure*}[h!tb]
\begin{center}
\begin{tabular}{cccccc}
\includegraphics[width=2.5cm]{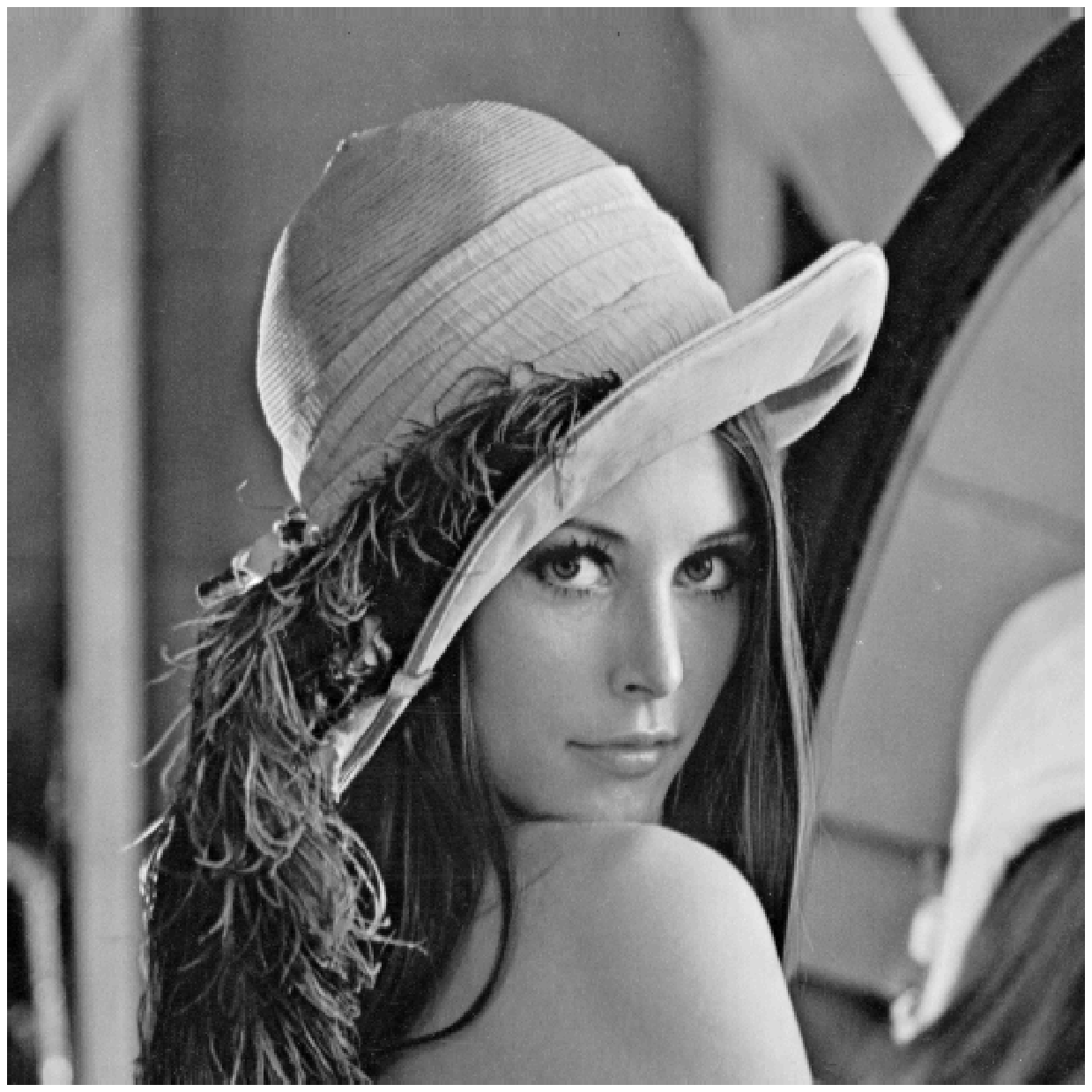} & \includegraphics[width=2.5cm]{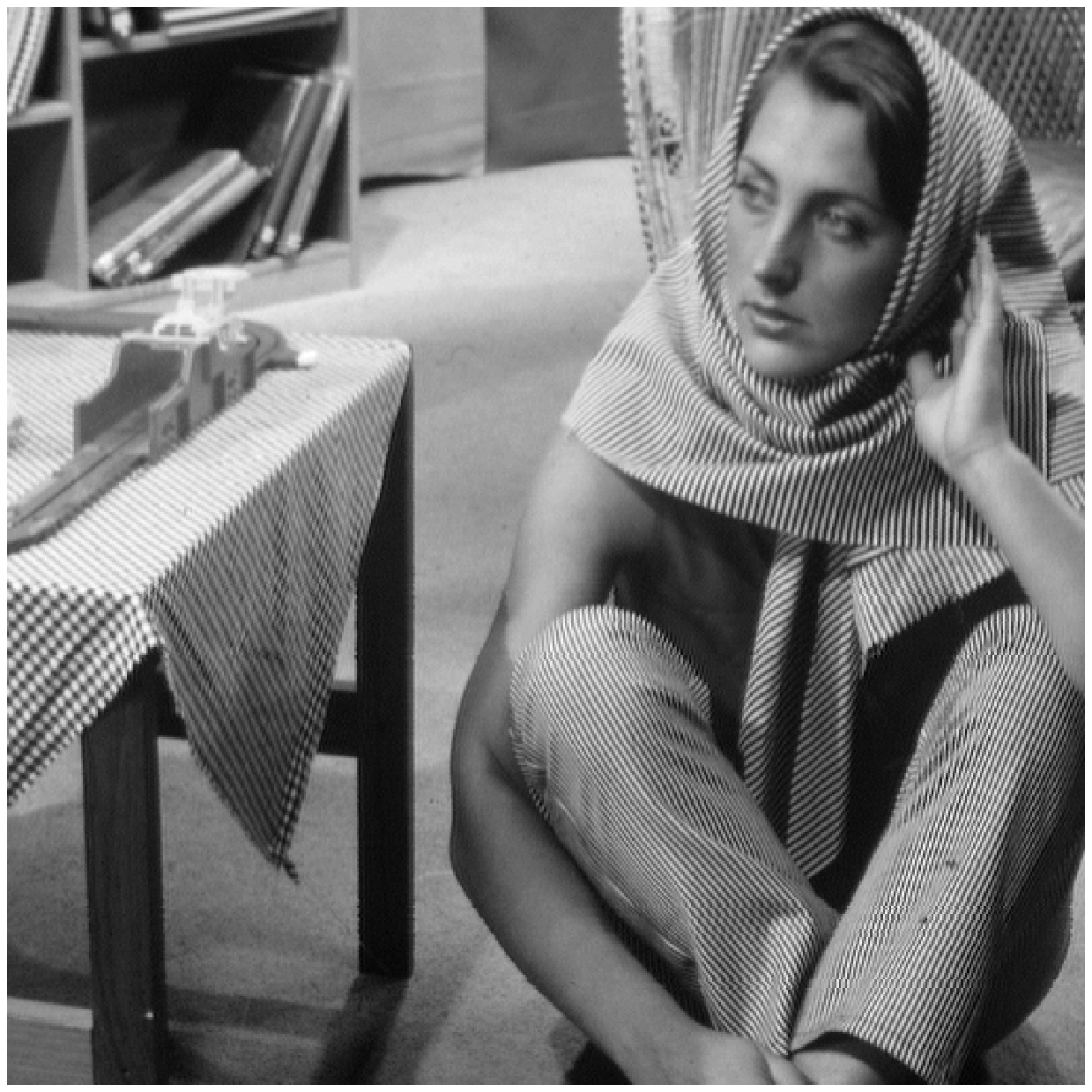} & \includegraphics[width=2.5cm]{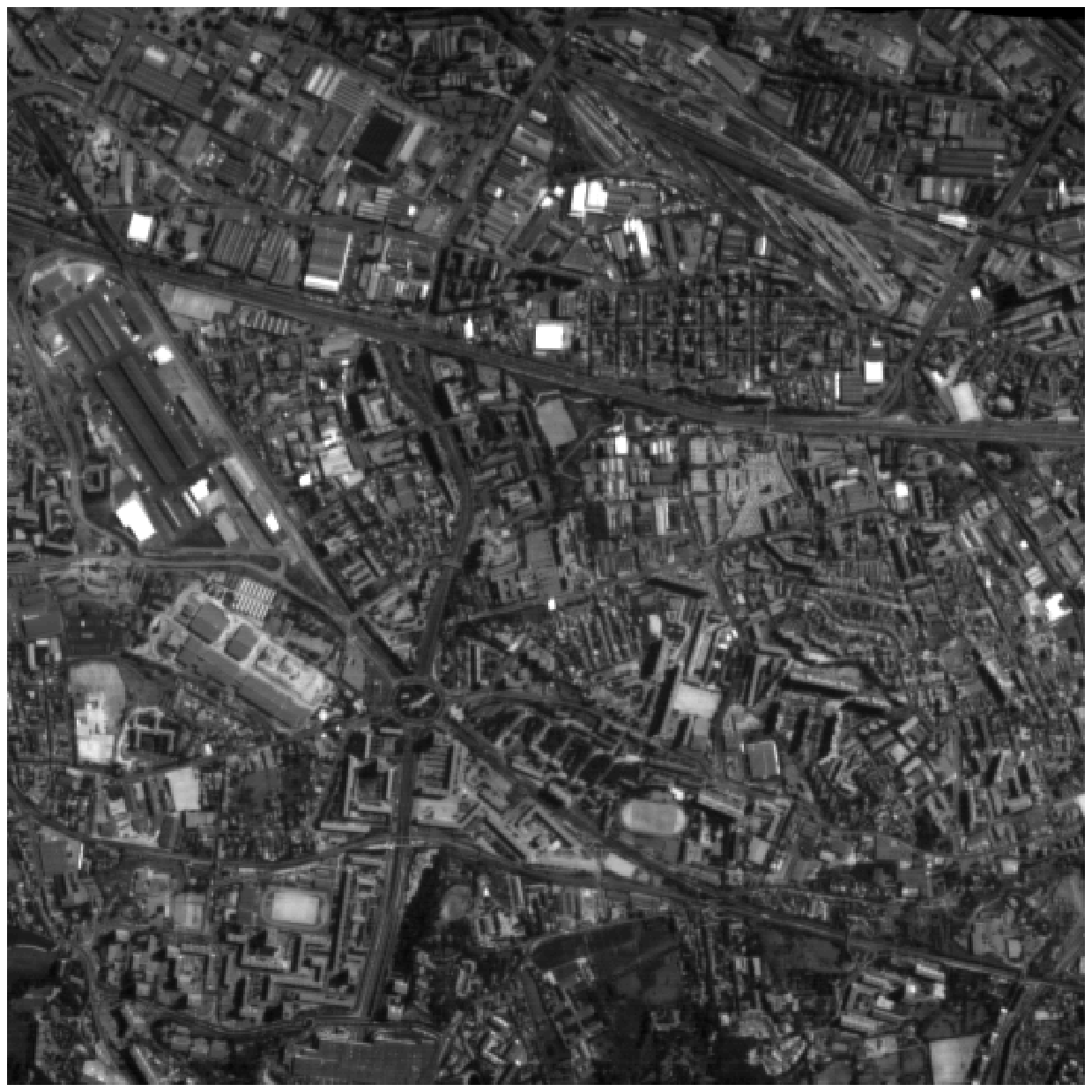}
& \includegraphics[width=2.5cm]{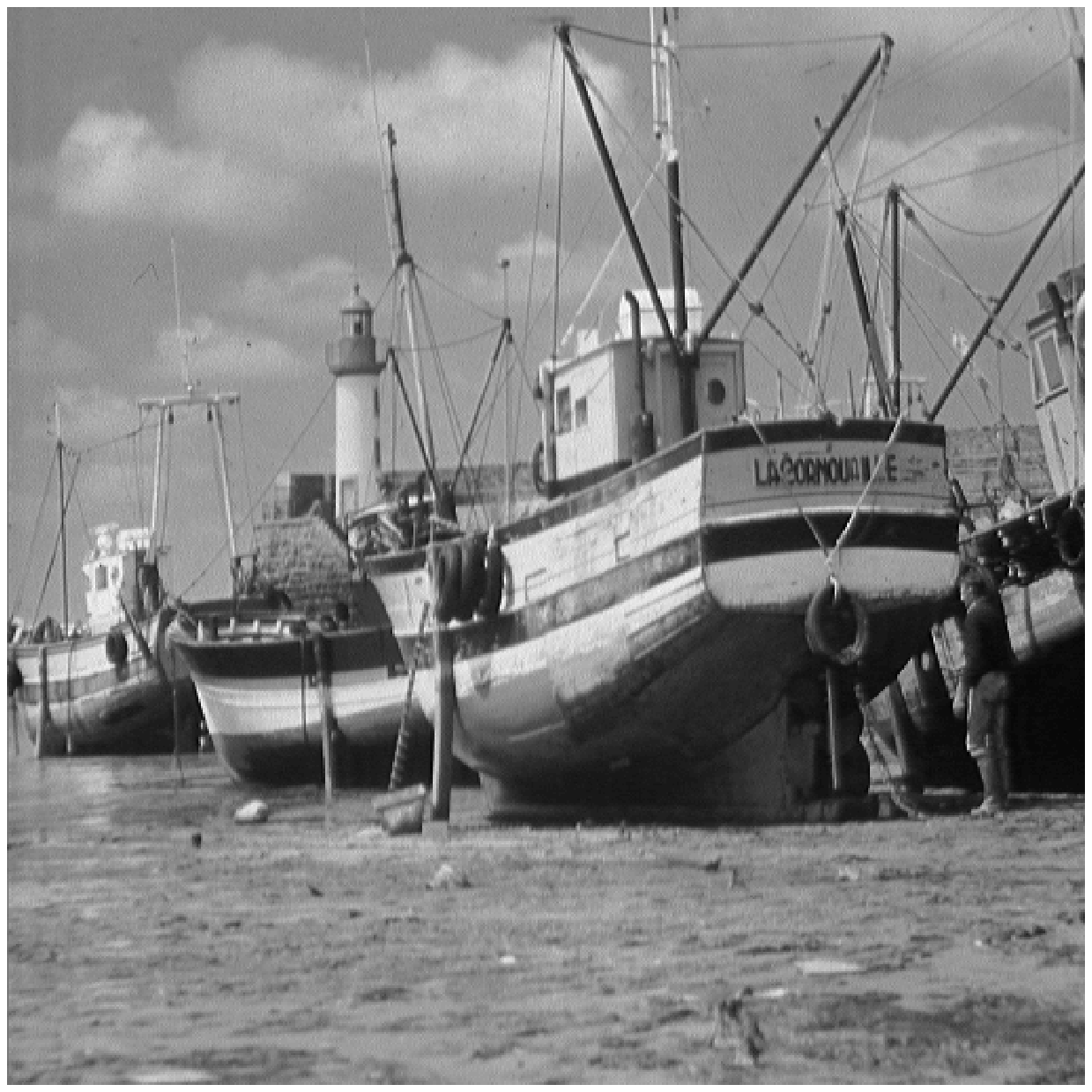} & \includegraphics[width=2.5cm]{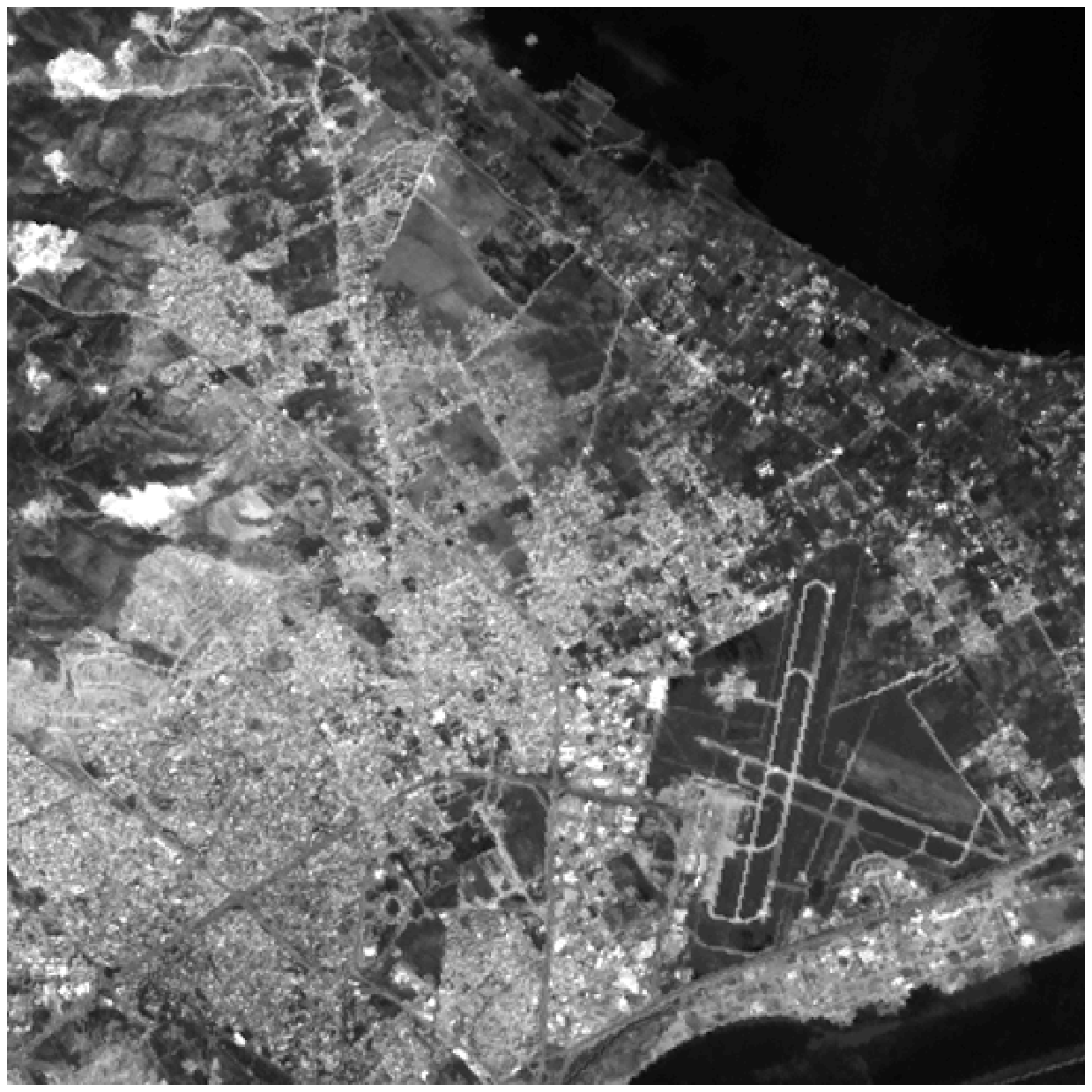} & \includegraphics[width=2.5cm]{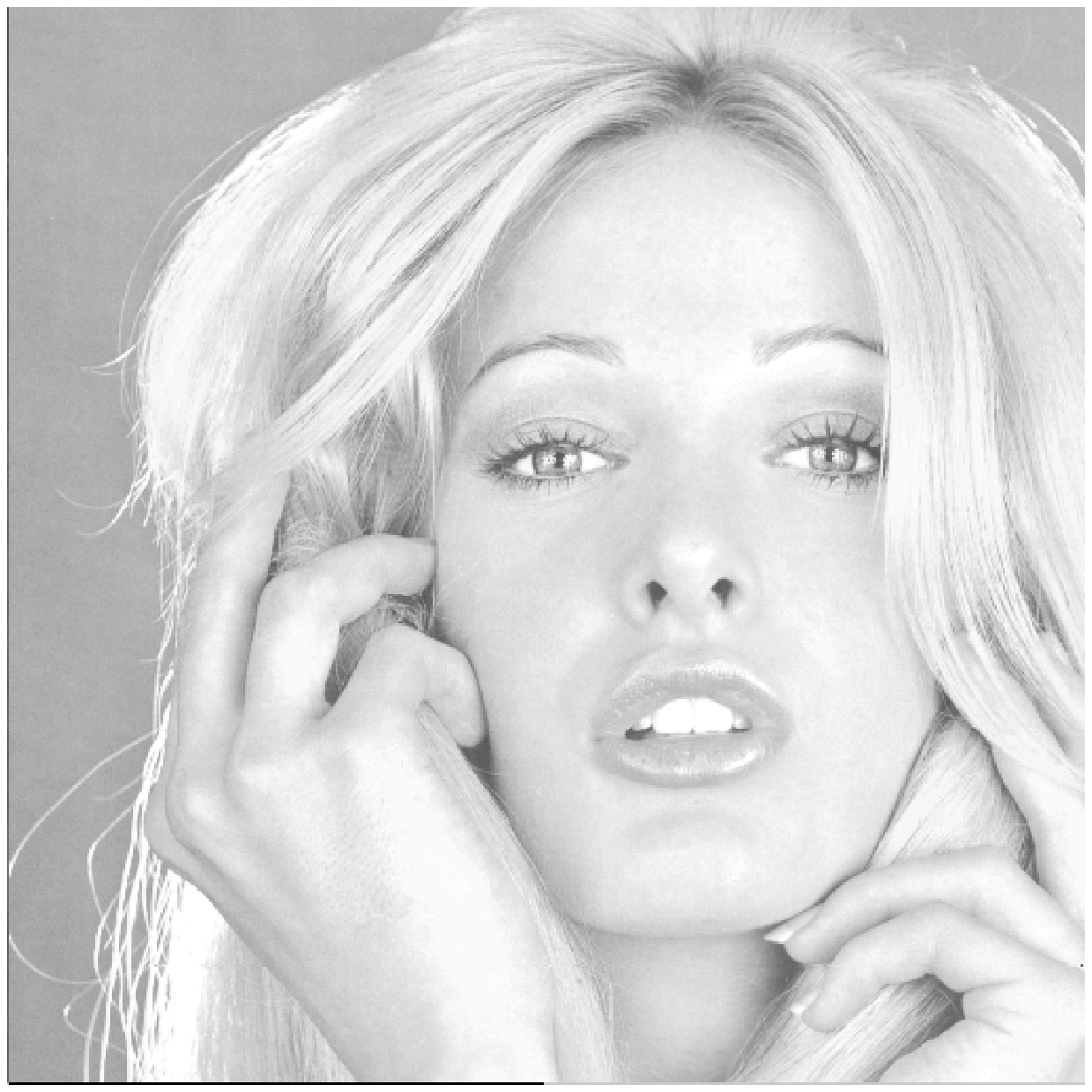} \\
(a) Lena & (b) Barbara & (c) Marseille &
(d) Boat & (e) Tunis & (f) Tiffany\\
\end{tabular}
\end{center}
\caption{Original images: (a) Lena, (b) Barbara, (c) Marseille, (d) Boat, (e) Tunis and (f) Tiffany. \label{fig:orig}}
\end{figure*}
\subsection{Numerical results}
Table \ref{tab:images_unif} provides the values of the SNR achieved by the different considered techniques for several values of the BSNR and a given form of blur (uniform $5 \times 5$) on the six test images.
All the provided quantitative results are median values computed over 10 noise
realizations.
It can be observed that, whatever the considered image is, 
SURE-based restoration methods generally lead to significant gains
w.r.t. the other approaches, especially for low BSNRs.
Furthermore, the two kinds of nonlinear estimating function which have been evaluated lead to almost identical results.
It can also be noticed that the ForWaRD and TwIST methods perform quite well
in terms of MSE for high BSNR.
However, by examining more carefully the restored images,
it can be seen that these methods may better recover uniform areas, at the expense of a loss
of some detail information which is better preserved by the considered SURE-based
method. This behaviour is 
visible on Fig. \ref{fig:rest_im}
where the proposed approach allows us to better recover Barbara's stripe trouser.

Table \ref{tab:tunis_blur} provides the SNRs obtained with the different techniques for several values of the $\mathrm{BSNR}$ and various blurs on Tunis image (see Fig. \ref{fig:orig} (e)). The reported results
allow us to confirm the good performance of SURE-based methods.
The lower performance of the wavelet-based variational approach may be related to the fact that it requires the estimation of the hyperparameters of the prior distribution of the wavelet 
coefficients.
This estimation has been performed by a maximum likelihood approach which is suboptimal in terms of mean square restoration error. The results at the bottom-right
of Table \ref{tab:tunis_blur} are in agreement with those
in \cite{BLU_LUISIER_07,RAPHAN_SIMONC_08} showing
the outperformance of LET estimators for denoising problems.
The poorer results obtained with ForWaRD in this case indicate that this method is tailored for deconvolution problems.

In the previous experiments, for all the considered methods, the noise variance $\gamma$ was assumed to be known.
Table \ref{tab:tunis_blur_est} gives the SNR values obtained with the different techniques for several noise levels and various blurs on Tunis image, when the noise variance is estimated via the classical median absolute deviation (MAD) wavelet estimator \cite[p. 447]{MALLAT_98}. One can observe that the results are close to  the case when the noise variance is known, except when the problem reduces to a denoising problem associated with a high BSNR. In this case indeed, the MAD estimator does not provide a precise estimation of the noise variance.
However, the restoration results are still satisfactory.

\begin{table*}[h!tb]
\caption{Restoration results for a $5 \times 5$ uniform blur:
initial SNR ($\mathrm{SNR\_i}$) and SNR obtained with our approach
using the nonlinear function in \eqref{eq:Blu}
($\mathrm{SNR\_b}$), our approach using the nonlinear function in 
\eqref{eq:sigmo} ($\mathrm{SNR\_s}$),  ForWaRD
($\mathrm{SNR\_f}$), TwIST ($\mathrm{SNR\_t}$), the wavelet-based variational approach ($\mathrm{SNR\_v}$), the Wiener filter ($\mathrm{SNR\_w}$) and the regularized quadratic method~($\mathrm{SNR\_r}$).
\label{tab:images_unif}}
\begin{center}
\begin{tabular}{|c||c||c|c|c|c|c||c||c||c|c|c|c|c|}
\hline
Image & BSNR & 10 & 15 &  20 & 25 & 30 & Image & BSNR & 10 & 15 &  20 & 25 & 30\\
\hline
\hline
& SNR\_i & 9.796 & 14.26 & 17.89 & 20.21 & 21.31  &  & SNR\_i & 9.713 & 14.03 & 17.39 & 19.39 & 20.28 \\
 & SNR\_b & \textbf{19.98} & \textbf{21.36} & 22.43 & 23.48 & 24.62 & & SNR\_b & \textbf{19.09} & \textbf{20.27} & \textbf{21.31} & \textbf{22.37} & 23.59 \\
 & SNR\_s & 19.93 & 21.29 & 22.39 & 23.45 & 24.58 & & SNR\_s & 19.05 & 20.22 & 21.27 & 22.36 & 23.57 \\
Lena& SNR\_f & 18.27 & 20.04 & 21.36 & 23.35 & 24.63 & Boat & SNR\_f & 16.75 & 19.05 & 20.68 & 22.14 & 23.49 \\ 
 & SNR\_t & 19.52 & 21.17 & \textbf{22.79} & \textbf{23.90} & \textbf{24.91} & & SNR\_t & 18.29 & 19.67 & 21.09 & 22.35 &  \textbf{23.80} \\
& SNR\_v & 17.91 & 20.01 & 21.34 & 22.41 & 23.42  & & SNR\_v & 14.37 & 17.46 & 19.31 & 20.54 & 21.88 \\
& SNR\_w & 15.82 & 19.69 & 21.54 & 22.23 & 22.46  & & SNR\_w & 15.80 & 19.19 & 20.56 & 21.02 & 21.17 \\
 & SNR\_r & 18.70 & 20.06 & 21.25 & 22.18 & 22.43  & & SNR\_r & 18.18 & 19.31 & 20.35 & 20.96 & 21.15 \\
\hline
\hline
 & SNR\_i & 9.366 & 13.10 & 15.54 & 16.72 & 17.17 & & SNR\_i & 9.713 & 14.03 & 17.37 & 19.36 & 20.24 \\
  & SNR\_b & \textbf{17.02} & \textbf{17.54} & 18.05 & \textbf{18.79} & \textbf{19.75} & & SNR\_b & \textbf{18.63} & \textbf{19.73} & \textbf{20.75} & \textbf{21.72} & \textbf{22.66} \\
  & SNR\_s & 16.99 & 17.52 & \textbf{18.06} & 18.77 & 19.63 & & SNR\_s & 18.62 & \textbf{19.73} & 20.73 & 21.70 & \textbf{22.66} \\
Barbara & SNR\_f & 16.14 & 17.04 & 17.62 & 18.56 & 19.49 & Tunis & SNR\_f & 16.57 & 18.54 & 19.99 & 21.20 & 22.29 \\
 & SNR\_t & 16.74 & 17.45 & 17.95 & 18.38 & 19.07 & & SNR\_t & 18.03 & 18.94 & 20.35 & 21.50 & 22.56 \\
 & SNR\_v & 16.41 & 17.26 & 17.76 & 18.32 & 18.94  & & SNR\_v & 17.45 & 18.73 & 19.60 & 20.54 & 21.56 \\
 & SNR\_w & 14.53 & 16.92 & 17.78 & 18.02 & 18.10 & & SNR\_w & 15.80 & 19.07 & 20.37 & 20.79 & 20.92 \\
 & SNR\_r & 16.53 & 17.11 & 17.58 & 17.92 & 18.10 & & SNR\_r & 18.13 & 19.20 & 20.17 & 20.76 & 20.91\\
 \hline
\hline
 & SNR\_i & 8.926 & 11.93 & 13.57 & 14.25 & 14.49 & & SNR\_i & 9.923 & 14.73 & 19.15 & 22.72 & 24.95 \\
 & SNR\_b & 13.92 & \textbf{15.12} & \textbf{16.21} & 17.27 & 18.46 & & SNR\_b & \textbf{24.18} & \textbf{25.28} & 26.13 & 26.92 & 28.09 \\
 & SNR\_s & \textbf{13.93} & \textbf{15.12} & \textbf{16.21} & \textbf{17.28} & \textbf{18.48} & & SNR\_s & 24.16 & 25.24 & 26.11 & 26.92 & 28.09 \\ 
Marseille & SNR\_f & 13.10 & 14.42 & 15.75 & 17.00 & 18.32 & Tiffany & SNR\_f & 17.93 & 21.72 & 23.67 & 26.53 & \textbf{28.12} \\ 
 & SNR\_t & 12.74 & 14.19 & 15.52 & 16.87 & 18.36 & & SNR\_t & 23.08 & 25.17 & \textbf{26.23} & \textbf{27.20} & 27.85 \\
 & SNR\_v & 13.57 & 14.99 & 16.12 & 16.92 & 17.66 & & SNR\_v & 21.81 & 24.47 & 25.63 & 26.44 & 27.46 \\
 & SNR\_w & 12.60 & 14.70 & 15.53 & 15.81 & 15.90 & & SNR\_w & 18.01 & 23.13 & 25.48 & 26.28 & 26.53 \\
 & SNR\_r & 13.25 & 14.43 & 15.45 & 15.78 & 15.90 & & SNR\_r & 23.65 & 24.60 & 25.42 & 26.12 & 26.44 \\
 \hline
\end{tabular}\end{center}
\end{table*}



\begin{figure*}[h!tb]
\begin{center}
\begin{tabular}{cc}
\includegraphics[width=3.7cm]{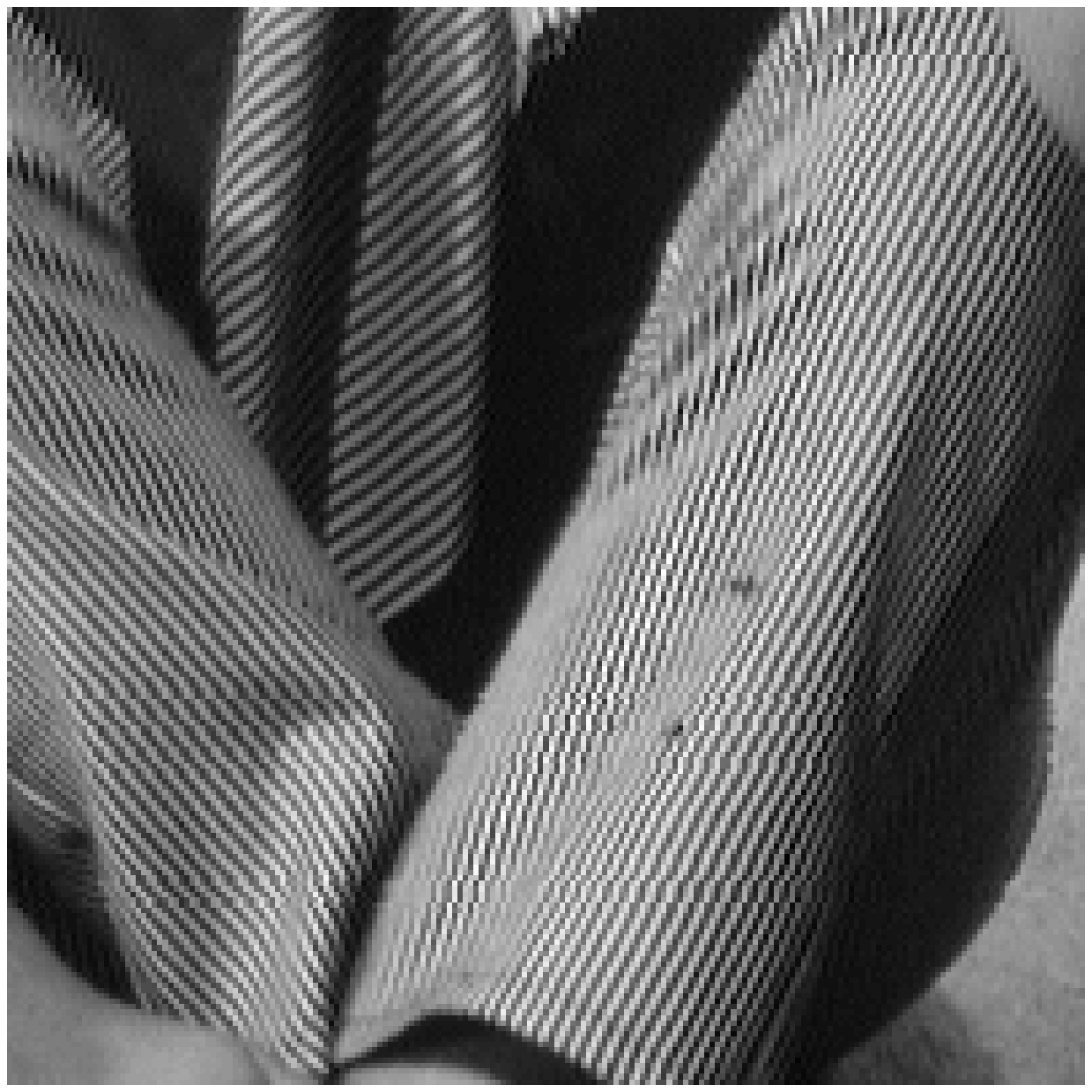}	& \includegraphics[width=3.7cm]{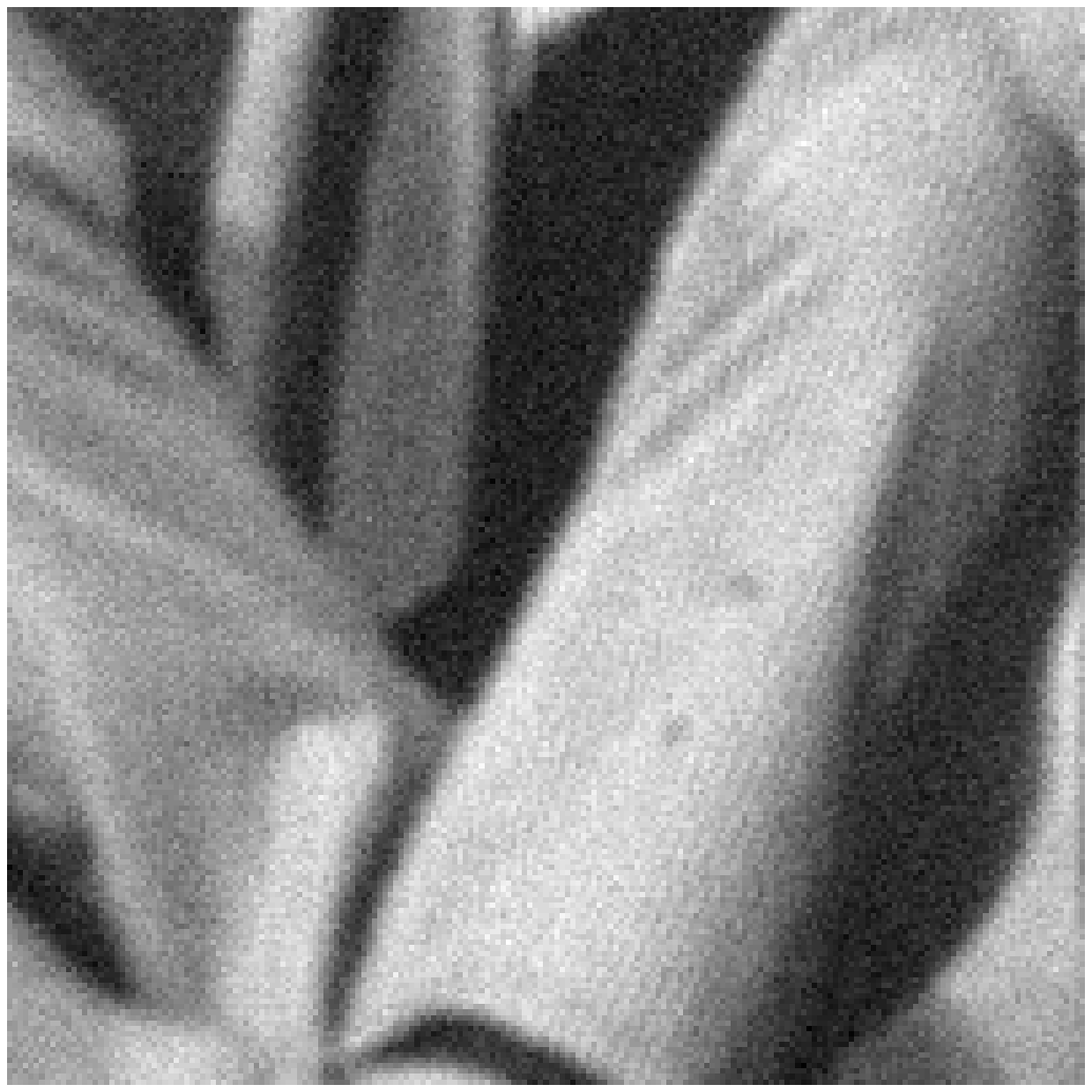} \\
(a) & (b) \\
\end{tabular}
\begin{tabular}{ccc}
\includegraphics[width=3.7cm]{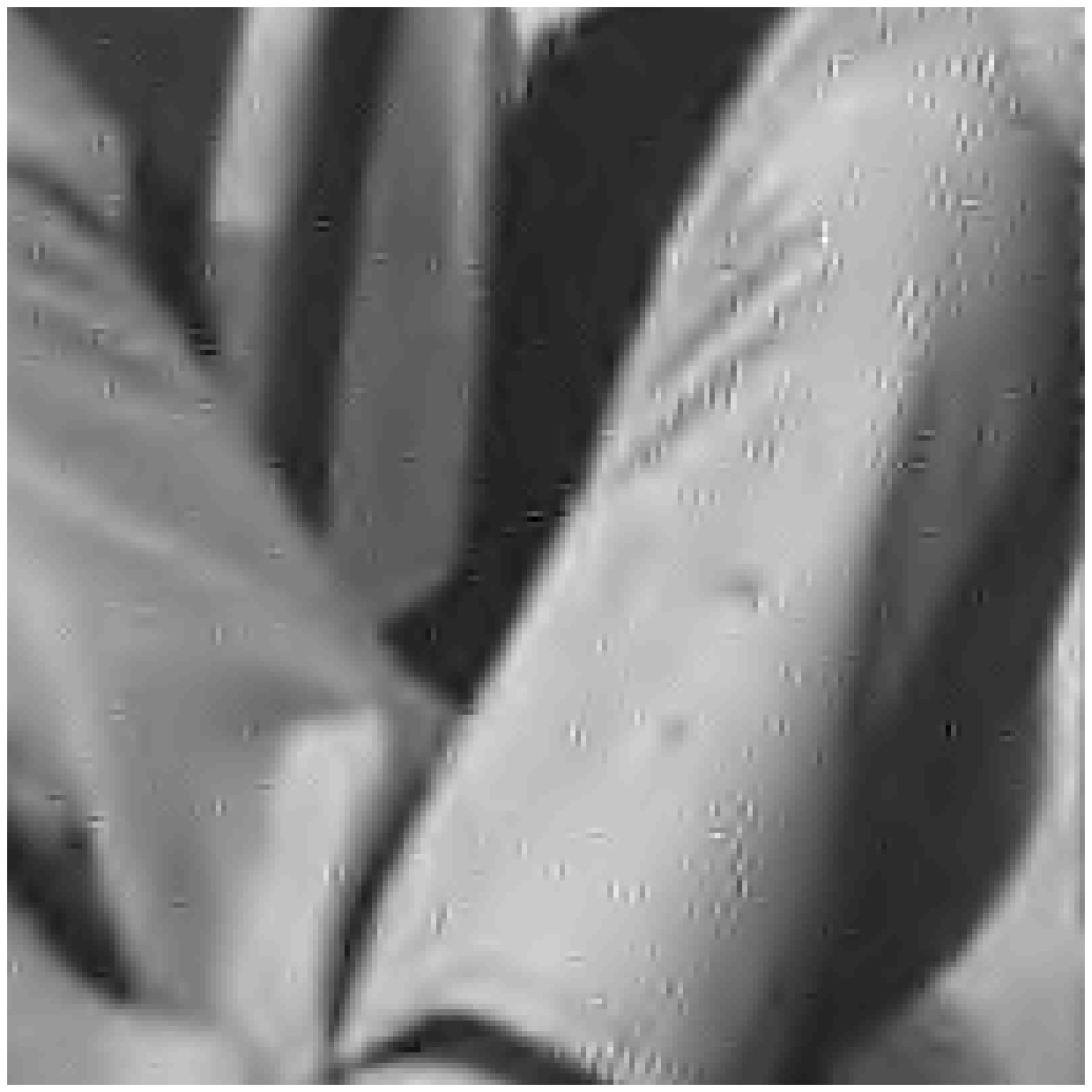}
& \includegraphics[width=3.7cm]{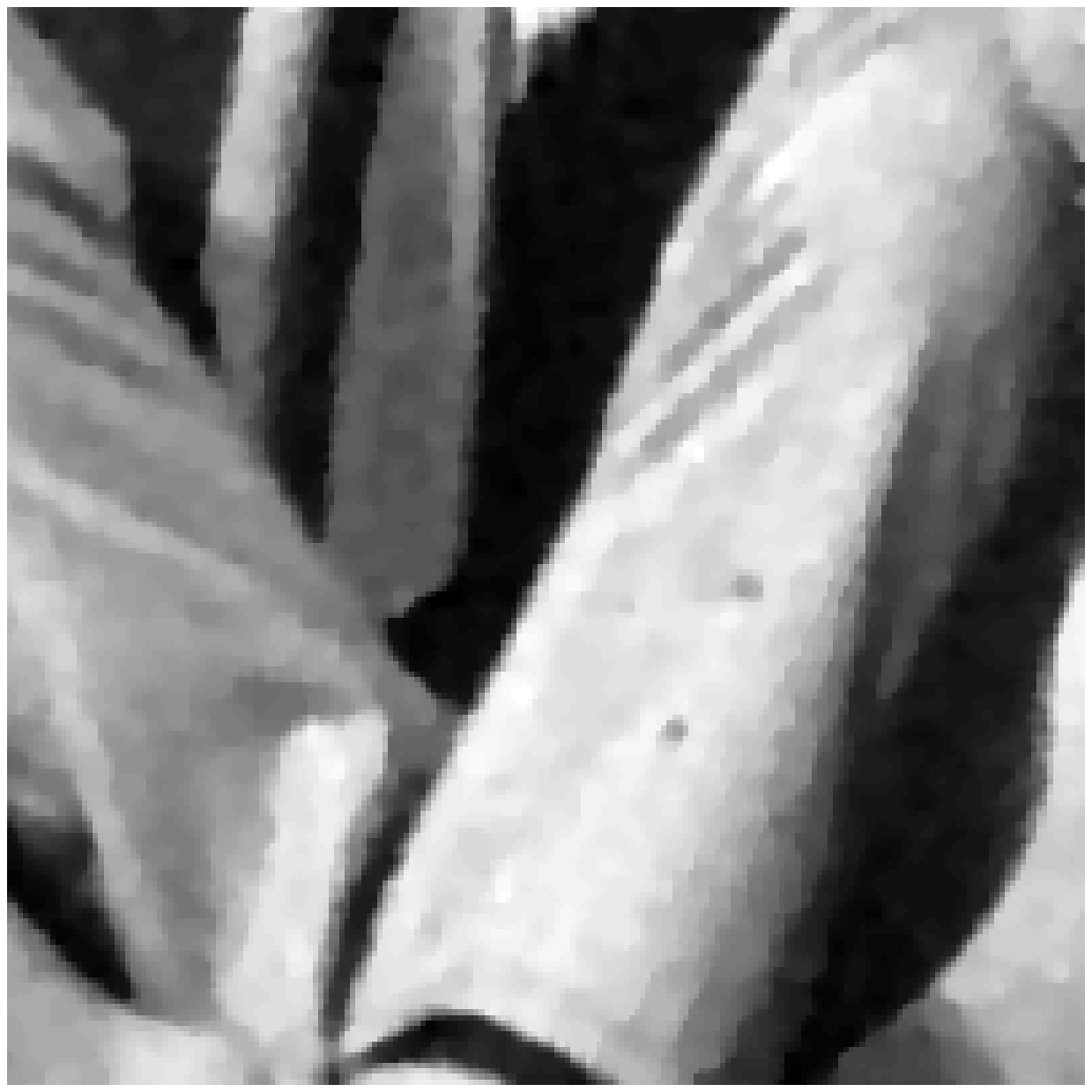}
& \includegraphics[width=3.7cm]{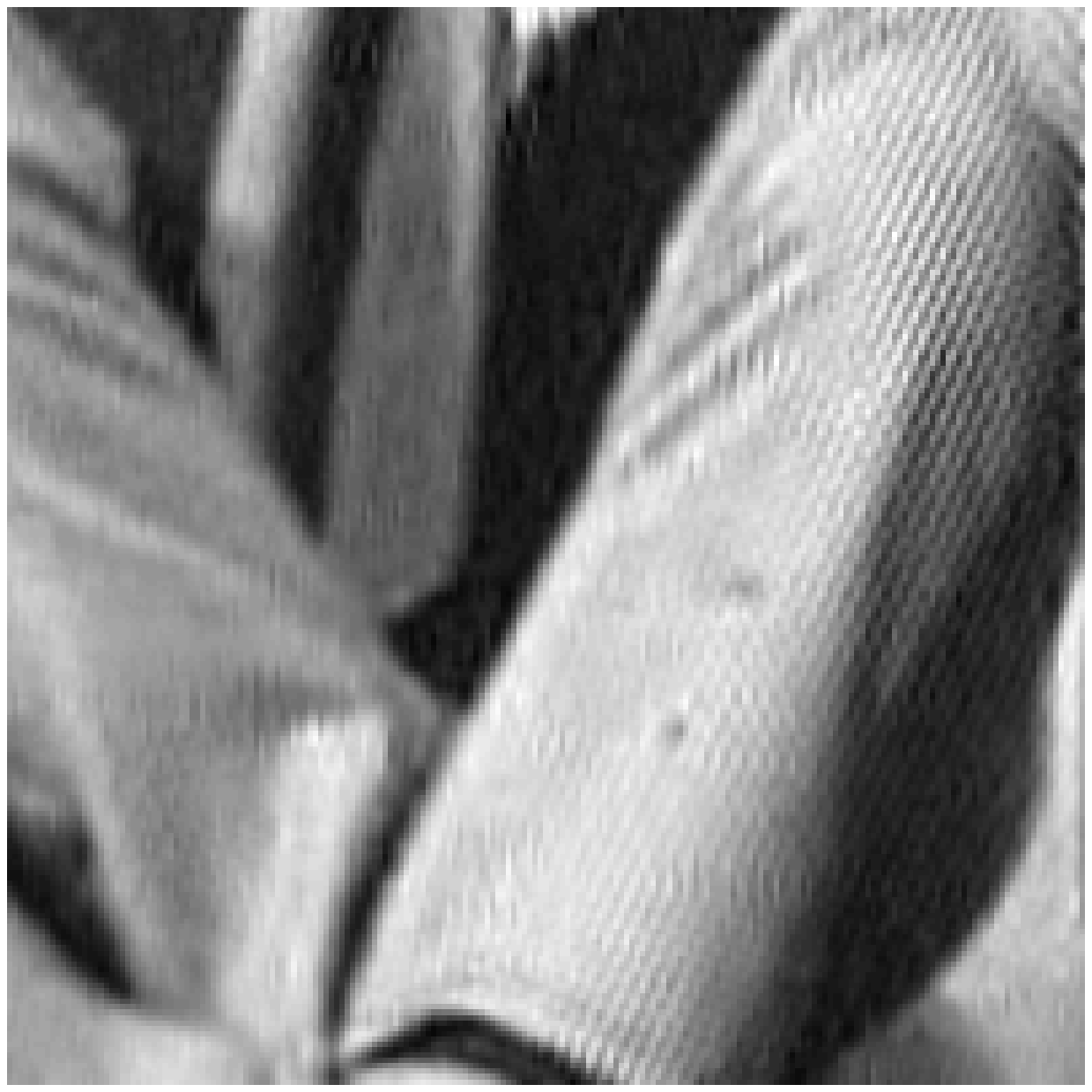} \\
(c) & (d) & (e)\\
\end{tabular}
\caption{Zooms on Barbara image, $\mathrm{BSNR}=25$ dB; (a) Original, (b) Degraded, (c) Restored with ForWaRD, (d) Restored with TwIST
and (e) Restored with the proposed method 
using \eqref{eq:sigmo}. \label{fig:rest_im}}
\end{center}
\end{figure*}

\begin{table*}[ht]
\caption{Tunis image restoration for various blurs.\label{tab:tunis_blur}}
\begin{center}
\begin{tabular}{|c||c||c|c|c|c|c||c||c||c|c|c|c|c|}
\hline
Blur & BSNR & 10 & 15 &  20 & 25 & 30 & Blur & BSNR & 10 & 15 &  20 & 25 & 30\\
\hline
\hline
 & SNR\_i & 9.662 & 13.86 & 17.01 & 18.80 & 19.56  & & SNR\_i & 9.577 & 13.64 & 16.56 & 18.13 & 18.77 \\
 & SNR\_b & \textbf{18.16} & \textbf{19.11} & \textbf{20.00} & \textbf{20.83} & \textbf{21.57} & & SNR\_b & \textbf{18.06} & \textbf{18.87} & \textbf{19.57} & \textbf{20.30} & \textbf{21.14} \\
Gaussian & SNR\_s & \textbf{18.16} & 19.10 & 19.99 & 20.81 & 21.56 &  Uniform & SNR\_s & 18.05 & 18.86 & \textbf{19.57} & 20.29 & \textbf{21.14} \\ 
$\sigma_h=2$ & SNR\_f & 17.52 & 18.57 & 19.61 & 20.57 & 21.43 & $7 \times 7$ & SNR\_f & 16.54 & 18.02 & 19.15 & 20.10 & 20.97 \\ 
 & SNR\_t & 17.12 & 18.32 & 19.47 & 20.40 & 21.33 &  & SNR\_t & 17.39 & 18.21 & 19.22 & 20.04 & 20.94 \\
 & SNR\_v & 17.39 & 18.98 & 19.84 & 20.63 & 21.45  & & SNR\_v & 17.39 & 18.81 & 19.53 & 20.27 & 21.13 \\ 
& SNR\_w & 16.24 &  18.70 & 19.49 & 19.57 & 19.62 & & SNR\_w & 16.06 &  18.45 &  19.11  & 19.28 &  19.33 \\ 
 & SNR\_r & 17.73 & 18.63 & 19.35 & 19.57 & 19.62 &  & SNR\_r & 17.64 & 18.50  & 19.09 & 19.26 &  19.33\\
\hline
\hline
 & SNR\_i & 9.971 & 14.87 & 19.57 & 23.74 & 26.83 & & SNR\_i & 10.00 & 15.00 & 20.00 & 25.00 & 30.00 \\ 
  & SNR\_b & \textbf{19.82} & \textbf{21.87} & \textbf{24.05} & \textbf{26.16} & \textbf{27.99} & & SNR\_b & \textbf{19.97} & \textbf{22.30} & \textbf{25.05} & \textbf{28.25} & \textbf{31.97} \\
Cosine & SNR\_s & \textbf{19.82} & 21.85 & 24.02 & 26.13 & 27.98 & & SNR\_s & 19.96 & 22.27 & 25.02 & 28.20 & 31.91 \\ 
$F_c=3/32$  & SNR\_f & 17.94 & 19.85 & 22.48 & 24.79 & 26.82 & Dirac  & SNR\_f & 5.701 & 11.68 & 19.10 & 25.96 & 31.41 \\ 
 & SNR\_t & 18.52 & 21.18 & 23.81 & 25.94 & 27.78 &  & SNR\_t & 19.82 & 22.03 & 24.27 & 27.18 & 31.00\\
 & SNR\_v & 18.30 & 21.17 & 23.45 & 25.61 & 27.41 & & SNR\_v & 18.10 & 21.02 & 23.44 & 26.58 & 30.18 \\ 
& SNR\_w & 12.63 & 17.50 & 21.98 & 25.61 & 27.89 & & SNR\_w & 10.42 &  15.13 & 20.04 & 25.01 & 30.00 \\
 & SNR\_r & 19.23 & 21.03 & 23.00 & 25.03 & 27.04 & & SNR\_r & 19.39 & 21.37 & 23.76 & 26.58 & 29.91 \\
 \hline
\end{tabular}
\end{center}
\end{table*}

\begin{table*}[ht]
\caption{Tunis image restoration for various blurs when the noise variance is estimated with the MAD estimator.\label{tab:tunis_blur_est}}
\begin{center}
\hspace*{-1cm}
\begin{tabular}{|c||c||c|c|c|c|c||c||c||c|c|c|c|c|}
\hline
Blur & BSNR & 10 & 15 &  20 & 25 & 30 & Blur & BSNR & 10 & 15 &  20 & 25 & 30\\
\hline
\hline
 & SNR\_i & 9.662 & 13.86 & 17.01 & 18.80 & 19.56  & & SNR\_i & 9.577 & 13.64 & 16.56 & 18.13 & 18.77 \\
 Gaussian & SNR\_b & 18.16 & 19.12 & 20.00 & 20.82 & 21.57 & Uniform & SNR\_b & 18.05 & 18.87 & 19.57 & 20.30 & 21.15 \\
$\sigma_h=2$ & SNR\_s & 18.15 & 19.11 & 19.99 & 20.81 & 21.56 & $7\times 7$ & SNR\_s & 18.05 & 18.86 & 19.57 & 20.30 & 21.15 \\
\hline
\hline
 & SNR\_i & 9.971 & 14.87 & 19.57 & 23.74 & 26.83 & & SNR\_i & 10.00 & 15.00 & 20.00 & 25.00 & 30.00 \\ 
 Cosine & SNR\_b &19.82 & 21.87 & 24.05 & 26.15 & 27.99 & Dirac & SNR\_b & 19.97 & 22.28 & 24.96 & 27.87 & 30.79 \\
$F_c=3/32$ & SNR\_s & 19.81 & 21.85 & 24.02 & 26.12 & 27.97 & & SNR\_s & 19.96 & 22.26 & 24.92 & 27.83 & 30.77 \\
\hline
\end{tabular}
\end{center}
\end{table*}

\section{Conclusions}
\label{se:conclu}
In this paper, we have addressed the problem of recovering data
degraded by a convolution and the addition of 
a white Gaussian noise. We have adopted a hybrid approach that combines frequency and multiscale analyses. By formulating the underlying deconvolution problem as a nonlinear estimation problem, we have shown that the involved  criterion to be optimized can be deduced from Stein's unbiased quadratic risk estimate. 
In this context, attention must
be paid to the variance of the risk estimate. The expression of this variance has been
derived in this paper. 

The flexibility of the proposed recovery approach must be  emphasized. Redundant or non-redundant data representations
can be employed  as well as various combinations of linear/nonlinear estimates. Based on a specific choice of the wavelet representation and
particular forms of the estimator structure, experiments have been conducted on a set of images, illustrating the good performance of the proposed approach.
In our future work, we plan to further improve this restoration method by considering more sophisticated forms of the estimator, for example, by taking into account multiscale or spatial dependencies as proposed in \cite{BLU_LUISIER_08,CHAUX_DUVAL_BENAZZA_PESQUET_08} for denoising problems.
Furthermore, it seems interesting to extend this work to the case of multicomponent data by accounting for the cross-channel correlations. 
 
\nopagebreak
\begin{appendices}
\section{Proof of Proposition \ref{p:steinf}}\label{a:steinf}
We first notice that Assumption \ref{as:psteinf2} is a sufficient
condition for the existence of the left hand-side terms of \eqref{eq:steinf1}-\eqref{eq:steinf5} since, by H{\"o}lder's inequality,
\begin{align}
\E[|\Theta_1(\rho_1)\widetilde{\eta}_1|]\le& 
\E[|\Theta_1(\rho_1)|^3]^{1/3}\E[|\widetilde{\eta}_1|^{3/2}]^{2/3}\\
\E[|\Theta_1(\rho_1)\widetilde{\eta}_1\widetilde{\eta}_2|]\le&
\E[|\Theta_1(\rho_1)|^3]^{1/3}\E[|\widetilde{\eta}_1\widetilde{\eta}_2|^{3/2}]^{2/3}\\
\E[ |\Theta_1(\rho_1)\widetilde{\eta}_1|\widetilde{\eta}_2^2] \le&
\E[|\Theta_1(\rho_1)|^3]^{1/3}\E[|\widetilde{\eta}_1|^{3/2}|\widetilde{\eta}_2|^3]^{2/3}\\
\E[|\Theta_1(\rho_1)\Theta_1(\rho_2)\widetilde{\eta}_1\widetilde{\eta}_2|]
\le& \E[|\Theta_1(\rho_1)|^3]^{1/3}\E[|\Theta_1(\rho_2)|^3]^{1/3}\nonumber\\
&\times \E[|\widetilde{\eta}_1\widetilde{\eta}_2|^3]^{1/3}.
\end{align}
We can decompose $\widetilde{\eta}_i$ with $i\in \{1,2\}$ as follows:
\begin{equation}
\widetilde{\eta}_i = a_i \eta_1
+ \check{\eta}_i \label{eq:linpredi}
\end{equation}
where $a_i$ is the mean-square prediction coefficient given by 
\begin{equation}
\sigma^2a_i = \E[\eta_1\widetilde{\eta}_i]
\label{eq:linpreda}
\end{equation}
with $\sigma^2 = \E[\eta_1^2]$
and, $\check{\eta}_i$ is the associated zero-mean prediction error
which is independent of $\upsilon_1$ and $\eta_1$.\footnote{Recall that
$(\eta_1,\widetilde{\eta}_1,\widetilde{\eta}_2)$ is zero-mean Gaussian.}
We deduce that
\begin{equation}
\E[ \Theta_1(\rho_1)\widetilde{\eta}_1]
=  a_1 \E[ \Theta_1(\rho_1)\eta_1].
\label{eq:steinf1m2}
\end{equation}
We can invoke Stein's principle to express $\E[ \Theta_1(\rho_1)\eta_1]$, provided that the assumptions in Proposition \ref{p:stein0} are satisfied.
To check these assumptions, we remark that,
for every $\tau \in \RR$, when $|\zeta|$ is large enough,
$
|\Theta_1(\tau+\zeta)| 
\exp\big(-\frac{\zeta^2}{2\sigma^2}\big)
\le |\Theta_1(\tau+\zeta)| \zeta^2 
\exp\big(-\frac{\zeta^2}{2\sigma^2}\big)
$,
which, owing to Assumption \ref{as:psteinf1}, implies that
$
\lim_{|\zeta| \to \infty} \Theta_1(\tau+\zeta)
\exp\big(-\frac{\zeta^2}{2\sigma^2}\big) = 0.
$
In addition, from Jensen's inequality and Assumption \ref{as:psteinf2},
$
\E[|\Theta_1'(\rho_1)|] \le \E[|\Theta_1'(\rho_1)|^3]^{1/3} < \infty.
$
Consequently, \eqref{eq:steinf0} combined with
\eqref{eq:linpreda} can be applied to simplify
\eqref{eq:steinf1m2}, so allowing us to obtain \eqref{eq:steinf1}.

Let us next prove \eqref{eq:steinf3}. From \eqref{eq:linpredi}, we get:
\begin{align}
\E[\Theta_1(\rho_1) \widetilde{\eta}_1 \widetilde{\eta}_2]
=& a_1 a_2 \E[\Theta_1(\rho_1)  \eta_1^2]+
a_1 \E[\Theta_1(\rho_1)\eta_1] \E[\check{\eta}_2]\nonumber\\
&+a_2 \E[\Theta_1(\rho_1)\eta_1] \E[\check{\eta}_1]
+\E[\Theta_1(\rho_1)] \E[\check{\eta}_1 \check{\eta}_2]\nonumber\\
=& a_1 a_2 \E[\Theta_1(\rho_1) \eta_1^2]+
\E[\Theta_1(\rho_1)] \E[\check{\eta}_1 \check{\eta}_2]
\end{align}
where we have used in the first equality the fact that $(\check{\eta}_1,\check{\eta}_2)$ is independent of $(\eta_1,\upsilon_1)$ and, in the second one, that
it is zero-mean. Then, by making use of the orthogonality relation:
\begin{equation}
\E[\widetilde{\eta}_1\widetilde{\eta}_2] = a_1 a_2 \sigma^2 + \E[\check{\eta}_1
\check{\eta}_2]
\label{eq:orthotildelta}
\end{equation}
we have
\begin{align}
\E[\Theta_1(\rho_1) \widetilde{\eta}_1 \widetilde{\eta}_2]
= &a_1 a_2 \big(\E[\Theta_1(\rho_1) \eta_1^2]-\sigma^2\E[\Theta_1(\rho_1)]\big)
\nonumber\\
&+ \E[\Theta_1(\rho_1)] \E[\widetilde{\eta}_1\widetilde{\eta}_2].
\label{eq:steinf2inter}
\end{align}
In addition, by integration by parts,
the conditional expectation w.r.t. $\upsilon_1$ given by
\begin{multline}
\forall \tau,\qquad
\E[\Theta_1(\rho_1) \eta_1^2 \mid \upsilon_1=\tau]
=\\ \frac{1}{\sqrt{2\pi}\sigma}\int_{-\infty}^\infty \Theta_1(\tau+\zeta) \zeta^2 \exp\big(-\frac{\zeta^2}{2\sigma^2}\big)\,d\zeta
\end{multline}
can be reexpressed as
\begin{align}
&\E[\Theta_1(\rho_1) \eta_1^2 \mid \upsilon_1=\tau]\nonumber\\
 = &\frac{\sigma}{\sqrt{2\pi}}\Big(\lim_{\zeta \to -\infty} \Theta_1(\tau+\zeta) \zeta \exp\big(-\frac{\zeta^2}{2\sigma^2}\big) \nonumber\\
& -\lim_{\zeta \to \infty} \Theta_1(\tau+\zeta) \zeta \exp\big(-\frac{\zeta^2}{2\sigma^2}\big)\nonumber\\
&+ \int_{-\infty}^\infty \big(\Theta_1(\tau+\zeta)+\Theta_1'(\tau+\zeta) \zeta\big) \exp\big(-\frac{\zeta^2}{2\sigma^2}\big)\,d\zeta\Big).
\label{eq:ippsteinf2}
\end{align}
The existence of the latter integral is secured for almost every value 
$\tau$ that can be taken by $\upsilon_1$, thanks to Assumptions \ref{as:psteinf2} and \ref{as:psteinf3} and, the fact
that, if $\mu$ denotes the probability measure of $\upsilon_1$,
\begin{align}
&\iint_{\mathbb{R}^2} 
\big|\Theta_1(\upsilon_1+\tau)+\Theta_1'(\upsilon_1+\tau) \zeta\big| \exp\big(-\frac{\zeta^2}{2\sigma^2}\big)\,d\zeta d\mu(\tau)\nonumber\\
= &\E[|\Theta_1(\rho_1)+\Theta_1'(\rho_1) \eta_1|]\nonumber\\
\le& \E[|\Theta_1(\rho_1)|]+\E[|\Theta_1'(\rho_1) \eta_1|]\nonumber\\
 \le& \E[|\Theta_1(\rho_1)|^3]^{1/3}+
\E[|\Theta_1'(\rho_1)|^3]^{1/3}\E[|\eta_1|^{3/2}]^{2/3} < \infty.
\end{align}
Since, for every $\tau \in \RR$, when $|\zeta|$ is large enough,
$
|\Theta_1(\tau+\zeta)\zeta| 
\exp\big(-\frac{\zeta^2}{2\sigma^2}\big)
\le |\Theta_1(\tau+\zeta)| \zeta^2 
\exp\big(-\frac{\zeta^2}{2\sigma^2}\big)$,
Assumption \ref{as:psteinf1} implies that
$
\lim_{|\zeta| \to \infty}\Theta_1(\tau+\zeta)\zeta \exp\big(-\frac{\zeta^2}{2\sigma^2}\big)=0$.
By using this property, we deduce from \eqref{eq:ippsteinf2}
that
$
\E[\Theta_1(\rho_1) \eta_1^2 \mid \upsilon_1]
 = \sigma^2\big(\E[\Theta_1(\upsilon_1+\eta_1)\mid \upsilon_1]+\E[\Theta_1'(\upsilon_1+\eta_1) \eta_1\mid \upsilon_1]\big)
$,
which yields
\begin{equation}
\E[\Theta_1(\rho_1) \eta_1^2]
 = \sigma^2\big(\E[\Theta_1(\rho_1)]+\E[\Theta_1'(\rho_1) \eta_1]\big).
\label{eq:steineta2}
\end{equation}
By inserting this equation in \eqref{eq:steinf2inter}, we find that
$
\E[\Theta_1(\rho_1) \widetilde{\eta}_1 \widetilde{\eta}_2]
= a_1 a_2 \sigma^2 \E[\Theta_1'(\rho_1) \eta_1]
+ \E[\Theta_1(\rho_1)] \E[\widetilde{\eta}_1\widetilde{\eta}_2].
$
Formula \eqref{eq:steinf3} straightforwardly follows by noticing
that, according to \eqref{eq:linpredi} and \eqref{eq:linpreda},
$
\E[\Theta_1'(\rho_1)\widetilde{\eta}_2]\E[\eta_1\widetilde{\eta}_1] =
a_1a_2 \sigma^2 \E[\Theta_1'(\rho_1) \eta_1].
$
Consider now \eqref{eq:steinf4}. By using \eqref{eq:linpredi}
and the independence between $(\check{\eta}_1,\check{\eta}_2)$ and $(\eta_1,\upsilon_1)$,
we can write
\begin{align}
&\E[ \Theta_1(\rho_1)\widetilde{\eta}_1\widetilde{\eta}_2^2]\nonumber\\
=& a_1 a_2^2 \E[\Theta_1(\rho_1) \eta_1^3]
+2 a_1 a_2 \E[\Theta_1(\rho_1) \eta_1^2]\E[\check{\eta}_2]\nonumber\\
&+ a_1 \E[\Theta_1(\rho_1)\eta_1]\E[\check{\eta}_2^2]
+ a_2^2 \E[\Theta_1(\rho_1)\eta_1^2]\E[\check{\eta}_1]\nonumber\\
&+ 2 a_2 \E[\Theta_1(\rho_1)\eta_1]\E[\check{\eta}_1\check{\eta}_2]
+ \E[\Theta_1(\rho_1)]\E[\check{\eta}_1\check{\eta}_2^2]\nonumber\\
=& a_1 a_2^2 \E[\Theta_1(\rho_1) \eta_1^3]
+ \E[\Theta_1(\rho_1)\eta_1]\big(a_1\E[\check{\eta}_2^2]
+ 2 a_2 \E[\check{\eta}_1\check{\eta}_2]\big)
\label{eq:rhoth1th2}
\end{align}
where the latter equality stems from the symmetry of the probability 
distribution of $(\check{\eta}_1,\check{\eta}_2)$.
Taking into account the relation
$
\E[\widetilde{\eta}_2^2]=a_2^2\sigma^2+\E[\check{\eta}_2^2]
$
and \eqref{eq:orthotildelta}, we get
\begin{align}
\E[ \Theta_1(\rho_1)\widetilde{\eta}_1\widetilde{\eta}_2^2]
= &a_1 a_2^2 \big(\E[\Theta_1(\rho_1) \eta_1^3]-3\sigma^2\E[\Theta_1(\rho_1)\eta_1]\big)\nonumber\\
&+\E[\Theta_1(\rho_1)\eta_1]\big(a_1\E[\widetilde{\eta}_2^2]+
2 a_2 \E[\widetilde{\eta}_1\widetilde{\eta}_2]\big).
\end{align}
Let us now focus our attention on $\E[\Theta_1(\rho_1) \eta_1^3]$.
Since Assumptions \ref{as:psteinf2} and \ref{as:psteinf3} imply that
\begin{multline}
\E[|2\Theta_1(\rho_1)\eta_1+\Theta_1'(\rho_1) \eta_1^2|]
\le 2\E[|\Theta_1(\rho_1)|^3]^{1/3}\E[|\eta_1|^{3/2}]^{2/3}\\ + 
\E[|\Theta_1'(\rho_1)|^3]^{1/3}\E[|\eta_1|^3]^{2/3} < \infty
\end{multline}
and Assumption \ref{as:psteinf1} holds, we can proceed by integration by parts,
similarly to the proof of \eqref{eq:steineta2} to show that
\begin{equation}
\E[\Theta_1(\rho_1) \eta_1^3] = \sigma^2\big(2\E[\Theta_1(\rho_1)\eta_1]+
\E[\Theta_1'(\rho_1) \eta_1^2]\big).
\end{equation}
Thus, \eqref{eq:rhoth1th2} reads
\begin{align}
\E[ \Theta_1(\rho_1)\widetilde{\eta}_1\widetilde{\eta}_2^2]
= &a_1 a_2^2\sigma^2\E[\Theta_1'(\rho_1) \eta_1^2]
+\E[\Theta_1(\rho_1)\eta_1]\nonumber\\
& \times \big(a_1\E[\widetilde{\eta}_2^2]
-a_1 a_2^2\sigma^2
+2 a_2 \E[\widetilde{\eta}_1\widetilde{\eta}_2] \big)
\end{align}
which, by using \eqref{eq:steinf0}, can also be reexpressed as
\begin{align}
\E[ \Theta_1(\rho_1)\widetilde{\eta}_1\widetilde{\eta}_2^2]
= &a_1 a_2^2\sigma^2\E[\Theta_1'(\rho_1) \eta_1^2]
+\sigma^2\E[\Theta_1'(\rho_1)]\nonumber\\
&\times\big(a_1\E[\widetilde{\eta}_2^2]
-a_1 a_2^2\sigma^2+
2 a_2 \E[\widetilde{\eta}_1\widetilde{\eta}_2] \big).
\label{eq:steinf41}
\end{align}
In turn, we have
\begin{align}
\E[ \Theta_1'(\rho_1)\widetilde{\eta}_2^2]
=& a_2^2 \E[ \Theta_1'(\rho_1)\eta_1^2]+2a_2\E[\Theta_1'(\rho_1)\eta_1]\E[\check{\eta}_2]\nonumber\\
&+ \E[ \Theta_1'(\rho_1)]\E[\check{\eta}_2^2]\nonumber\\
=& a_2^2 \E[ \Theta_1'(\rho_1)\eta_1^2]+ 
\E[ \Theta_1'(\rho_1)]\big(\E[\widetilde{\eta}_2^2]-a_2^2 \sigma^2\big)
\end{align}
which, by using \eqref{eq:linpreda}, leads to
\begin{align}
\E[ \Theta_1'(\rho_1)\widetilde{\eta}_2^2]\E[\eta_1\widetilde{\eta}_1]
=& a_1a_2^2\sigma^2 \E[\Theta_1'(\rho_1)\eta_1^2]\nonumber\\
&+ \sigma^2\E[ \Theta_1'(\rho_1)]a_1\big(\E[\widetilde{\eta}_2^2]-a_2^2 \sigma^2\big).
\label{eq:steinf42}
\end{align}
From the difference of \eqref{eq:steinf41} and \eqref{eq:steinf42}, we derive that
\begin{align}
\E[ \Theta_1(\rho_1)\widetilde{\eta}_1\widetilde{\eta}_2^2]
= &\E[ \Theta_1'(\rho_1)\widetilde{\eta}_2^2]\E[\eta_1\widetilde{\eta}_1]\nonumber\\
&+2a_2 \sigma^2\E[\Theta_1'(\rho_1)]\E[\widetilde{\eta}_1\widetilde{\eta}_2]
\end{align}
which, by using again \eqref{eq:linpreda}, yields \eqref{eq:steinf4}.

Finally, we will prove Formula \eqref{eq:steinf5}. 
We decompose $\widetilde{\eta}_1$ as follows:
\begin{equation}
\widetilde{\eta}_1 = b\,\widetilde{\eta}_2 + \widetilde{\eta}_1^\perp
\qquad\text{where}\qquad
b\,\widetilde{\sigma}^2 = \E[\widetilde{\eta}_1\widetilde{\eta}_2],
\label{eq:te1p}
\end{equation}
$\widetilde{\sigma}^2 = \E[ \widetilde{\eta}_2^2]$
and, $\widetilde{\eta_1}^\perp$ is independent of $(\widetilde{\eta}_2,\upsilon_1,\upsilon_2)$.
This allows us to write
\begin{multline}
\E[\Theta_1(\rho_1)\Theta_2(\rho_2)\widetilde{\eta}_1\widetilde{\eta}_2] =
b\,\E[\Theta_1(\rho_1)\Theta_2(\rho_2)\widetilde{\eta}_2^2]\\
+ \E[\Theta_1(\rho_1)\Theta_2(\rho_2)\widetilde{\eta}_1^\perp\widetilde{\eta}_2].
\label{eq:t1t2tet1tet2}
\end{multline}
Let us first calculate $\E[\Theta_1(\rho_1)\Theta(\rho_2)\widetilde{\eta}_2^2]$.
For $i\in \{1,2\}$, consider the decomposition:
\begin{equation}
\eta_i = c_i \widetilde{\eta}_2 + \eta_i^\perp
\label{eq:defetip}
\end{equation}
where $c_i \widetilde{\sigma}^2 = \E[\eta_i\widetilde{\eta}_2]$
and, $\widetilde{\eta}_2$,
$(\eta_1^\perp,\eta_2^\perp)$ and $(\upsilon_1,\upsilon_2)$ are independent.
We have then
\begin{multline}
\E[\Theta_1(\rho_1)\Theta_2(\rho_2)\widetilde{\eta}_2^2\mid \eta_1^\perp,\eta_2^\perp,\upsilon_1,\upsilon_2]\\
= \frac{1}{\sqrt{2\pi}\widetilde{\sigma}}
\int_{-\infty}^\infty \Theta_1(\upsilon_1+c_1 \zeta + \eta_1^\perp)
\Theta_2(\upsilon_2+c_2 \zeta + \eta_2^\perp)\\
\times \zeta^2\exp\big(-\frac{\zeta^2}{2\widetilde{\sigma}^2}\big)\,d\zeta.
\label{eq:expcondmult}
\end{multline}
It can be noticed that
\begin{align}
&\E[|\Theta_1(\rho_1) \Theta_2(\rho_2)+(c_1\Theta_1'(\rho_1) \Theta_2(\rho_2)
+c_2\Theta_1(\rho_1) \Theta_2'(\rho_2))\,\widetilde{\eta}_2|]
\nonumber\\
&\le \E[|\Theta_1(\rho_1)|^3]^{1/3}\E[|\Theta_2(\rho_2)|^3]^{1/3}
+|c_1|\big(\E[|\Theta_1'(\rho_1)|^3]^{1/3}\nonumber\\
&\times \E[|\Theta_2(\rho_2)|^3]^{1/3}
+|c_2|\E[|\Theta_1(\rho_1)|^3]^{1/3}\E[|\Theta_2'(\rho_2))|^3]^{1/3}\big)\nonumber\\
&\times \E[|\widetilde{\eta}_2|^3]^{1/3}< \infty
\end{align}
and, for every $(\tau_1,\tau_2) \in \RR^2$,
$
\lim_{|\zeta| \to \infty}\Theta_1(\tau_1+c_1 \zeta)
\Theta_2(\tau_2+c_2 \zeta)\zeta\exp\big(-\frac{\zeta^2}{2\widetilde{\sigma}^2}\big)= 0
$
since, for $|\zeta|$ large enough,
\begin{multline}
c_1^2 c_2^2|\Theta_1(\tau_1+c_1 \zeta)
\Theta_2(\tau_2+c_2 \zeta)\zeta|\exp\big(-\frac{\zeta^2}{2\widetilde{\sigma}^2}
\big)\\
\le |\Theta_1(\tau_1+c_1 \zeta)|(c_1\zeta)^2\exp\big(-\frac{\zeta^2}{4\widetilde{\sigma}^2}\big)\\
\times|\Theta_2(\tau_2+c_2 \zeta)|(c_2\zeta)^2\exp\big(-\frac{\zeta^2}{4\widetilde{\sigma}^2}\big)
\end{multline}
and Assumption \ref{as:psteinf1} holds.
We can therefore deduce, by integrating by parts in \eqref{eq:expcondmult}
and taking the expectation w.r.t. $(\eta_1^\perp,\eta_2^\perp,\upsilon_1,\upsilon_2)$, that
\begin{align}
&\E[\Theta_1(\rho_1)\Theta_2(\rho_2)\widetilde{\eta}_2^2]\nonumber\\
=& \widetilde{\sigma}^2\big(\E[\Theta_1(\rho_1) \Theta_2(\rho_2)]+
c_1\E[\Theta_1'(\rho_1) \Theta_2(\rho_2)\,\widetilde{\eta}_2]\nonumber\\
&+c_2\E[\Theta_1(\rho_1) \Theta_2'(\rho_2)\,\widetilde{\eta}_2]\big)\nonumber\\
=& \widetilde{\sigma}^2\E[\Theta_1(\rho_1) \Theta_2(\rho_2)]+
\E[\Theta_1'(\rho_1) \Theta_2(\rho_2)\,\widetilde{\eta}_2]\E[\eta_1 \widetilde{\eta}_2]\nonumber\\
&+\E[\Theta_1(\rho_1) \Theta_2'(\rho_2)\,\widetilde{\eta}_2]\E[\eta_2 \widetilde{\eta}_2].
\label{eq:t1t2ett2}
\end{align}
Let us now calculate 
$\E[\Theta_1(\rho_1)\Theta_2(\rho_2)\widetilde{\eta}_1^\perp\widetilde{\eta}_2]$. We have, for $i\in \{1,2\}$,
$
\eta_i = \check{c}_i \widetilde{\eta}_1^\perp + \check{\eta}_i^\perp
$,
where
\begin{equation}
\check{c_i} \E[(\widetilde{\eta}_1^\perp)^2] = \E[\eta_i\widetilde{\eta}_1^\perp]
\label{eq:defcic}
\end{equation}
and, $\widetilde{\eta}_1^\perp$ is independent of $(\widetilde{\eta}_2,\check{\eta}_1^\perp,
\check{\eta}_2^\perp,\upsilon_1,\upsilon_2)$.
By proceeding similarly to the proof of \eqref{eq:t1t2ett2},
we get
\begin{multline}
\E[\Theta_1(\rho_1)\Theta_2(\rho_2)\widetilde{\eta}_1^\perp
\mid \widetilde{\eta}_2,\check{\eta}_1^\perp,
\check{\eta}_2^\perp]=
\E[(\widetilde{\eta}_1^\perp)^2]\\\times
\big(\check{c}_1\E[\Theta_1'(\rho_1)\Theta_2(\rho_2)\mid \widetilde{\eta}_2,\check{\eta}_1^\perp,\check{\eta}_2^\perp]\\+
\check{c}_2 \E[\Theta_1(\rho_1)\Theta'_2(\rho_2)\mid \widetilde{\eta}_2,\check{\eta}_1^\perp,\check{\eta}_2^\perp]\big)
\end{multline}
which, owing to \eqref{eq:defcic}, allows us to write
\begin{align}
\E[\Theta_1(\rho_1)\Theta_2(\rho_2)\widetilde{\eta}_1^\perp\widetilde{\eta}_2]
= &\E[\Theta_1'(\rho_1)\Theta_2(\rho_2)\widetilde{\eta}_2]\E[\eta_1\widetilde{\eta}_1^\perp]\nonumber\\
&+\E[\Theta_1(\rho_1)\Theta'_2(\rho_2)\widetilde{\eta}_2]
\E[\eta_2\widetilde{\eta}_1^\perp].
\end{align}
On the other hand, from \eqref{eq:te1p}, we deduce
that, for $i\in \{1,2\}$,
$
\E[\eta_i\widetilde{\eta}_1^\perp]
= \E[\eta_i\widetilde{\eta}_1] -
b\E[\eta_i\widetilde{\eta}_2]
$,
so yielding
\begin{align}
\E[\Theta_1(\rho_1)\Theta_2(\rho_2)\widetilde{\eta}_1^\perp\widetilde{\eta}_2]
= &\,\E[\Theta_1'(\rho_1)\Theta_2(\rho_2)\widetilde{\eta}_2]\E[\eta_1\widetilde{\eta}_1]\nonumber\\
&+\E[\Theta_1(\rho_1)\Theta'_2(\rho_2)\widetilde{\eta}_2]
\E[\eta_2\widetilde{\eta}_1]\nonumber\\
&- b\big( \E[\Theta_1'(\rho_1)\Theta_2(\rho_2)\widetilde{\eta}_2]\E[\eta_1\widetilde{\eta}_2]\nonumber\\
&+\E[\Theta_1(\rho_1)\Theta'_2(\rho_2)\widetilde{\eta}_2]
\E[\eta_2\widetilde{\eta}_2]\big) .
\label{eq:t1t2tet1ptet2}
\end{align}
Altogether, \eqref{eq:t1t2tet1tet2}, \eqref{eq:te1p},
\eqref{eq:t1t2ett2} and \eqref{eq:t1t2tet1ptet2}
lead to 
\begin{multline}
\E[\Theta_1(\rho_1)\Theta_2(\rho_2)\widetilde{\eta}_1\widetilde{\eta}_2]
=\E[\Theta_1(\rho_1)\Theta_2(\rho_2)]\E[\widetilde{\eta}_1\widetilde{\eta}_2]\\
+\E[\Theta_1'(\rho_1)\Theta_2(\rho_2)\widetilde{\eta}_2]\E[\eta_1\widetilde{\eta}_1]+
\E[\Theta_1(\rho_1)\Theta'_2(\rho_2)\widetilde{\eta}_2]
\E[\eta_2\widetilde{\eta}_1].
\label{eq:steinf5m2}
\end{multline}
In order to obtain a more symmetric expression,
let us now look at the difference
$\E[\Theta_1(\rho_1)\Theta'_2(\rho_2)\widetilde{\eta}_2]
\E[\eta_2\widetilde{\eta}_1]
- \E[\Theta_1(\rho_1)\Theta'_2(\rho_2)\widetilde{\eta}_1]
\E[\eta_2\widetilde{\eta}_2]=\E[\Theta_1(\rho_1)\Theta'_2(\rho_2)\widetilde{\eta}_{12}]$
where
$
\widetilde{\eta}_{12}=\E[\eta_2\widetilde{\eta}_1]\widetilde{\eta}_2-
\E[\eta_2\widetilde{\eta}_2]\widetilde{\eta}_1.
$
Since $\widetilde{\eta}_{12}$ is a linear combination
of $\widetilde{\eta}_1$ and $\widetilde{\eta}_2$ and,
$
\E[\eta_2\widetilde{\eta}_{12}] = 0,
$
$\widetilde{\eta}_{12}$ is independent of $(\eta_2,\upsilon_1,\upsilon_2)$.
Similarly to the derivation of Formula \eqref{eq:steinf1}, it can be deduced that
\begin{align}
&\E[\Theta_1(\rho_1)\Theta'_2(\rho_2)\widetilde{\eta}_{12}\mid \eta_2,\upsilon_1,\upsilon_2]\nonumber\\
=&\E[\Theta_1(\rho_1)\Theta'_2(\upsilon_2+\eta_2)\widetilde{\eta}_{12}\mid \eta_2,\upsilon_1,\upsilon_2]\nonumber\\
=& \E[\Theta_1(\rho_1)\widetilde{\eta}_{12}\mid \eta_2,\upsilon_1,\upsilon_2]\,\Theta'_2(\upsilon_2+\eta_2)\nonumber\\
=&  \E[\Theta_1'(\rho_1)\mid \eta_2,\upsilon_1,\upsilon_2]
\E[\eta_1\widetilde{\eta}_{12}]\,\Theta'_2(\rho_2)
\end{align}
which leads to
\begin{align}
&\E[\Theta_1(\rho_1)\Theta'_2(\rho_2)\widetilde{\eta}_2]
\E[\eta_2\widetilde{\eta}_1]
- \E[\Theta_1(\rho_1)\Theta'_2(\rho_2)\widetilde{\eta}_1]
\E[\eta_2\widetilde{\eta}_2]\nonumber\\
=&\E[\Theta_1(\rho_1)\Theta'_2(\rho_2)\widetilde{\eta}_{12}]
=  \E[\Theta_1'(\rho_1)\Theta'_2(\rho_2)]
\E[\eta_1\widetilde{\eta}_{12}]\nonumber\\
=& \E[\Theta_1'(\rho_1)\Theta'_2(\rho_2)] (\E[\eta_1\widetilde{\eta}_2]
\E[\eta_2\widetilde{\eta}_1]-\E[\eta_1\widetilde{\eta}_1]
\E[\eta_2\widetilde{\eta}_2]).
\label{eq:presteinf5}
\end{align}
Eq. \eqref{eq:steinf5} is then derived by combining \eqref{eq:steinf5m2} with \eqref{eq:presteinf5}.

\section{Proof of Proposition \ref{p:steininv}}\label{a:steininv}
We have, for every $\mathbf{p}\in\mathbb{D}$,
\begin{equation}
\E[|\widetilde{R}(\mathbf{p})|^2]
= \E[|S(\mathbf{p})|^2]
+\E[|\widetilde{N}(\mathbf{p})|^2]
+2\,\mathrm{Re}\{\E[\widetilde{N}(\mathbf{p})\big(S(\mathbf{p})\big)^*]\}.
\end{equation}
Since $\widetilde{n}$ and $s$ are uncorrelated, this yields
$
\E[|S(\mathbf{p})|^2]
=\E[|\widetilde{R}(\mathbf{p})|^2]- \E[|\widetilde{N}(\mathbf{p})|^2].
$
In addition, we have 
$
\E[S(\mathbf{p}) \big(\widehat{S}(\mathbf{p})\big)^*]=   
\E[\widetilde{R}(\mathbf{p}) \big(\widehat{S}(\mathbf{p})\big)^*] 
- \E[\widetilde{N}(\mathbf{p})\big(\widehat{S}(\mathbf{p})\big)^*].
$
The previous two equations show that
\begin{align}
\forall \mathbf{p} \in \mathbb{D},\qquad
&\E[|\widehat{S}(\mathbf{p})-S(\mathbf{p})|^2]
= \E[|\widehat{S}(\mathbf{p})-\widetilde{R}(\mathbf{p})|^2]\nonumber\\
-&\E[|\widetilde{N}(\mathbf{p})|^2]
+  2\,\mathrm{Re}\{\E[\widetilde{N}(\mathbf{p})\big(\widehat{S}(\mathbf{p})\big)^*]\}.
\label{eq:defDelta1}
\end{align}
Moreover, using \eqref{eq:noisestatF}, the second term in the right-hand side of \eqref{eq:defDelta1} is
\begin{align}
\E[|\widetilde{N}(\mathbf{p})|^2]
= \frac{\gamma D}{|H(\mathbf{p})|^2}.
\label{eq:firsttermsteincor} 
\end{align}
On the other hand, according to \eqref{eq:estsynthf}, the last term in the right-hand side of \eqref{eq:defDelta1} is such that
\begin{align}
\E[\widetilde{N}(\mathbf{p})\big(\widehat{S}(\mathbf{p})\big)^*]
= &\sum_{\mathbf{x}\in\mathbb{D}}\sum_{\ell=1}^L
\E[\widehat{s}_{\ell}\,\widetilde{n}(\mathbf{x})] \nonumber\\
&\times\exp(-2\pi \imath{\mathbf{x}}^\top{\boldsymbol D}^{-1}{\mathbf{p}}) 
\big(\widetilde{\Phi}_\ell({\mathbf{p}})\big)^*.
\label{eq:secondtermsteincor}
\end{align}
Furthermore, we know from \eqref{eq:NLestpart} that 
$\widehat{s}_{\ell} = \Theta_\ell(u_\ell +  n_\ell)$,
where
\begin{equation}
n_\ell = \langle n,\varphi_\ell \rangle,\qquad
u_\ell = \langle u,\varphi_\ell \rangle\label{eq:noiseanal}
\end{equation}
and, $u$ is the field in
 $\mathbb{R}^{D_1 \times \cdots \times D_d}$ whose discrete Fourier coefficients are given by \eqref{eq:convperfreq}.
From \eqref{eq:noiseanal} as well as the assumptions made on the noise $n$ corrupting the data, it is clear
that $\big(n_\ell, \widetilde{n}(\mathbf{x})\big)$ is a zero-mean Gaussian vector
which is independent of $u_\ell$.
Thus, by using \eqref{eq:steinf1} in Proposition \ref{p:steinf},
we obtain:
\begin{equation}
\E[ \widehat{s}_{\ell}\widetilde{n}(\mathbf{x})]
= \E[ \Theta'_\ell(r_\ell)] \E[n_\ell\,\widetilde{n}(\mathbf{x})].
\label{eq:steinuse1}
\end{equation}
Let us now calculate $\E[n_\ell\,\widetilde{n}(\mathbf{x})]$.
Using \eqref{eq:noiseanal} and \eqref{eq:noisestatFI}, we get
\begin{align}
&\E[n_\ell\,\widetilde{n}(\mathbf{x})]
= \sum_{\mathbf{y}\in \mathbb{D}}  
\E[\widetilde{n}(\mathbf{x})\,n(\mathbf{y})]\varphi_\ell(\mathbf{y})\nonumber\\
= &\sum_{(\mathbf{p}',\mathbf{p}'')\in \mathbb{D}^2} 
\E\big[\widetilde{N}(\mathbf{p}') \big(N(\mathbf{p}'')\big)^*\big]
\exp(2\pi \imath{\mathbf{x}}^\top{\boldsymbol D}^{-1}{\mathbf{p}'}) 
\frac{\Phi_\ell(\mathbf{p}'')}{D^2}\nonumber\\
= & \frac{\gamma}{D}\sum_{\mathbf{p}'\in \mathbb{D}} 
\frac{\Phi_\ell(\mathbf{p}')}{H(\mathbf{p}')}
\exp(2\pi \imath{\mathbf{x}}^\top{\boldsymbol D}^{-1}{\mathbf{p}'}) 
.
\label{eq:devinterterm2}
\end{align}
Combining this equation with \eqref{eq:secondtermsteincor} 
and \eqref{eq:steinuse1} yields
\begin{equation}
\E[\widetilde{N}(\mathbf{p})\big(\widehat{S}(\mathbf{p})\big)^*]
= \gamma \sum_{\ell=1}^L\E[ \Theta'_\ell(r_\ell)] 
\frac{\Phi_\ell({\mathbf{p}})
\big(\widetilde{\Phi}_\ell({\mathbf{p}})\big)^*}
{H(\mathbf{p})}.
\label{eq:secondtermsteincorsimp}
\end{equation}
Gathering now \eqref{eq:defDelta1}, \eqref{eq:firsttermsteincor}
and \eqref{eq:secondtermsteincorsimp}, 
\eqref{eq:steindecriskDp} is obtained.

From Parseval's formula,
the global MSE can be expressed as
\begin{equation}
D\,\E[\mathcal{E}(\widehat{s}-s)]  = 
\frac{1}{D}
\sum_{\mathbf{p}\in \mathbb{D}} \E[|S(\mathbf{p})-\widehat{S}(\mathbf{p})|^2].
\end{equation}
The above equation together with \eqref{eq:steindecriskDp} 
show that \eqref{eq:steindecrisk} holds with
\begin{equation}
D \Delta =  \gamma \Big(2 \sum_{\ell=1}^L \E[\Theta_\ell'(r_\ell)] 
\,\mathrm{Re}\{\overline{\gamma}_\ell\}
-\sum_{\mathbf{p}\in \mathbb{D}} | H(\mathbf{p}) |^{-2}\Big).
\end{equation}
Furthermore, by defining
\begin{equation}
\forall \ell \in \{1,\ldots,L\},\qquad
\widetilde{n}_\ell = \langle \widetilde{n},\widetilde{\varphi}_\ell\rangle
\label{eq:noisesynth}
\end{equation}
and using \eqref{eq:noisestatFI}, it can be noticed that
\begin{align}
&\E[n_\ell\,\widetilde{n}_\ell]
= \sum_{(\mathbf{x},\mathbf{y})\in \mathbb{D}^2}  
\E[\widetilde{n}(\mathbf{y}) n(\mathbf{x}) ]
\widetilde{\varphi}_\ell(\mathbf{y}) \varphi_\ell(\mathbf{x})\nonumber\\
= &\frac{1}{D^2}\sum_{(\mathbf{p},\mathbf{p'})\in \mathbb{D}^2} 
\E\big[\widetilde{N}(\mathbf{p}) \big(N(\mathbf{p}')\big)^*\big]
\big(\widetilde{\Phi}_\ell(\mathbf{p})\big)^* 
\Phi_\ell(\mathbf{p'})\nonumber\\
=&  \frac{\gamma}{D}\sum_{\mathbf{p}\in \mathbb{D}}\frac{\Phi_\ell({\mathbf{p}})
\big(\widetilde{\Phi}_\ell({\mathbf{p}})\big)^*}{H(\mathbf{p})}
= \gamma \overline{\gamma}_\ell.
\label{eq:devinterterm}
\end{align}
Hence, as claimed in the last part of our statements, $(\overline{\gamma}_\ell)_{1 \le \ell \le L}$ is real-valued since
it is the cross-correlation of a real-valued sequence.

\section{Proof of Proposition \ref{p:varstein}}\label{a:varstein}
By using \eqref{eq:estsu}, we have
$
 \sum_{\mathbf{x} \in \mathbb{D}} \widehat{s}(\mathbf{x})\,\underline{\widetilde{n}}(\mathbf{x}) 
= \sum_{\ell=1}^L \widehat{s}_\ell\,\widetilde{n}_\ell
$, where $\widetilde{n}_\ell$ has been here redefined as
\begin{equation}
\widetilde{n}_\ell = \langle \widetilde{n},\underline{\widetilde{\varphi}}_\ell \rangle.
\label{eq:defnlb}
\end{equation}
In addition,
$
\mathcal{E}(\underline{\widetilde{n}}) = 
D^{-2} \sum_{\mathbf{p}\in \mathbb{Q}} |N(\mathbf{p})|^2/|H(\mathbf{p})|^2.
$
This allows us to rewrite \eqref{eq:difE0} as
$
\mathcal{E}_o-\widehat{\mathcal{E}}_o
= -2A-B+2C
$, where
\begin{align}
A = &\frac{1}{D}\sum_{\mathbf{x} \in \mathbb{D}} s(\mathbf{x})\,\underline{\widetilde{n}}(\mathbf{x})\\
B =&  \frac{1}{D} \sum_{\mathbf{x}\in \mathbb{D}} 
\big(\underline{\widetilde{n}}(\mathbf{x})\big)^2 - \frac{\gamma}{D} \sum_{\mathbf{p}\in \mathbb{Q}}|H(\mathbf{p})|^{-2}\nonumber\\
=&\frac{1}{D} \sum_{\mathbf{p}\in \mathbb{Q}}
\frac{|N(\mathbf{p})|^2/D-\gamma}{|H(\mathbf{p})|^2}\\
C =& \frac{1}{D} \sum_{\ell=1}^L \big(\widehat{s}_\ell\,\widetilde{n}_\ell
-\Theta_\ell'(r_\ell) \gamma \overline{\gamma}_\ell\big).
\end{align}
The variance of the error in the estimation of the risk is
thus given by
\begin{multline}
\mathsf{Var}[\mathcal{E}_o-\widehat{\mathcal{E}}_o]
= \E[(\mathcal{E}_o-\widehat{\mathcal{E}}_o)^2]
= 4\E[A^2]+4\E[AB]-8\E[AC]\\
+\E[B^2]-4\E[BC]+4\E[C^2].
\label{eq:varsteini}
\end{multline}
We will now calculate each of the terms in the right hand-side term of
the above expression to determine the variance.
\begin{itemize}
\item Due to the independence of $s$ and $n$,
the first term to calculate is equal to
\begin{equation}
\E[A^2] = \frac{1}{D^4} \sum_{(\mathbf{p},\mathbf{p}') \in \mathbb{Q}^2}
\E[S(\mathbf{p})\big(S(\mathbf{p}')\big)^*]\,
\E[\widetilde{N}(\mathbf{p})\big(\widetilde{N}(\mathbf{p}')\big)^*].
\end{equation}
By using \eqref{eq:noisestatF}, this expression simplifies as
\begin{equation}
\E[A^2] = \frac{\gamma}{D^3} \sum_{\mathbf{p}\in \mathbb{Q}}
\frac{\E[|S(\mathbf{p})|^2]}{|H(\mathbf{p})|^2}.
\label{eq:EA2}
\end{equation}
\item The second term cancels. Indeed, since $n$ and hence $\underline{\widetilde{n}}$ are
zero-mean,
\begin{equation}
\E[AB] = \frac{1}{D^2} \sum_{(\mathbf{x},\mathbf{x}')\in \mathbb{D}^2}
\E[s(\mathbf{x})] \E[\underline{\widetilde{n}}(\mathbf{x})
\big(\underline{\widetilde{n}}(\mathbf{x}'))^2].
\label{eq:EAB}
\end{equation}
and, since 
$(\underline{\widetilde{n}}(\mathbf{x}),\underline{\widetilde{n}}(\mathbf{x}'))$ is zero-mean Gaussian, it has a symmetric distribution and
$\E[\underline{\widetilde{n}}(\mathbf{x})
\big(\underline{\widetilde{n}}(\mathbf{x}'))^2] = 0$.
\item The calculation of the third term is a bit more involved.
We have
\begin{multline}
\E[AC] = \frac{1}{D^2}\sum_{\mathbf{x}\in\mathbb{D}}\sum_{\ell=1}^L
\big(\E[s(\mathbf{x})\widehat{s}_\ell\,\widetilde{n}_\ell\,\underline{\widetilde{n}}(\mathbf{x})]\\ - \E[s(\mathbf{x})\Theta'_\ell(r_\ell)\underline{\widetilde{n}}(\mathbf{x})]\gamma \overline{\gamma}_\ell\big).
\label{eq:EAC0}
\end{multline}
In order to find a tractable expression of $\E[s(\mathbf{x})\widehat{s}_\ell\,\widetilde{n}_\ell\,\underline{\widetilde{n}}(\mathbf{x})]$
with $\ell \in \{1,\ldots,L\}$, we will first consider
the following conditional expectation w.r.t. $s$:
$
\E[s(\mathbf{x})\widehat{s}_\ell\,\widetilde{n}_\ell\,\underline{\widetilde{n}}(\mathbf{x})\mid s] =s(\mathbf{x})\, \E[\Theta(r_\ell)\widetilde{n}_\ell\,\underline{\widetilde{n}}(\mathbf{x})\mid s].
$
According to Formula \eqref{eq:steinf3} in Proposition \ref{p:steinf},\footnote{Proposition \ref{p:steinf} is applicable to the calculation of the conditional expectation since conditioning w.r.t. $s$ amounts to fixing $u_\ell$
(see the remark at the end of Section \ref{s:steinid}).}
\begin{align}
\E[\Theta(r_\ell)\widetilde{n}_\ell\,\underline{\widetilde{n}}(\mathbf{x})\mid s]=&\E[\Theta'_\ell(r_\ell)\underline{\widetilde{n}}(\mathbf{x})\mid s]\,\E[n_\ell\,\widetilde{n}_\ell]\nonumber\\
&+\E[\Theta_\ell(r_\ell)\mid s]\,\E[\widetilde{n}_\ell\,\underline{\widetilde{n}}(\mathbf{x})]
\end{align}
which, by using \eqref{eq:devinterterm}, allows us to deduce that
\begin{multline}
\E[s(\mathbf{x})\Theta(r_\ell)\widetilde{n}_\ell\,\underline{\widetilde{n}}(\mathbf{x})]=\E[s(\mathbf{x})\Theta'_\ell(r_\ell)\underline{\widetilde{n}}(\mathbf{x})]\,\gamma \overline{\gamma}_\ell\\
+\E[s(\mathbf{x})\Theta_\ell(r_\ell)]\,\E[\widetilde{n}_\ell\,\underline{\widetilde{n}}(\mathbf{x})].
\end{multline}
This shows that \eqref{eq:EAC0} can be simplified as follows:
\begin{equation}
\E[AC] = \frac{1}{D^2}\sum_{\mathbf{x}\in\mathbb{D}}\sum_{\ell=1}^L
\E[s(\mathbf{x})\Theta_\ell(r_\ell)]\,\E[\widetilde{n}_\ell\,\underline{\widetilde{n}}(\mathbf{x})].
\end{equation}
Furthermore, according to \eqref{eq:noisestatF} and \eqref{eq:defnlb}, we have
\begin{align}
\E[\widetilde{n}_\ell\,\underline{\widetilde{n}}(\mathbf{x})]
=& \frac{1}{D^2}\sum_{(\mathbf{p},\mathbf{p}')\in \mathbb{Q}^2}\E[\widetilde{N}(\mathbf{p})
\big(\widetilde{N}(\mathbf{p}')\big)^*]\big(\widetilde{\Phi}_\ell(\mathbf{p})\big)^*\nonumber\\
&\times \exp(-2\pi \imath{\mathbf{x}}^\top{\boldsymbol D}^{-1}{\mathbf{p}}')\nonumber\\
=& \frac{\gamma}{D}\sum_{\mathbf{p}\in \mathbb{Q}}
\frac{\big(\widetilde{\Phi}_\ell(\mathbf{p})\big)^*}{|H(\mathbf{p})|^2} 
\exp(-2\pi \imath{\mathbf{x}}^\top{\boldsymbol D}^{-1}{\mathbf{p}}).
\label{eq:nltnxp}
\end{align}
This yields
\begin{align}
\E[AC] &= \frac{\gamma}{D^3}\sum_{\mathbf{p}\in\mathbb{Q}}\sum_{\ell=1}^L
\frac{\big(\widetilde{\Phi}_\ell(\mathbf{p})\big)^*}{|H(\mathbf{p})|^2} \E[S(\mathbf{p})\Theta_\ell(r_\ell)]\nonumber\\
&= \frac{\gamma}{D^3}\sum_{\mathbf{p}\in\mathbb{Q}}
\frac{\E[S(\mathbf{p})\big(\widehat{S}(\mathbf{p})\big)^*]}{|H(\mathbf{p})|^2}.
\label{eq:EAC}
\end{align}
\item The calculation of the fourth term is more classical  since $|N(\mathbf{p})|^2/D$ is the $\mathbf{p}$ bin of the periodogram \cite{JENK_WATT_68} of the Gaussian white noise $n$.
More precisely, since $|N(\mathbf{p})|^2/D$ is an unbiased estimate of $\gamma$,
\begin{equation}
\E[B^2] = \frac{1}{D^4} 
\sum_{(\mathbf{p},\mathbf{p}')\in \mathbb{Q}^2}\frac{\mathsf{Cov}(|N(\mathbf{p})|^2,
|N(\mathbf{p}')|^2)}{|H(\mathbf{p})|^2 |H(\mathbf{p}')|^2}.
\end{equation}
In the above summation,
we know that, if $\mathbf{p} \neq \mathbf{p}'$ and $\mathbf{p} \neq
{\boldsymbol D}\mathbf{1}-\mathbf{p}'$ with $\mathbf{1}=(1,\ldots,1)^\top \in \RR^d$,
$N(\mathbf{p})$ and $N(\mathbf{p}')$ are independent and thus,
$\mathsf{Cov}(|N(\mathbf{p})|^2,|N(\mathbf{p}')|^2) = 0$.
On the other hand,  if $\mathbf{p} = \mathbf{p}'$ or $\mathbf{p} =
{\boldsymbol D}\mathbf{1}-\mathbf{p}'$, then
$\mathsf{Cov}(|N(\mathbf{p})|^2,|N(\mathbf{p}')|^2)= \E[|N(\mathbf{p})|^4]
-\gamma^2 D^2$. Let
\begin{multline}
\mathbb{S} = \big\{\mathbf{p} = (p_1,\ldots,p_d)^\top \in \mathbb{D} \mid\\
\forall i \in \{1,\ldots,d\}, p_i \in \{0,D_i/2\}\big\}.
\end{multline} 
If $\mathbf{p} \in \mathbb{S}$, then $N(\mathbf{p})$ is a zero-mean
Gaussian real random variable and $\E[|N(\mathbf{p})|^4] = 
3 \E\big[\big(N(\mathbf{p})\big)^2\big]^2 = 3 \gamma^2 D^2$. Otherwise, $N(\mathbf{p})$
a zero-mean Gaussian circular complex random variable and
$\E[|N(\mathbf{p})|^4] = 
2 \E[|N(\mathbf{p})|^2]^2 $ $= 2 \gamma^2 D^2$.
It can be deduced that
\begin{align}
\E[B^2] =& \frac{1}{D^4} \left(
\sum_{\mathbf{p}\in \mathbb{Q} \cap \mathbb{S}} 
\frac{\mathsf{Var}[\big(N(\mathbf{p})\big)^2]}{|H(\mathbf{p})|^4}\right.\nonumber\\
&+ \sum_{\mathbf{p}\in \mathbb{Q} \cap (\mathbb{D}\setminus\mathbb{S})}
\Big(\frac{\mathsf{Var}[\big|N(\mathbf{p})\big|^2]}{|H(\mathbf{p})|^4}\nonumber\\
&\left.+\frac{\mathsf{Cov}(|N(\mathbf{p})|^2,
|N({\boldsymbol D}\mathbf{1}-\mathbf{p})|^2)}{|H(\mathbf{p})|^2 |H({\boldsymbol D}\mathbf{1}-\mathbf{p})|^2}\Big)\right)\nonumber\\
=& \frac{1}{D^4} \left(
\sum_{\mathbf{p}\in \mathbb{Q} \cap \mathbb{S}} 
\frac{2\gamma^2D^2}{|H(\mathbf{p})|^4}\right.\nonumber\\
&\left.+
\sum_{\mathbf{p}\in \mathbb{Q} \cap (\mathbb{D}\setminus\mathbb{S})}
\Big(\frac{\gamma^2D^2}{|H(\mathbf{p})|^4}+
\frac{\gamma^2D^2}{|H(\mathbf{p})|^4}\Big)\right)\nonumber\\
=& \frac{2\gamma^2}{D^2}\sum_{\mathbf{p}\in \mathbb{Q}}\frac{1}{|H(\mathbf{p})|^4}.
\label{eq:EB2}
\end{align}
\item Let us now turn our attention to the fifth term. According
to \eqref{eq:steinf1} and the definition of $\overline{\gamma}_\ell$ in
\eqref{eq:devinterterm}, for every $\ell \in \{1,\ldots,L\}$, $\widehat{s}_\ell\,\widetilde{n}_\ell
-\Theta_\ell'(r_\ell) \gamma\overline{\gamma}_\ell$ is zero-mean and we have then
\begin{multline}
\E[BC] = \frac{1}{D^2}\sum_{\mathbf{x} \in \mathbb{D}} \sum_{\ell =1}^L
\Big(\E[\widehat{s}_\ell\,\widetilde{n}_\ell \big(\underline{\widetilde{n}}(\mathbf{x})\big)^2]\\
- \E[\Theta_\ell'(r_\ell)\big(\underline{\widetilde{n}}(\mathbf{x})\big)^2]\gamma\overline{\gamma}_\ell\Big).
\label{eq:EBC1}
\end{multline}
By applying now Formula \eqref{eq:steinf4} in Proposition \ref{p:steinf}, 
we have, for every $\ell \in \{1,\ldots,L\}$,
\begin{align}
&\E[\widehat{s}_\ell\,\widetilde{n}_\ell \big(\underline{\widetilde{n}}(\mathbf{x})\big)^2]
- \E[\Theta_\ell'(r_\ell)\big(\underline{\widetilde{n}}(\mathbf{x})\big)^2]\gamma\overline{\gamma}_\ell\nonumber\\
=& \E[\Theta_\ell(r_\ell)\,\widetilde{n}_\ell \big(\underline{\widetilde{n}}(\mathbf{x})\big)^2]
- \E[\Theta_\ell'(r_\ell)\big(\widetilde{n}(\mathbf{x})\big)^2]\,
\E[n_\ell\, \widetilde{n}_\ell]\nonumber\\
=& 2 \E[\Theta_\ell'(r_\ell)]\,\E[\widetilde{n}_\ell\,\underline{\widetilde{n}}(\mathbf{x})]\,\E[\underline{\widetilde{n}}(\mathbf{x})\,n_\ell]
\label{eq:Esnn2}
\end{align}
where, in compliance with \eqref{eq:modrl}, $n_\ell$ is now given by
\begin{equation}
n_\ell = \langle \underline{n},\varphi_\ell\rangle.
\label{eq:redefnl}
\end{equation}
Furthermore, similarly to \eqref{eq:devinterterm2}, we have
\begin{align}
\E[\underline{\widetilde{n}}(\mathbf{x})\,n_\ell]
= &\frac{1}{D^2}\sum_{(\mathbf{p},\mathbf{p}')\in \mathbb{Q}^2}\E[\widetilde{N}(\mathbf{p})
\big(N(\mathbf{p}')\big)^*]\Phi_\ell(\mathbf{p}')\nonumber\\
&\times \exp(2\pi \imath{\mathbf{x}}^\top{\boldsymbol D}^{-1}{\mathbf{p}})\nonumber\\
= &\frac{\gamma}{D}\sum_{\mathbf{p}'\in \mathbb{Q}}
\frac{\Phi_\ell(\mathbf{p}')}{H(\mathbf{p}')} 
\exp(2\pi \imath{\mathbf{x}}^\top{\boldsymbol D}^{-1}{\mathbf{p}'}).
\label{eq:135manq}
\end{align}
Altogether, \eqref{eq:nltnxp}, \eqref{eq:Esnn2} 
and \eqref{eq:135manq} 
yield
\begin{multline}
\E[\widehat{s}_\ell\,\widetilde{n}_\ell \big(\underline{\widetilde{n}}(\mathbf{x})\big)^2]
- \E[\Theta_\ell'(r_\ell)\big(\underline{\widetilde{n}}(\mathbf{x})\big)^2]\gamma\overline{\gamma}_\ell\\
=\frac{2\gamma^2}{D^2} \E[\Theta_\ell'(r_\ell)]
\sum_{(\mathbf{p},\mathbf{p}')\in \mathbb{Q}^2}
\frac{\Phi_\ell(\mathbf{p}')
\big(\widetilde{\Phi}_\ell(\mathbf{p})\big)^*}{H(\mathbf{p}')|H(\mathbf{p})|^2} \\
\times\exp\big(2\pi \imath{\mathbf{x}}^\top{\boldsymbol D}^{-1}({\mathbf{p}'}-\mathbf{p})\big).
\end{multline}
Hence, \eqref{eq:EBC1} can be reexpressed as
\begin{align}
\E[BC] &=
\frac{2\gamma^2}{D^2} 
\sum_{\ell = 1}^L\E[\Theta_\ell'(r_\ell)]\, 
\kappa_\ell
\label{eq:EBC}
\end{align}
where
\begin{equation}
\kappa_\ell = \frac{1}{D}\sum_{\mathbf{p}\in\mathbb{Q}}
\frac{\Phi_\ell(\mathbf{p})
\big(\widetilde{\Phi}_\ell(\mathbf{p})\big)^*}{H(\mathbf{p})|H(\mathbf{p})|^2}.
\label{eq:defkappa}
\end{equation}
\item Let us now consider the last term
\begin{multline}
\E[C^2] = \frac{1}{D^2} \sum_{\ell=1}^L
\sum_{i=1}^L\big( \E[\widehat{s}_\ell\widehat{s}_i\widetilde{n}_\ell
\widetilde{n}_i]
-\E[\widehat{s}_i\Theta_\ell'(r_\ell)\,\widetilde{n}_i]
 \gamma\overline{\gamma}_\ell\\
-\E[\widehat{s}_\ell\Theta_i'(r_i)\,\widetilde{n}_\ell]
 \gamma\overline{\gamma}_i
+\E[\Theta_\ell'(r_\ell)\Theta_i'(r_i)]
\gamma^2\overline{\gamma}_\ell\overline{\gamma}_i\big).
\label{eq:EC2big}
\end{multline}
Appealing to Formula \eqref{eq:steinf5} in Proposition \ref{p:steinf}
and \eqref{eq:devinterterm}, we have
\begin{align}
&\E[\Theta_\ell(r_\ell)\Theta_i(r_i)\widetilde{n}_\ell\widetilde{n}_i]\nonumber\\ 
=&\,\E[\Theta_\ell(r_\ell)\Theta_i(r_i)]\E[\widetilde{n}_\ell\widetilde{n}_i]+\E[\Theta_\ell'(r_\ell)\Theta_i(r_i)\widetilde{n}_i]\gamma\overline{\gamma}_\ell\nonumber\\
&+\E[\Theta_\ell(r_\ell)\Theta'_i(r_i)\widetilde{n}_\ell]\gamma\overline{\gamma}_i+\E[\Theta_\ell'(r_\ell)\Theta'_i(r_i)]\nonumber\\
&\times (\E[n_i\widetilde{n}_\ell]
\E[\eta_\ell\widetilde{n}_i]-\gamma^2\overline{\gamma}_\ell\overline{\gamma}_i).
\end{align}
This allows us to simplify \eqref{eq:EC2big} as follows:
\begin{align}
\E[C^2] = &\frac{1}{D^2} \sum_{\ell=1}^L
\sum_{i=1}^L\big( \E[\widehat{s}_\ell\widehat{s}_i]\E[\widetilde{n}_\ell
\widetilde{n}_i]\nonumber\\
&+\E[\Theta_\ell'(r_\ell)\Theta_i'(r_i)]
\E[n_i\widetilde{n}_\ell]
\E[\eta_\ell\widetilde{n}_i]\big).
\label{eq:EC2simp}
\end{align}
Furthermore, according to \eqref{eq:noisestatF}, \eqref{eq:defnlb}, 
\eqref{eq:noisestatFI} and \eqref{eq:redefnl},
we have
\begin{align}
\E[\widetilde{n}_\ell\widetilde{n}_i]&= \frac{1}{D^2}\sum_{(\mathbf{p},\mathbf{p}')\in \mathbb{Q}^2}\E[\widetilde{N}(\mathbf{p})
\big(\widetilde{N}(\mathbf{p}')\big)^*]\big(\widetilde{\Phi}_\ell(\mathbf{p})\big)^* \widetilde{\Phi}_i(\mathbf{p}')\nonumber\\
&= \frac{\gamma}{D}\sum_{\mathbf{p}\in \mathbb{Q}}
\frac{\big(\widetilde{\Phi}_\ell(\mathbf{p})\big)^*\widetilde{\Phi}_i(\mathbf{p})}{|H(\mathbf{p})|^2}
\label{eq:nf5i}
\end{align}
and
\begin{align}
\E[n_\ell\widetilde{n}_i]
=& \frac{1}{D^2}\sum_{(\mathbf{p},\mathbf{p}')\in \mathbb{Q}^2}\E[\widetilde{N}(\mathbf{p})
\big(N(\mathbf{p}')\big)^*]\Phi_\ell(\mathbf{p}')
\big(\widetilde{\Phi}_i(\mathbf{p})\big)^*\nonumber\\
&= \gamma\overline{\gamma}_{\ell,i}
\label{eq:nf5is}
\end{align}
where the expression of $\overline{\gamma}_{\ell,i}$ is given by \eqref{eq:nf5iii}.
Hence, by using \eqref{eq:nf5i}-\eqref{eq:nf5is}, \eqref{eq:EC2simp} can be rewritten as
\begin{align}
\E[C^2] =& \frac{\gamma}{D^3} \sum_{\mathbf{p}\in \mathbb{Q}}
\frac{\E[|\widehat{S}(\mathbf{p})|^2]}{|H(\mathbf{p})|^2}\nonumber\\
&+\frac{\gamma^2}{D^2} \sum_{\ell=1}^L\
\sum_{i=1}^L\E[\Theta_\ell'(r_\ell)\Theta_i'(r_i)]
\overline{\gamma}_{\ell,i} \overline{\gamma}_{i,\ell}.
\label{eq:EC2}
\end{align}
\item In conclusion, we deduce from \eqref{eq:varsteini}, \eqref{eq:EA2},
\eqref{eq:EAC}, \eqref{eq:EB2}, \eqref{eq:EBC} and
\eqref{eq:EC2} that
\begin{align}
\mathsf{Var}[\mathcal{E}_o&-\widehat{\mathcal{E}}_o]
= \frac{4\gamma}{D^3}\sum_{\mathbf{p}\in \mathbb{Q}} 
\frac{\E[|\widehat{S}(\mathbf{p})-S(\mathbf{p})|^2]}{|H(\mathbf{p})|^2}\nonumber\\
&+\frac{4\gamma^2}{D^2}\Big( \sum_{\ell=1}^L\
\sum_{i=1}^L\E[\Theta_\ell'(r_\ell)\Theta_i'(r_i)]
\overline{\gamma}_{\ell,i} \overline{\gamma}_{i,\ell}\nonumber\\
&-2 \sum_{\ell = 1}^L\E[\Theta_\ell'(r_\ell)]\kappa_\ell
+\frac{1}{2}\sum_{\mathbf{p}\in \mathbb{Q}}\frac{1}{|H(\mathbf{p})|^4}\Big).
\label{eq:varstein} 
\end{align}
By exploiting now \eqref{eq:steindecriskDp} (see Proposition \ref{prop:ninv})
and noticing that $(\kappa_\ell)_{1 \le \ell \le L}$ is real-valued, this expression can be simplified as follows:
\begin{multline}
\mathsf{Var}[\mathcal{E}_o-\widehat{\mathcal{E}}_o]
= \frac{4\gamma}{D^3}\sum_{\mathbf{p}\in \mathbb{Q}} 
\frac{\E[|\widehat{S}(\mathbf{p})-\widetilde{R}(\mathbf{p})|^2]}{|H(\mathbf{p})|^2}\\
+\frac{4\gamma^2}{D^2} \sum_{\ell=1}^L\
\sum_{i=1}^L\E[\Theta_\ell'(r_\ell)\Theta_i'(r_i)]
\overline{\gamma}_{\ell,i} \overline{\gamma}_{i,\ell}\\
-\frac{2\gamma^2}{D^2} \sum_{\mathbf{p}\in \mathbb{Q}}\frac{1}{|H(\mathbf{p})|^4}.
\label{eq:varstein2} 
\end{multline}
Eq. \eqref{eq:varsteinf} follows by using Parseval's formula.
\end{itemize}

\end{appendices}

\end{document}